\documentclass[11pt]{article}
\usepackage[margin=1in]{geometry}
\usepackage[T1]{fontenc}
\usepackage{lmodern}
\usepackage{microtype}
\usepackage{amsmath,amssymb,amsthm,mathtools}
\usepackage{booktabs,array,tabularx}
\usepackage{enumitem}
\usepackage{xcolor}
\usepackage{tikz}
\usetikzlibrary{arrows.meta,positioning,calc,patterns}
\usepackage{hyperref}
\usepackage{cleveref}
\hypersetup{colorlinks=true,linkcolor=blue!55!black,
 citecolor=blue!55!black,urlcolor=blue!55!black,
 pdftitle={Fully Dynamic Unconstrained Submodular Maximization},
 pdfauthor={Mohammad Hajiaghayi},
 bookmarksnumbered=true}
\newtheorem{theorem}{Theorem}[section]
\newtheorem{lemma}[theorem]{Lemma}
\newtheorem{proposition}[theorem]{Proposition}
\newtheorem{corollary}[theorem]{Corollary}
\theoremstyle{definition}

\theoremstyle{remark}

\newcommand{\OPT}{\operatorname{OPT}}
\newcommand{\E}{\mathbb E}
\newcommand{\Poi}{\operatorname{Poi}}

\newcommand{\1}{\mathbf{1}}
\newcommand{\rhosq}{\rho_{\mathrm{sq}}}
\setlist{topsep=4pt,itemsep=2pt,leftmargin=1.6em}
\numberwithin{equation}{section}
\title{Fully Dynamic Unconstrained Submodular Maximization%\\[3pt]
}
\author{Mohammad T. Hajiaghayi\thanks{University of Maryland. Email: \texttt{hajiagha@umd.edu}. The work is partially supported by DARPA expMath, ONR MURI 2024 award on Algorithms, Learning, and Game Theory, Army-Research Laboratory (ARL) grant W911NF2410052, NSF AF:Small grants 2218678, 2114269, 2347322.}}

\date{}
\makeatletter
\renewcommand{\@maketitle}{%
  \newpage\null\vskip 1em%
  \begin{center}%
    {\LARGE\@title\par}%
    \ifx\@author\@empty\else
      \vskip 1.5em%
      {\large\lineskip .5em%
       \begin{tabular}[t]{c}\@author\end{tabular}\par}%
    \fi
  \end{center}%
  \par\vskip .5em}
\makeatother
\begin{document}
\maketitle
\begin{abstract}
We study unconstrained maximization of a fixed nonnegative submodular
function when the ground set changes by insertions and deletions.
A uniformly random subset gives a $1/4$-approximation with constantly
many value-oracle queries per update. We obtain improvements over this
bound for arbitrary oblivious update sequences, without knowing future
deletions.

For value estimation, we give a $(\sqrt2-1-\varepsilon)$-approximation
using $O_{\varepsilon,\delta}(\log^2(N+1))$ expected amortized oracle
queries, with failure probability at most $\delta$ at each fixed time.
Here $N$ bounds the number of active elements. The algorithm maintains
Poisson samples of the positive mass of random chain lower bounds.
We prove that the factor $\sqrt2-1$ is tight for this certificate,
even when polynomially many independent orders are used.

For every fixed integer $K\ge2$, we maintain a feasible
$(1/4+\Omega(K^{-2}))$-approximate set using
$\widetilde O_K(N^{1/K})$ amortized queries.
Thus every fixed exponent $\gamma>0$ permits a fixed improvement
over $1/4$ with $\widetilde O_\gamma(N^\gamma)$ update queries.
In the square-root regime, a stronger algorithm attains
$\rho_{\rm sq}-\varepsilon$, where $\rho_{\rm sq}>0.3177$.
The maintained sets are feasible after every update, and their
approximation guarantees hold in expectation.
All bounds count oracle queries; the value estimator may query
previously seen inactive elements, whereas the set algorithms use
only currently active elements.
\end{abstract}

\section{Introduction}
\label{sec:introduction}

Unconstrained submodular maximization asks for a set of maximum value
under a nonnegative submodular function. The absence of a feasibility
constraint does not make the problem trivial: adding an element can
decrease the objective, and the optimum may contain an arbitrary
number of elements. The problem contains directed and undirected
maximum cut and is a basic test case for nonmonotone optimization.
Feige, Mirrokni, and Vondr\'ak initiated its systematic approximation
study at FOCS~2007~\cite{FMV07,FMV11}. Their work established, among
other results, the $1/4$ guarantee of a uniformly random subset and
the value-oracle hardness of improving upon $1/2$ by a fixed constant with
subexponentially many queries.
The optimal randomized $1/2$-approximation was subsequently obtained
by Buchbinder, Feldman, Naor, and Schwartz through the double-greedy
algorithm~\cite{BFNS12,BFNS15}.\footnote{Their FOCS~2012 paper
received the ten-year Test of Time Award at
FOCS~2022.}

We consider the fully dynamic version of the problem. The function
is fixed, but the available elements change over time. After each
insertion or deletion, the algorithm must approximate
\[
 \OPT_t=\max_{S\subseteq U_t}f(S),
\]
where $U_t$ is the current active universe. We assume an arbitrary
oblivious update sequence: it is fixed independently of the
algorithm's random choices. In particular, insertions and deletions
may be interleaved without restriction, and deletion times are not
announced in advance.

Dynamic submodular maximization has received considerable attention
under cardinality, matroid, matching, and related constraints.
The independent works of Monemizadeh~\cite{Mon20} and Lattanzi
et al.~\cite{LMNTZ20} initiated this line for monotone objectives
under a cardinality constraint. Subsequent work obtained
polylogarithmic update bounds for monotone cardinality-constrained
maximization~\cite{BBGHJM23a}, studied richer constraints such as
matroids, matchings, and matchoids~\cite{DFLNZ23,BBGHJM24,BBGHJM25,BGHJM25},
and extended dynamic techniques to nonmonotone
objectives~\cite{BBGHJM23b,BGHJM26,GZZ26}.
For example, Banihashem et al.~\cite{BBGHJM25} maintain a constant
approximation for monotone submodular matching with expected
amortized update time polynomial in $\log N$ and the inverse
accuracy parameter.

A useful source of these algorithms is the connection between
streaming and dynamic optimization. Thresholding and sampling
procedures developed for streaming can often be organized into
levels that tolerate deletions. This connection is particularly
effective when a cardinality or rank parameter bounds the size of
the maintained solution. Unconstrained maximization lacks such a
bound. Even for the modular function $f(S)=|S|$, any constant-factor
solution contains a linear number of elements. Thus a streaming
procedure that explicitly stores a solution in space sublinear in
the ground-set size cannot serve as a general starting point.
Likewise, a dynamic bound polynomial in a cardinality parameter
need not be sublinear in $N$ when the constraint is vacuous.

There is nevertheless an elementary fully dynamic baseline.
Assign an independent fair bit to each active lifetime and retain
the elements whose bits are one. At every fixed time the retained
set is a uniform subset of $U_t$, so it is a $1/4$-approximation in
expectation. One oracle query after each update gives its current
value. At the other extreme, rerunning double greedy gives a
$1/2$-approximation using $O(N)$ queries per update.
The question is whether one can improve upon $1/4$ while paying
substantially less than linear cost for an unrestricted mixture of
insertions and deletions.

Banihashem et al.~\cite{BBG+26} address this question
in several models. On an $n$-element ground set, they give a
$0.30$-approximation with $O(\sqrt n)$ amortized queries for
insertions, the same guarantee for deletions when their order is
known, and a $0.264$-approximation with $O(\sqrt n)$ amortized
queries for mixed updates when each element's deletion time is
revealed upon insertion. Here $n$ bounds the underlying ground set;
our parameter $N$ bounds only the number of simultaneously active elements.
The mixed-update guarantee is a fully dynamic result
with advance deletion information. Our results remove that
information requirement. They also give different approximation
and update bounds depending on whether the output is a value
estimate or a maintained feasible set.
To our knowledge, these are the first improvements over $1/4$
with sublinear amortized oracle complexity for fully dynamic
unconstrained maximization without advance information about
future updates.

\subsection{Our results}
\label{sec:results}

Let $N\ge1$ be a known upper bound on the number of active elements.
The objective is accessed through a value oracle. Throughout, an
update bound counts \emph{oracle queries}; it does not assert an
identical bound on running time for constructing and manipulating
the queried sets. All approximation guarantees are for arbitrary
oblivious update sequences.

Our first result maintains an estimate of the optimum value.
For fixed accuracy and confidence, its expected amortized query
complexity is logarithmic squared in $N$.

\begin{theorem}[Value estimation]
\label{thm:main-value}
For every $0<\varepsilon<1/10$ and $0<\delta<1/2$, there is a
fully dynamic algorithm that reports a number $L_t$ satisfying
\[
 \Pr\{(\sqrt2-1-\varepsilon)\OPT_t\le L_t\le\OPT_t\}
 \ge1-\delta
\]
at each fixed time $t$. Starting with $n_0$ active elements and
processing $T$ updates, its expected total number of oracle
queries is
\[
 O\!\left(1+(n_0+T)\varepsilon^{-3}
       \log\frac2\delta
       \log\frac{2}{\varepsilon\delta}
       \log\frac{N+1}{\varepsilon}\log(N+1)\right).
\]
For fixed $\varepsilon$ and $\delta$, this simplifies to
$O(1+(n_0+T)\log^2(N+1))$ expected queries: after preprocessing,
each insertion or deletion costs $O(\log^2(N+1))$ queries in
expectation, amortized over the update sequence.
The algorithm uses a fixed global value oracle and may query
previously seen identities that are currently inactive.
\end{theorem}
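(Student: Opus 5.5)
The plan rests on the certificate the abstract mentions. Fix a uniformly random order $\pi$ on identities, with independent uniform ranks per identity, so that the order restricted to $U_t$ is uniform. Given a threshold parameter $p\in[0,1]$, let the random chain lower bound be the value of a set $A$ in which each element of $U_t$ is included independently with probability $p$ (or with probability $1-p$ in a complementary variant). We use the sum of positive parts of marginal gains along the chain: the chain $\emptyset=C_0\subseteq C_1\subseteq\dots$ adds the elements of $U_t$ in order $\pi$, and we set
\[
 M_t=\sum_i \bigl(f(C_i)-f(C_{i-1})\bigr)^+ .
\]
\emph{Step 1} (static certificate). Show that $\E[\text{certificate}]\ge(\sqrt2-1)\OPT_t$ and that certificate $\le\OPT_t$ always holds. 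The upper bound holds because each component is the value of an actual subset of $U_t$; the inequality is the elementary fact that any subset value is at most $\OPT_t$. For the lower bound, take the better of $f(\emptyset)$, $f(U_t)$ and a random-subset/chain quantity, and optimise the mixing parameter. The standard random-subset analysis of Feige--Mirrokni--Vondr\'ak, with inclusion probability $p$, gives roughly $p(1-p)$ terms. Combining it with the positive-mass chain bound, which dominates $\OPT_t-f(U_t)$-type terms via submodularity along a uniform order, yields a one-parameter family. Its max--min equals $\sqrt2-1$ at $p=1/\sqrt2$ or a similar value. I would carry this out by writing both bounds in terms of $\alpha=f(\mathrm{OPT})$ and the expected marginal contributions, and then solving the resulting two-variable LP.

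\emph{Step 2} (dynamic estimation of the positive mass). $M_t$ is a sum of up to $N$ nonnegative terms. To estimate it to within $(1\pm\varepsilon)$ with confidence $\delta$, attach to every element of the chain a Poisson clock. An element is sampled with multiplicity $\Poi(\lambda\,g_i/\tau)$, where $g_i$ is its positive gain and $\tau$ is a scale guess. Then $\tau\cdot(\#\text{samples})/\lambda$ is an unbiased estimator, and $\lambda=O(\varepsilon^{-2}\log(1/\delta))$ gives concentration. Use $O(\log((N+1)/\varepsilon))$ geometric scales for $\tau$, keyed to $\max_i g_i$, which is within a factor $N$ of $M_t$. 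Take $O(\log(2/(\varepsilon\delta)))$ independent repetitions combined by a median or union argument.

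\emph{Step 3} (update cost). An insertion or deletion at rank $r$ changes the chain prefix sets only for positions after $r$. However, it changes the gain of the element at rank $j>r$ only through submodular interaction, and we cannot afford to recompute everything. The key observation is locality in expectation. We maintain only the sampled positions plus a skip structure. Resampling uses Poisson thinning: an element's gain can only change, and we recheck it lazily when the sampling coin, drawn against an upper bound on its gain, fires. Upper bounds on gains follow from submodularity, since $g_i\le f(\{e_i\})-f(\emptyset)$, which is fixed per identity. This is also why inactive identities may be queried: we reuse stored prefixes. A random-order argument then shows that the expected number of sampled elements whose prefix is affected by an update at a random position is $O(\log(N+1))$ per scale, by a harmonic-sum bound over ranks. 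Multiplying the factors gives the stated bound, and the additive $O(1)$ covers the empty case. The main obstacle is Step 3: bounding the expected rechecks needs a careful coupling between the fixed Poisson clocks and the changing gains under arbitrary oblivious sequences. I would first try rejection sampling against per-identity singleton caps, charging each rejection to the update that triggered it.
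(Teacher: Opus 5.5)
You have the right certificate, $f_0+\sum_i(w_i^\pi)_+$, and the right Poisson-mark idea. The main gap is Step~3, and the fix you propose would not work. Rejection sampling against the caps $v_i=[f(\{i\})-f_0]_+$ forces you to generate or re-examine proposals at total rate $c\sum_i v_i/A$ per update. This is because one update can change the chain marginal of every later element, so there is no decay in rank for a harmonic sum to exploit. For $f(S)=\min\{|S|,1\}$ on $N$ elements, $\sum_iv_i=N\cdot\OPT$, so already at the relevant scale $A\approx\OPT$ each update costs $\Theta(cN)$ queries.

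The missing idea is to sample in proportion to the \emph{actual} drift. Changes are sign-monotone: deletions only raise surviving marginals, and insertions only lower them. Chain marginals also telescope over priority intervals, $\sum_{i\in X\cap J}w_i(X)=f(X_{\le J})-f(X_{<J})$. After you query and store the old marginals at the deleted positions, the drift $\sum_{i\in C\cap J}(w_i(C)-w_i(U_0))$ on any interval costs $O(1)$ queries. You can then draw $\Poi(cD_*/A)$ proposals with $D_*\le\OPT(C)+\OPT(U_0)$, locate each by binary search in $O(\log(N+1))$ queries, and accept with probability $\Delta_i/d_i$. Insertions are handled by thinning the stored marks. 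The old-snapshot prefixes in this identity are what force queries on inactive identities.

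Your multi-scale scheme has the same linear blow-up. A sampler at scale $A$ holds up to $c\,\OPT/A$ marks, and in expectation a constant fraction of them must be rethinned after an insertion. Your scales extend a factor $N$ below the certificate. To avoid this you need three things:
the value windows $W_A=\{i:\tau A/N\le v_i\le A\}$, so that a local update moves $\OPT(W_A)$ by at most $A$;
lazy refresh periods of length proportional to $\OPT(W_A)/A$, chosen from a fresh half-sample independent of the priorities;
and the freshness argument that a window with $\OPT(W_A)\le A$ is always current. Together these let you select a current window at some $A<2\OPT$ capturing $(1-2\tau)\OPT$ without biasing its Poisson law.

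Step~1 is not yet an argument either. Your upper bound holds, but not because $f_0+M_t$ is the value of a subset. It is dominated by the value of the positive support, through Edmonds' modular bound $f(S)\ge f_0+\sum_{i\in S}w_i^\pi$. For the lower bound, no random subset needs to be mixed in, and a finite two-variable LP is not where $\sqrt2-1$ comes from. The paper proceeds as follows.
Write $\E Q_\pi=f_0+\int_0^1G_+(p)\,dp$, where $G_\pm(p)$ sum the expected positive and negative parts of marginals against a $p$-sample.
Concavity of the multilinear extension $F$ along two forcing directions gives $G_+(p)\ge \OPT+(pf(U)-H(p))/(1-p)$ and $G_-(p)\ge \OPT+((1-p)f_0-H(p))/p$, with $H(p)=F(p\mathbf1)$ and $H'=G_+-G_-$.
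After symmetrizing $p\leftrightarrow1-p$, integrate by parts against $k(p)=(1/2-p)/(p(1-p))$ on $[1-1/\sqrt2,1/2]$. This cancels the unknown $H$ and yields $\alpha\OPT+\beta(f_0+f(U))$, with $\alpha=\sqrt2-1$ and $\beta=(1-\alpha)/2$.
The constant is sharp for this certificate (Proposition~\ref{prop:val-sharpness}), so it is the value of a variational problem over a continuum of constraints.

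Finally, your confidence step does not follow. An expectation bound for a $[f_0,\OPT]$-valued quantity does not control its median, and $O(\log(2/(\varepsilon\delta)))$ repetitions are too few. What the bound does give is $\Pr\{Q_\pi\ge(\alpha-\zeta)\OPT\}\ge\zeta$ for $\zeta=\Theta(\varepsilon)$. So you need $q=\Theta(\varepsilon^{-1}\log(1/\delta))$ independent orders, and you should report the maximum of one-sided Poisson lower estimates $f_0+\frac{A}{c}[K_j-R(K_j,x)]_+$ with $R(K,x)=2\sqrt{Kx}+4x$. The one-sided correction is also what secures $L_t\le\OPT_t$, and the product $q\cdot c$ is the source of the $\varepsilon^{-3}$ in the bound.
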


The leading approximation factor is $\sqrt2-1\approx0.41$.
On the stated confidence event, the
estimate is bounded by a modular
lower certificate of some feasible set, but the algorithm does not
maintain a set realizing that certificate. We give a family of
directed cuts on which the underlying uniform-chain certificate
converges to $\sqrt2-1$. This is a limitation of the certificate,
not an upper bound for fully dynamic value estimation.
The probability guarantee is pointwise in time; replacing $\delta$
by $\delta/(H+1)$ gives simultaneous accuracy through any specified
finite horizon $H$.

Our second main result maintains an actual feasible set, with
oracle queries involving only active elements.

\begin{theorem}[Maintained solutions]
\label{thm:main-sets}
There is an absolute constant $c_0>0$ such that, for every
integer $K\ge2$, a fully dynamic algorithm
maintains $S_t\subseteq U_t$ with
\[
 \E f(S_t)\ge\left(\frac14+\frac{c_0}{K^2}\right)\OPT_t.
\]
For an initial universe of size $n_0$ followed by $T$ updates,
the total query bound is
\[
 O\!\left(
 1+n_0K^2\log^4((N+1)K)
 +TK^3N^{1/K}\log^4((N+1)K)\right).
\]
For fixed $K$, the amortized query bound is
$\widetilde O_K(N^{1/K})$.

In particular, for every fixed $0<\varepsilon<1/10$, a fully
dynamic algorithm maintains $S_t\subseteq U_t$ with
\[
 \E f(S_t)\ge(\rho_{\rm sq}-\varepsilon)\OPT_t,
 \qquad \rho_{\rm sq}>0.3177,
\]
using $\widetilde O_\varepsilon(\sqrt N)$ amortized oracle
queries per update.
The query bounds hold for every outcome of the random choices.
Each algorithm stores a feasible output set after every update.
\end{theorem}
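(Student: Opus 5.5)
Theorem~\ref{thm:main-sets} will be proved by maintaining a hierarchical random-partition variant of the ``random subset plus local improvement'' idea. The starting point is the static fact that a uniform subset achieves $1/4$, and that this can be beaten by a constant whenever the instance is not ``balanced''. We will use a two-case analysis.

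\emph{Case (a): small structure.} If $\OPT_t$ is carried largely by few elements or by a set far from half of $U_t$, then mixing the uniform set with a greedy-type solution computed on a small sample gains.

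\emph{Case (b): balanced optimum.} If $\OPT$ is balanced, then a random half $R$ of $U_t$ yields a subproblem $f$ restricted to $R$. On this subproblem, a double-greedy pass recovers $\tfrac12\OPT(R)$. Moreover, $\E\,\OPT(R)\ge\tfrac12\OPT$ by submodularity of the sampling extension. This gives $1/4$ with slack, and the slack is what we exploit. Concretely, we output a random mixture of the uniform set $A$ and a set $B$ produced by a continuous double greedy run on blocks. By the multilinear-extension bound $F(\tfrac12\1_{U}+\tfrac12(\1_{B}-\tfrac12\1_U))$, the mixture yields $\tfrac14+c$ unless both components are near $1/4$. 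That case forces a structural identity, $f(O)\approx 4F(\tfrac12\1)$, which contradicts optimality of the other.

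To make this dynamic with $N^{1/K}$ cost, we partition the active elements into $K$ levels using a hash into buckets of size about $N^{1/K}$ per level. Each block is processed by a $K$-stage ``chain'' in which stage $i$ fixes decisions on a $1/K$ fraction of the coordinates along a random order, as in the random-chain certificate behind Theorem~\ref{thm:main-value}. An update affects one block per level. Rebuilding a level-$i$ block costs its size times polylog, and a standard rebuild-on-half-change (periodic rebuilding) argument amortizes this to $O(K N^{1/K}\log^4)$ per level. Summing over $K$ levels and over $K^2$ accuracy parameters gives the stated $TK^3N^{1/K}\log^4((N+1)K)$ bound. Because rebuild schedules are deterministic in the update counts, the bound holds for every outcome. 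Obliviousness ensures that the random order remains independent of the sequence, so the fixed-time analysis applies. Outputs only ever include active elements, since deleted elements are removed from their blocks immediately.

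The approximation then reduces to a discretized chain argument. Fixing decisions in $K$ coarse steps rather than continuously loses an error of order $1/K$ per step in the second-order terms of the multilinear extension, while the gain over $1/4$ is quadratic in the step. Optimizing yields the $c_0/K^2$ advantage. For $K=2$ with finer $\varepsilon$-discretization inside each level, the same analysis, optimized numerically over the mixing probabilities, gives $\rhosq>0.3177$.

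The main obstacle is the quantitative step: showing that a coarse $K$-step chain still beats $1/4$ by $\Omega(K^{-2})$ uniformly over all nonmonotone $f$. I would first try to write the expected value as a telescoping sum of marginal contributions along the chain, lower-bound each term using submodularity against $O\cap$ block, and then solve the resulting factor-revealing linear program in closed form for general $K$.
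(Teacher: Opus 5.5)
Your proposal does not prove the approximation guarantee. The step you flag as the ``main obstacle'' is essentially the whole theorem, and the heuristic around it points the wrong way.

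\textbf{The near-tight case is not contradictory.} Near-tightness $f(O)\approx 4F(\tfrac12\1)$ only forces $f(\varnothing)$, $f(U\setminus O)$, $f(U)$ to be small, which is consistent with optimality. The directed cut $f(S)=x(1-y)$ of Section~\ref{sec:val-sharpness} (take $b=0$) has $4F(\tfrac12\1)=1=f(O)$ exactly, and a uniform subset gets exactly $1/4$ there. This regime, where the half-sampling slack $\eta=4\E f(U(1/2))-\OPT(U)-f(\varnothing)-f(U)$ is small, is exactly where an algorithm must beat uniform sampling. Your ``random half plus double greedy'' again yields only $\tfrac12\cdot\tfrac12=\tfrac14$, at linear cost.

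\textbf{What the paper uses instead.} The paper needs three ingredients that are absent from your proposal:
\begin{enumerate}
\item When $\eta$ is small, averaged forward/reverse chains restricted to two half-shifted interior windows give each cohort a mask with $\E f(S_j)\ge\OPT(C_j)-O(\eta)-O(\xi)\OPT$.
\item On the good and bad parts of these masks, $f$ is within $O(K\eta)$ of a coarse bipartite quadratic. Its same-cohort terms are charged to the local deficits.
\item A uniformly random set of $\lfloor K/2\rfloor$ cohorts captures each cross-cohort term with probability $q_K\ge\tfrac14+\tfrac1{4K}$.
\end{enumerate}
Together these give $(\tfrac14+\Omega(1/K))M-O(K)\eta$. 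The $K^{-2}$ advantage comes from balancing this bound against $\E f(R_t)\ge\tfrac14M+\tfrac14\eta$, not from a discretization-error tradeoff in a $K$-step chain.

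\textbf{The dynamic scheme fails on a dilemma.} If your blocks or stages condition on one another, a stage's decisions depend on the whole preceding prefix. Then one update early on changes the marginals, and possibly the signs, of $\Omega(N)$ later coordinates. You must either recompute them at linear cost or keep stale decisions whose loss you have not bounded; for nonmonotone $f$, an element kept because its marginal was positive can become harmful after insertions. If instead blocks are optimized independently, you need a composition argument, since the union of independently optimized pieces can destroy value. That composition argument is exactly the balanced-cohort theorem.

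\textbf{How the paper's maintenance avoids this.} Each cohort mask is computed on $C_j$ alone, and insertions never enter a maintained module: they go to a bottom cohort rebuilt after every update. The geometric periods $p_j=r^{K-1-j}$ make every upper cohort purely decremental. The paper then uses that deletions only \emph{increase} surviving chain marginals, so positive anchors stay valid. Newly positive coordinates are found by drift sampling through telescoping interval sums. A halving-based count of productive repairs bounds each module's cost by $m\cdot O(\xi^{-1}\log^4((N+1)/\xi))$ queries. The factor $K^3$ is then $K$ levels times $\xi^{-1}$ with $\xi=1/(80K^2)$, not ``$K^2$ accuracy parameters.''

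\textbf{The square-root claim is unsupported.} $\rhosq$ is the specific value produced by a different construction:
\begin{itemize}
\item a decremental core carrying the endpoint profile $\alpha\OPT(C)+\beta[f(\varnothing)+f(C)]$;
\item a buffer of $\lceil\sqrt N\rceil$ insertions;
\item two forced-base double-greedy extensions, from $S$ and from $R\sim C(p)$, analyzed with the weighted inequality $gh\le\theta g^2+\phi h^2$.
\end{itemize}
Asserting that an unspecified $K=2$ refinement reaches $0.3177$ has no argument behind it.
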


Taking $K=\max\{2,\lceil1/\gamma\rceil\}$ gives a fixed improvement
above $1/4$ with $\widetilde O_\gamma(N^\gamma)$ amortized
queries, for every fixed $\gamma>0$.
The square-root algorithm gives a stronger approximation than
the general bound at $K=2$. We develop this algorithm first,
then prove the tradeoff for arbitrary $K$. Taking $K=3$, for
example, gives cube-root update complexity without a separate
construction.

\begin{table}[t]
\centering
\small
\begin{tabularx}{\textwidth}{@{}lXl@{}}
\toprule
Output & Approximation & Amortized oracle queries \\
\midrule
Feasible set & $1/4$ (uniform subset) & $O(1)$ \\
Value estimate & $\sqrt2-1-\varepsilon$ & $O_{\varepsilon,\delta}(\log^2(N+1))$ \\
Feasible set & $\rho_{\rm sq}-\varepsilon$, $\rho_{\rm sq}>0.3177$
  & $\widetilde O_\varepsilon(\sqrt N)$ \\
Feasible set & $1/4+\Omega(K^{-2})$ & $\widetilde O_K(N^{1/K})$ \\
\bottomrule
\end{tabularx}
\caption{Guarantees for arbitrary oblivious mixed updates, with no
advance deletion information. Set guarantees hold in expectation;
the value guarantee holds with probability $1-\delta$ at each
fixed time. The value estimator uses a global oracle that remains
available on past identities.}
\label{tab:results}
\end{table}

The value estimator has polylogarithmic update complexity, while
the maintained-set algorithms permit every fixed positive update
exponent. The hierarchy makes the dependence explicit: increasing
its depth reduces the exponent and gives an advantage over $1/4$
of order $K^{-2}$. The separate treatment of the two output tasks
reflects the distinction between estimating positive certificate
mass and maintaining the coordinates that realize it.

\subsection{Our approach and techniques}
\label{sec:approach}
\label{sec:overview}

Fully dynamic unconstrained submodular maximization faces two obstacles that do not arise together in static or constrained monotone settings. First, without a cardinality or matroid rank bound $k$, optimal solutions can have size $\Theta(N)$ (even for $f(S)=|S|$), so streaming and dynamic coreset summaries of size $\operatorname{poly}(k,\varepsilon^{-1}\log N)$ do not give sublinear update cost; even statically, outputting a feasible $(1/4+\Omega(1))$-approximate set requires $\Omega(N/\log N)$ value queries~\cite{LFKK22}. Second, randomized double greedy~\cite{BFNS12,BFNS15} achieves $1/2$ statically, but its sequential decisions depend on prefix marginals, so one early update can invalidate the subsequent chain. Conversely, a uniform subset $R_t \sim U_t(1/2)$ (where $C(p)$ includes each element of $C$ independently with probability $p$) uses $O(1)$ queries per update and gives $\E f(R_t)\ge \frac{1}{4}\OPT_t$~\cite{FMV07,FMV11}, but ignores marginal structure. To beat $1/4$ in sublinear update cost without advance deletion information, we maintain marginal structure without sequential decision cascades, separating \emph{value estimation} (Theorem~\ref{thm:main-value}) from \emph{maintaining a feasible set} (Theorem~\ref{thm:main-sets}, first for $\widetilde{O}(\sqrt{N})$ queries and then for $\widetilde{O}_K(N^{1/K})$).

\subsubsection{Value estimation in polylogarithmic queries}

The static $\Omega(N/\log N)$ lower bound~\cite{LFKK22} applies to returning a feasible $(1/4+\Omega(1))$-approximate set. To estimate the scalar $\OPT_t$, we instead construct a \emph{modular lower-bound certificate} whose value is at least $(\sqrt{2}-1)\OPT_t$ in expectation and can be tracked by low-rate Poisson sampling.

\paragraph{The random-chain certificate and the $(\sqrt{2}-1)$ endpoint inequality.}
Fix an ordering $\pi$ of the active universe $U$ and write $f_0 = f(\varnothing)$ and $\OPT(U)=\max_{S\subseteq U}f(S)$. For each $i \in U$, let $P_i = \{j \in U : \pi_j < \pi_i\}$ be the full strict prefix preceding $i$, and define the chain marginal and its positive-mass certificate by
\[
  w_i^\pi(U) = f(P_i \cup \{i\}) - f(P_i), \qquad Q_\pi(U) = f_0 + \sum_{i \in U} \big(w_i^\pi(U)\big)_+.
\]
Here $(x)_+=\max\{x,0\}$ and $(x)_-=\max\{-x,0\}$. Unlike a greedy sequence that discards unprofitable elements, $P_i$ retains all predecessors of $i$, including those with $w_j^\pi(U) < 0$. Conditioning on the full prefix gives two key properties: (1)~building any $S \subseteq U$ along $\pi$ places each $i \in S$ after a subset of $P_i$, so submodularity yields Edmonds' modular lower bound~\cite{Edmonds70}, $f(S) \ge f_0 + \sum_{i \in S} w_i^\pi(U)$ (Lemma~\ref{lem:val-modular}), which on $S^+ = \{i \in U : w_i^\pi(U) > 0\}$ gives $Q_\pi(U) \le f(S^+) \le \OPT(U)$ deterministically; and (2)~since $P_i$ is fixed by $\pi$ rather than by prior sign decisions, inserting or deleting $u$ simply adds or removes $u$ in $P_i$ for $\pi_i > \pi_u$, without cascading.

For a \emph{uniformly random} order $\pi$, we prove in Theorem~\ref{thm:val-chain} the endpoint-enhanced bound
\begin{equation}
  \E_\pi\big[Q_\pi(U)\big] \ge \alpha \OPT(U) + \beta\big(f(\varnothing) + f(U)\big), \qquad \alpha = \sqrt{2}-1 \approx 0.414, \quad \beta = \frac{1-\alpha}{2}.
  \label{eq:overview-chain}
\end{equation}
Realizing $\pi$ via independent priorities $\pi_i \sim \operatorname{Unif}[0,1]$ makes $P_i$ given $\pi_i = p$ a Bernoulli-$p$ sample $R_{-i}(p)$ of $U \setminus \{i\}$. Let $F(x) = \E f(R(x))$ be the multilinear extension, where $R(x)$ includes each $i$ independently with probability $x_i$, and let $H(p) = F(p\1)$ be its diagonal restriction; $\1_A$ denotes the indicator vector of $A$ and $\1=\1_U$. Write $f(i\mid A)=f(A\cup\{i\})-f(A)$. Then $H'(p) = G_+(p) - G_-(p)$, where $G_\pm(p) = \sum_{i \in U} \E\big[(\pm f(i \mid R_{-i}(p)))_+\big]$ and $\E_\pi[Q_\pi(U)] = f_0 + \int_0^1 G_+(p)\,dp$. For an optimal set $O \subseteq U$ ($M = \OPT(U) = f(O)$), comparing $p\1$ with $\1_O$ along two directions---forcing $O$ into the prefix ($p\1 \to \1_O + p\1_{U \setminus O}$, nonnegative) and evicting $U \setminus O$ ($p\1 \to p\1_O$, nonpositive)---gives, by directional concavity of $F$, for $0<p<1$,
\[
  G_+(p) \ge M + \frac{pf(U) - H(p)}{1-p}, \qquad G_-(p) \ge M + \frac{(1-p)f_0 - H(p)}{p}.
\]
If $H(p)$ is small, both $G_+(p)$ and $G_-(p)$ are large; if $H(p)$ is large at intermediate $p$, it had to rise from $f_0$ at $p=0$ and must return to $f(U)$ at $p=1$, forcing $H' = G_+ - G_-$ to compensate. Pairing $p$ with $1-p$, we combine the two marginal bounds with nonnegative weights chosen so that integration by parts cancels the expected prefix values. The middle portion of the chain supplies the optimum term, while the positive and negative marginal mass near the ends pays for the boundary term. Splitting at $a = 1-1/\sqrt{2}$ yields~\eqref{eq:overview-chain} with $\alpha = 1-2a = \sqrt{2}-1$, while retaining the endpoint values $f_0$ and $f(U)$. Proposition~\ref{prop:val-sharpness} shows that $\sqrt{2}-1$ is tight for this certificate even on directed bipartite cuts (and under $\operatorname{poly}(N)$ random orders), as evaluating $w_i^\pi$ on the full prefix $P_i$ penalizes $O$ for preceding elements of $U \setminus O$ later omitted.

\paragraph{Sign-monotone updates, product-Poisson transport, and lazy value windows.}
Write $w_i(U)=w_i^\pi(U)$ and $W^+(U)=\sum_{i\in U}(w_i(U))_+$. Although one update near the front of $\pi$ can alter $w_i(U)$ for $\Omega(N)$ elements, submodularity fixes the sign of every change: insertions into a prefix only \emph{decrease} surviving chain marginals, whereas deletions only \emph{increase} them. At scale $A > 0$ and intensity $c > 0$, conditional on the priorities, we maintain independent Poisson counts $K_i \sim \Poi(c\,(w_i(U))_+/A)$ with total $K = \sum_{i \in U} K_i \sim \Poi(c\,W^+(U)/A)$, storing only coordinates with $K_i > 0$ ($O(c)$ elements in expectation when $A = \Theta(\OPT)$). To transport this exact product-Poisson law from snapshot $U_0 = C \dot\cup D$ to $U_1 = C \dot\cup I$ where $C$ is the set of common continuously surviving lifetimes (Lemma~\ref{lem:val-transport} and Figure~\ref{fig:val-transport}), we factor the transition as $U_0 \to C \to U_1$. Under insertions $C \to U_1$, marginals on $C$ only decrease, so we thin existing marks ($K_i > 0$) with probability $(w_i(U_1))_+/(w_i(C))_+ \in [0,1]$ and draw fresh counts on $I$. Under deletions $U_0 \to C$, surviving marginals on $C$ increase by $d_i := w_i(C)-w_i(U_0) \ge 0$, and their positive parts increase by $\Delta_i := (w_i(C))_+ - (w_i(U_0))_+ \in [0,d_i]$. To add these marks without scanning $C$, we exploit telescoping: the sum of chain marginals over a priority interval is the difference of its two bounding prefix values. After querying and storing the old marginals at the deleted positions, we can therefore compute the total drift in any interval using $O(1)$ queries, by subtracting the old interval sum from the new one and adding back the deleted marginals. Let $D_*$ denote the total drift. We draw $\Poi(cD_*/A)$ proposals and locate each proposed coordinate by binary search over interval drift sums. This samples $i$ proportionally to $d_i$ in $O(\log(N+1))$ queries, and accepting with probability $\Delta_i/d_i$ gives the required independent Poisson increments. The total drift is at most $\OPT(C)+\OPT(U_0)$, yielding the expected transport bound of Lemma~\ref{lem:val-transport}. These snapshot queries use the global oracle and may involve previously seen identities that are currently inactive.

To avoid paying $\OPT/A$ at scales $A \ll \OPT$, we restrict each dyadic scale $A = 2^j \ge f_0$ to the \emph{value window} $W_A = \{i \in U : \tau A/N \le v_i \le A\}$ where $v_i = [f(\{i\})-f_0]_+$ and $\tau=\varepsilon/8$ (Section~\ref{sec:val-windows}), maintaining $q = O(\varepsilon^{-1}\log(1/\delta))$ priority replicas at intensity $c = O(\varepsilon^{-2}\log(q/\delta))$. Since $v_i \le A$ on $W_A$, each local update changes $\OPT(W_A)$ by at most $A$: if $m_0 = \OPT(W_A)$ has $z = m_0/A \gg 1$ at a period start, $\OPT(W_A)$ stays above $A$ for $\Omega(z)$ local updates. For a nonempty window, drawing a fresh half-sample $R \sim W_A(1/2)$ and setting $B = \max\{\lambda f(R), f_0, \max_{i \in W_A} v_i\}$, we batch the next $r = \max\{1, \lfloor B/(\kappa A)\rfloor\}$ local updates (for absolute constants $1 \le \lambda < 2\lambda \le \kappa$) before transporting $W_A$. By Lemmas~\ref{lem:val-period}--\ref{lem:val-selection}: (i)~conditional on the history at a period start, the expected batch length is at least a constant multiple of $1+z$, so windows with larger transport costs also wait longer between refreshes. Accounting for all periods begun by the reporting time, including unfinished ones, gives $O_{\varepsilon,\delta}(\log^2(N+1))$ expected amortized queries across the $O(\log(N/\varepsilon))$ windows per element; (ii)~a stale instance can occur only while the current window optimum remains at least twice its scale, so any window at or above its current optimum is always current; and (iii)~when $a_* = \max(\{f_0\}\cup\{v_i:i\in U\})>0$, selecting the first current window $W_A$ with $A\ge a_*$ takes zero oracle queries, captures $(1-2\tau)\OPT(U)$ at scale $A < 2\OPT(U)$, and preserves the conditional product-Poisson law because the schedule uses only fresh half-samples $R$ independent of $\pi$ and the Poisson draws (if $a_*=0$, we report zero). Taking the largest conservative Poisson estimate over the $q$ replicas gives the pointwise high-probability guarantee of Theorem~\ref{thm:main-value}.

\subsubsection{Maintaining feasible solutions: \texorpdfstring{$\widetilde{O}(\sqrt{N})$}{O(sqrt(N))} warmup and \texorpdfstring{$\widetilde{O}_K(N^{1/K})$}{O(N\^{}(1/K))} tradeoff}

The Poisson sampler above does not yield a feasible set of comparable value: on $f(S)=|S|$ with $|U|=N$, $\OPT=N$, a Poisson sample of rate $c/\OPT$ marks only $O(c)$ elements in expectation, giving expected value $O(c) \ll N$. To maintain $S_t \subseteq U_t$ with $\E f(S_t) \ge (\frac{1}{4}+\Omega(1))\OPT_t$, we instead maintain a mask that omits only a small amount of the positive mass of a chain vector $z$.

\paragraph{Decremental mask maintenance via lower anchors and halving repairs.}
In a \emph{decremental cohort} $C$ that sees only deletions (Section~\ref{set-sec-decremental}, Lemma~\ref{set-lem-mask-maintenance}), every surviving chain coordinate $z_i$ (either $w_i^\to$ in Section~\ref{sec:root-updates} or $\frac{1}{2}(w_i^\to+w_i^\leftarrow)$ in Section~\ref{sec:general-tradeoff}, where $\to$ and $\leftarrow$ denote an order and its reverse) can only \emph{increase} under deletions while the order is fixed. Thus once $i \in C$ has $z_i > 0$, it stays positive until the next rebuild; the remaining task is to detect coordinates that were initially $\le 0$ and have drifted upward across zero.
Fix $0<\xi<1/20$. To avoid scanning $C$ after each deletion, we divide the cohort's lifetime into value epochs (setting $L = \max(\{f_0\}\cup\{f(\{i\}):i\in C\})$ at the epoch start, rebuilding when the current maximum drops below $L/2$, and restricting to $W = \{i \in C : s_i \ge \theta\}$ where $s_i = f(\{i\})-f_0$ and $\theta = \xi L/N$). For each $i \in W$, we store a \emph{lower anchor} $b_i \le z_i$ (initialized to $z_i$ in the chain vector on $W$) and maintain the mask of eligible coordinates with $b_i > 0$ (all coordinates for the forward chain, and the priority intervals below for paired chains). Since $d_i := z_i - b_i \ge 0$ and the total drift over any priority interval takes $O(1)$ prefix/suffix queries, after each deletion inside $W$ we sample $i \in W$ proportional to $d_i$ in $O(\log(N+1))$ queries via binary search, updating $b_i \leftarrow z_i$ (and resetting a failure counter) only when $i$ is \emph{productive}---meaning $b_i \le 0 < z_i$ (newly positive) or $b_i \le -\theta$ and $z_i > b_i/2$ (negative magnitude halved)---and stopping after $\tau = \widetilde{O}(\xi^{-1})$ consecutive unproductive draws, or when the total drift vanishes. Each of the $m$ initial elements can undergo only $O(\log((N+1)/\xi))$ negative-anchor halvings and one positive crossing per value epoch. The singleton cutoff also restricts each element to $O(\log((N+1)/\xi))$ retained epochs, independently of the numerical range of $f$. Together with the sampling cost, this bounds the total number of oracle queries over all deletions by $m\cdot O(\xi^{-1}\log^4((N+1)/\xi))$ deterministically. If $E$ is the positive mass omitted among the eligible coordinates, every missing positive coordinate is productive and contributes at least its positive weight to the total drift. Unproductive drift is controlled by the vector's total positive and negative masses, each at most $\OPT(W)$, together with a small contribution from anchors within $\theta$ of zero. Thus, whenever $E>\xi\OPT(W)$, a draw is productive with probability at least $\xi/3$. Choosing the failure threshold as in Lemma~\ref{set-lem-mask-maintenance} makes the probability of stopping with such an error at most $\xi$, giving $\E[E]\le2\xi\OPT(W)$.

\paragraph{Warmup: $\widetilde{O}(\sqrt{N})$ queries via a decremental core and two forced-base extensions.}
Partitioning the update stream into epochs of $d = \lceil\sqrt{N}\rceil$ updates gives our $\widetilde{O}(\sqrt{N})$ algorithm (Section~\ref{set-sec-sqrt}, Theorem~\ref{set-thm-sqrt-dynamic}). At each epoch start, all active elements form a decremental core $C$ ($|C| \le N$) initialized with a decremental mask on a single random forward chain, and an insertion buffer $D$ starts empty ($|D| \le d$). After initialization, prepaying each new core's lifetime bound $\widetilde{O}(N)$ from the preceding $d = \lceil\sqrt{N}\rceil$ updates costs $\widetilde{O}(\sqrt{N})$ amortized queries per update, and its maintained mask $S \subseteq C$ satisfies the endpoint profile (Theorem~\ref{set-thm-core-profile}):
\begin{equation}
  \E f(S) \ge \alpha\OPT(C) + \beta\big(f(\varnothing) + f(C)\big) - \varepsilon\OPT(C).
  \label{eq:overview-core}
\end{equation}
Since optimizing $D$ separately and taking the union with $S$ can destroy value under nonmonotonicity, whereas $S$ alone misses $Y = O \cap D$ (where $O=X\cup Y$ is optimal on $U=C\dot\cup D$ and $X=O\cap C$), we use $O(|D|+1) = O(\sqrt{N})$ queries after each update to run randomized double greedy~\cite{BFNS15} on $D$ with a forced base $L \subseteq C$ (optimizing over $[L, L \cup D]$). Replacing $gh \le \frac{1}{2}(g^2+h^2)$ in the double-greedy proof with $gh \le \theta g^2 + \phi h^2$ when both lower- and upper-endpoint gains $g,h$ are nonnegative ($\theta,\phi \ge 0$, $\theta\phi \ge 1/4$) gives the \emph{endpoint-weighted double-greedy inequality} (Lemma~\ref{lem:prelim-rdg}), $(1 + \theta + \phi)\E f(T) \ge f(Q) + \theta f(L) + \phi f(L \cup D)$ for all $L\subseteq Q\subseteq L\cup D$. We run double greedy on $D$ twice from two different bases in $C$ (Lemma~\ref{set-lem-augmentation}):
(1)~from the maintained core set ($L = S \leadsto T_S$, comparator $Q = S \cup Y$), where submodularity links $S$ to $Y$ through $f(C \cup Y)$ while preserving the contribution of $S$, at the cost of a penalty proportional to $f(C)$; and
(2)~from an independent Bernoulli-$p$ sample ($L = R \sim C(p) \leadsto T_R$, comparator $Q = R \cup Y$), where two-block concavity on $X = O \cap C$ and $Z = C \setminus O$ lower-bounds the expected values of $R$, $R \cup Y$, and $R \cup D$ by the corresponding four-corner averages, in which neither, one, or both core blocks are present (Lemma~\ref{lem:prelim-sampling}).
In a nonnegative linear combination of $\E f(S)$, $\E f(T_S)$, and $\E f(T_R)$, submodularity combines the $f(C \cup Y)$ term from the maintained-core extension with the four-corner bounds from the sampled-core extension. This gives a positive contribution from $f(O)$, together with a loss proportional to $f(X)$ and the penalty on $f(C)$. The $\alpha\OPT(C)$ term in~\eqref{eq:overview-core} absorbs the loss on $f(X)$, while the retained $\beta f(C)$ endpoint term and the sampled-base contribution offset the penalty on $f(C)$. The analysis weights are chosen so that all three output coefficients and the remaining endpoint coefficients are nonnegative (Lemma~\ref{set-lem-augmentation}). After normalization, the best of the three candidates dominates their convex combination. Returning the best of $\{S,T_R,T_S\}$ and choosing the sampling probability and endpoint weights as in Section~\ref{set-sec-root-parameters} yields approximation ratio $\rhosq-\varepsilon$, where $\rhosq>0.3177$.

\paragraph{The general $\widetilde{O}_K(N^{1/K})$ hierarchy and the half-sampling slack $\eta$.}
To achieve update exponent $1/K$ for any integer $K \ge 2$, we replace the core-buffer split with a hierarchy of $K$ disjoint cohorts $C_0, \dots, C_{K-1}$ partitioning $U_t$, with geometrically spaced rebuild periods $p_j = r^{K-1-j}$ for $0\le j\le K-2$, where $r = \lceil N^{1/K}\rceil$ (Section~\ref{set-sec-clean-dynamic-schedule}). Insertions enter $C_{K-1}$ (size $\le r$, rebuilt after each update). At multiples of $p_j$, the highest due level $j$ merges $C_j \cup \dots \cup C_{K-1}$ into a fresh decremental cohort $C_j$ of initial size at most $p_{j-1}=rp_j$ for $j\ge1$, and at most $N\le rp_0$ for $j=0$, and empties the lower slots; after initialization, prepaying its decremental cost $\widetilde{O}(r p_j)$ over the preceding $p_j$ updates charges $\widetilde{O}_K(N^{1/K})$ queries per update per level.
For $K \ge 3$, we cannot rerun double greedy across the hierarchy after each update because intermediate cohorts $C_1, \dots, C_{K-2}$ can have sizes of order $N^{1-j/K} \gg N^{1/K}$ and reset at different frequencies. Each cohort $C_j$ must therefore maintain its decremental mask $S_j \subseteq C_j$ independently, raising two obstacles: (1)~\emph{within each cohort}, a single forward chain gives only a $(\sqrt{2}-1)\OPT(C_j)$ certificate guarantee, which is too small to survive $K$-way composition above $1/4$; and (2)~\emph{across cohorts}, taking the best cohort loses a factor of $K$, whereas including each cohort independently with probability $1/2$ preserves cross-cohort interactions only at rate $\frac{1}{2}(1-\frac{1}{2}) = 1/4$.

We resolve both obstacles by parameterizing the analysis by the \emph{half-sampling slack} of the active universe $U$ with optimum $O$ ($M = f(O) = \OPT(U)$) and $B = U \setminus O$ (Section~\ref{set-sec-clean-structure}),
\[
  \eta := 4\E f(U(1/2)) - \OPT(U) - f_0 - f(U) \ge f(B) \ge 0, \qquad \E f(R_t) \ge \frac{1}{4}M + \frac{1}{4}\eta,
\]
where $R_t \sim U_t(1/2)$ is a maintained uniform half-sample. If $\eta \ge \Omega(K^{-2})M$, $R_t$ already achieves $(\frac{1}{4}+\Omega(K^{-2}))M$, so we may restrict to the \emph{near-tight regime} $\eta \ll K^{-2}M$. The four-corner slack controls the conditional midpoint gaps on $O$ and on $B$, so small $\eta$ makes $f$ nearly modular within each block:
(i)~the \emph{wrong-role budgets} (Section~\ref{set-sec-wrong-role-budgets}) satisfy $r_B := \max_{T \subseteq B} f(T) - f_0 \le 2\eta$ and $r_k := \max_{T \subseteq O} f(B \cup T) - f(U) \le 2\eta$; and
(ii)~for any submodular function, the error in replacing its value by a greedy modular lower bound is nonnegative and at most twice its \emph{midpoint gap}: twice its expected half-sample value minus the sum of its empty- and full-set values. This gap also controls the total expected absolute error in random-prefix marginals, with a loss proportional to $1/\min\{p,1-p\}$ at sampling rate $0<p<1$ (Equation~\eqref{set-eq-clean-residual-gradient}). The complementary conditional functions on $O$ obtained by fixing $Z\subseteq B$ and $B\setminus Z$ have total midpoint gap $O(\eta)$. Together with the wrong-role budgets, this control survives arbitrary restriction and conditioning: for disjoint $R,C\subseteq U$, the centered half-sampling slack of $g(T)=f(R\cup T)$ on $C$ is at most $C_{\mathrm c}\eta=20\eta$ (Theorem~\ref{set-thm-clean-conditional-slack}), where centering means subtracting the minimum value of $g$.

\paragraph{Two shifted paired-chain masks for near-optimal cohort solutions.}
Consider any subuniverse $C \subseteq U$ fixed independently of the priorities (with local optimum $O_C$, $B_C = C \setminus O_C$, and centered slack $\eta_C - m_C \le C_{\mathrm{c}}\eta_U$). Here $\eta_C$ is the half-sampling slack on $C$, $m_C=\min_{T\subseteq C}f(T)$, and $\eta_U=\eta$. In a single forward chain $w_i^\to$ on $C$, an element $i \in O_C$ at priority $\pi_i = p$ is conditioned on a $p$-fraction of $B_C$, whose negative cross-terms can drive $w_i^\to$ to zero even when $\eta_C = 0$. In the reverse chain $w_i^\leftarrow$, $i$ is conditioned on the complementary prefix $(C \setminus \{i\}) \setminus P_i$, containing the remaining $(1-p)$-fraction of $B_C$. We therefore use the \emph{paired chain vector} $z_i = \frac{1}{2}(w_i^\to + w_i^\leftarrow)$ (Section~\ref{set-sec-clean-paired-chains}). Conditioned on $\pi_i = p$, the forward and reverse prefixes on $B_C$ form a complementary pair $(Z, B_C \setminus Z)$ with $Z \sim B_C(p)$. Using the complementary-gap bound, we compare the sum of the two conditional greedy bases with the endpoint bases obtained when none or all of $B_C$ is present, always using a common fixed order on $O_C$ (Equation~\eqref{set-eq-clean-paired-lone}). This comparison, and its symmetric version on $B_C$, gives fixed nonnegative weights for omitted good elements and included bad elements that upper-bound their total loss. Each such loss is charged to a wrong-role budget, a residual from the comparison, or a sign error of the maintained mask. The expected total residual on either block at priority $p$ is $O(\eta_C/\min\{p,1-p\})$ (Equation~\eqref{set-eq-clean-coordinate-error}).
Because integrating $1/\min\{p,1-p\}$ over $(0,1)$ diverges near $p \in \{0,1\}$, we restrict eligibility in $\pi$ to $[1/4, 3/4)$ and maintain a second paired-chain mask under the half-shifted priorities $\pi'_i = (\pi_i + 1/2) \bmod 1$ on $[1/4, 3/4)$. Since the two eligibility regions partition the original priority range $[0,1)$, every element of $C$ is interior in exactly one order. Applying the same argument to $f-m_C$ leaves all marginals and masks unchanged and replaces $\eta_C$ by $\eta_C-m_C$. Thus, after the singleton truncation above, the union $S_j = \widetilde{A}_1 \cup \widetilde{A}_2$ of the two maintained masks in cohort $C_j$ satisfies $\E f(S_j) \ge \OPT(C_j) - D_0 \eta_U - C_{\mathrm{e}}\xi \OPT(U)$ (Corollary~\ref{set-cor-cohort}), where $D_0=20(24+16\ln2)$ and $C_{\mathrm e}=10$.

\paragraph{Balanced $\lfloor K/2 \rfloor$-cohort composition via coarse bipartite quadratics.}
To combine the maintained cohort sets $S_0, \dots, S_{K-1}$ with local deficits $\delta_j := \OPT(C_j) - f(S_j) \ge 0$ (Section~\ref{set-sec-clean-balanced-cohorts}, Theorem~\ref{set-thm-balanced}), write $[K] = \{0,\dots,K-1\}$ and split each $S_j$ into $G_j = S_j \cap O$ and $H_j = S_j \cap B$ ($G = \bigcup_j G_j$, $H = \bigcup_j H_j$), where subadditivity and $r_B \le 2\eta$ give $f(G) \ge M - \sum_j \delta_j - O(K\eta)$. Viewing the coarse groups $G_j$ and $H_j$ as single coordinates, the $O(\eta)$ conditional midpoint gaps on $O$ and $B$ imply that for all $X, Y \subseteq [K]$, with $G_X=\bigcup_{i\in X}G_i$ and $H_Y=\bigcup_{j\in Y}H_j$, $f(G_X \cup H_Y)$ is lower-bounded up to $8K\eta$ error by a \emph{coarse bipartite quadratic polynomial} (Lemma~\ref{set-lem-coarse-lower}):
\[
  Q(X, Y) = f_0 + \sum_{i \in X} a_i + \sum_{j \in Y} b_j - \sum_{i \in X,\, j \in Y} c_{ij}
          = f_0 + \sum_{i \in X} u_i + \sum_{j \in Y} b_j + \sum_{i \in X,\, j \notin Y} c_{ij},
\]
where $a_i,b_j$ are fixed-order chain marginals of the $G_i$'s and $H_j$'s, respectively, and $c_{ij}\ge0$ is the decrease in $G_i$'s chain marginal caused by adding $H_j$ after its predecessors; $u_i = a_i - \sum_j c_{ij}$ satisfies $\sum_i (u_i)_- \le r_k \le 2\eta$, and $Q([K],\varnothing) = f(G)$. Selecting a subset of cohorts $J \subseteq [K]$ and outputting $S_J = \bigcup_{j \in J} S_j = G_J \cup H_J$ separates the diagonal and off-diagonal terms in $Q(J,J)$:
(1)~a \emph{same-cohort (diagonal) term} $c_{jj}$ is never captured, since a cohort's good and bad portions are selected together. Discarding $H_j$ can improve $S_j$ by at most $\delta_j$, while the contribution of $H_j$ alone beyond $f_0$ is at most $r_B$. These two facts, with an $O(\eta)$ correction for conditioning, charge the diagonal interactions to the local deficits, giving $\sum_j c_{jj}\le\sum_j\delta_j+O(K\eta)$ (Equation~\eqref{set-eq-clean-diagonal-budget}); and
(2)~for \emph{cross-cohort (off-diagonal) terms} $c_{ij}$ ($i \ne j$), choosing $J \subseteq [K]$ to be a \emph{uniformly random subset of fixed size $s = \lfloor K/2 \rfloor$}, drawn once independently of the modules and reused throughout the stream, induces negative correlation between distinct cohorts and captures every off-diagonal interaction at rate $q_K := \Pr[i \in J,\, j \notin J] = \frac{\lfloor K/2 \rfloor \lceil K/2 \rceil}{K(K-1)} \ge \frac{1}{4} + \frac{1}{4K}$.
The local deficits are paid twice: once for the good mass omitted by the masks, and once for the diagonal interactions that whole-cohort selection cannot capture. By the cohort guarantee, their total expected cost is $O(K\eta)+O(K\xi M)$. Choosing $\xi=1/(80K^2)$ leaves a gain of order $M/K$ from balanced sampling, up to the slack loss $O(K\eta)$; hence $\E f(S_J)\ge\big(\frac14+\Omega(K^{-1})\big)M-O(K)\eta$. Returning the better of $S_J$ and $R_t$ (where $\E f(R_t) \ge \frac{1}{4}M + \frac{1}{4}\eta$) takes the maximum of two affine bounds in $\eta/M \ge 0$ (the case $M=0$ is immediate) with slopes $-O(K)$ and $+1/4$; their intersection at $\eta/M = \Theta(K^{-2})$ guarantees $\E\max\{f(S_J), f(R_t)\} \ge \big(\frac{1}{4} + \Omega(K^{-2})\big)M$.

\subsection{Related work}
\label{sec:related}

Before double greedy, the approximation guarantee for static
unconstrained submodular maximization was improved through
simulated annealing~\cite{OGV11} and a structural continuous-greedy
approach~\cite{FNS11}. Buchbinder and Feldman later obtained a
deterministic $1/2$-approximation~\cite{BF18}.
Li, Feldman, Kazemi, and Karbasi~\cite{LFKK22} proved that
improving upon $1/4$ by a fixed amount requires
$\Omega(N/\log N)$ queries in the static value-oracle model
when the algorithm must return a feasible set.
Our update bounds are consistent with this lower bound: building
the dynamic structure from an initially empty universe takes
$N$ insertions. They exploit information retained between
successive instances.

For constrained dynamic maximization, the streaming threshold
method of Badanidiyuru et al.~\cite{BMKK14} underlies several
early algorithms. Chen and Peng~\cite{CP22} established
approximation-dependent lower bounds for dynamic monotone
submodular maximization under a cardinality constraint.
Later developments include dynamic matroid
optimization~\cite{DFLNZ23,BBGHJM24}, submodular
matching~\cite{BBGHJM25}, and matchoid constraints~\cite{BGHJM25}.
Dynamic nonmonotone algorithms under cardinality and matroid constraints
have also improved substantially~\cite{BBGHJM23b,BGHJM26,GZZ26}.
In particular, the $0.262$ and $0.277$ guarantees
of Banihashem et al.~\cite{BGHJM26} exceed $1/4$, but their
update bounds have polynomial dependence on the cardinality
parameter. Setting that parameter to $N$ does not give the
sublinear guarantees studied here.
Algorithms with known deletion order~\cite{PR23} and algorithms
with predictions about future updates~\cite{LS24,AB24} study
related models in which advance information can reduce update costs.

The closest work for general unconstrained maximization is
due to Banihashem et al.~\cite{BBG+26}, discussed above.
Its use of a core, an insertion buffer, and forced-set extensions
motivates the corresponding part of our construction.
Our square-root algorithm maintains its core under arbitrary deletions,
strengthens the extension analysis using endpoint terms, and
requires no deletion predictions.
The polylogarithmic value estimator and the arbitrary fixed-exponent
set tradeoff use the chain-sampling and cohort arguments described
above.

\paragraph{Organization.}
Section~\ref{sec:value} develops the value estimator, proves
Theorem~\ref{thm:main-value}, and identifies the limitation of its
uniform-chain certificate. Section~\ref{sec:root-updates}
gives the square-root algorithm. Section~\ref{sec:general-tradeoff}
proves the structural lemmas underlying the cohort algorithms and develops
the composition argument for the general tradeoff in
Theorem~\ref{thm:main-sets}.

\section{Fully dynamic value estimation}
\label{sec:value}

We first maintain a $(\sqrt2-1-\varepsilon)$ lower estimate of the
optimum with polylogarithmic expected amortized oracle queries.
The estimator tracks the positive mass of a random-chain modular
lower bound. We prove the chain inequality, give the fully dynamic
maintenance algorithm, and then identify the limitation of the
certificate on which the estimate is based.

\subsection{Model and preliminaries}
\label{sec:preliminaries}

\paragraph{Model and notation.}
Let $f:2^V\to\mathbb R_{\ge0}$ be a fixed submodular function.
We do not assume $f(\varnothing)=0$.
For disjoint $A,B$, write $f(B\mid A)=f(A\cup B)-f(A)$,
and use $f(i\mid A)$ when $B=\{i\}$.
For $C\subseteq V$, put
\[
 \OPT(C)=\max_{S\subseteq C}f(S),\qquad f_0=f(\varnothing).
\]
An update inserts or deletes one identity; an identity has at
most one active copy. Reinsertions are permitted, and each
insertion starts a new lifetime. The upper bound $N$ concerns
simultaneously active identities, not the total number that
ever appear. All probability and expectation statements refer
to the algorithm's random choices, with the update sequence fixed.

The value-oracle model charges one query for evaluating $f(A)$.
It does not charge the cardinality of $A$ as additional oracle
queries. Our value estimator uses the global oracle on current
and historical snapshots. Our maintained-set algorithms query
only subsets of the current active universe.
The algorithms use exact oracle answers and real arithmetic.
The quantity $\widetilde O$ suppresses logarithmic factors in
$N$ and in the displayed accuracy parameters. Unspecified
logarithms are natural.

A \emph{stored output set} is represented by maintained
membership lists, together with its current value when candidates
are compared. Returning its representation uses no additional
oracle queries. Enumerating its elements takes ordinary time
proportional to the output length. Neither this enumeration cost
nor the cost of constructing arbitrary queried subsets is
claimed to be polylogarithmic.

\paragraph{Sampling and the multilinear extension.}
For a finite universe, the multilinear extension is
$F(x)=\E f(R(x))$, where the elements of $R(x)$ are chosen
independently with their respective probabilities $x_i$.
Submodularity implies that its mixed second derivatives are
nonpositive. Hence $F$ is concave along any line whose direction
has all coordinates nonnegative or all coordinates nonpositive.
This standard consequence~\cite{CCPV11} will be used repeatedly.

\begin{lemma}[Two-block sampling]
\label{lem:prelim-sampling}
Let $A,B,C$ be disjoint. Independently sample each element
of $A$ with probability $p$ and each element of $B$ with
probability $q$, obtaining $R_A,R_B$. Then
\begin{align*}
 \E f(C\cup R_A\cup R_B)
 \ge{}&(1-p)(1-q)f(C)+p(1-q)f(C\cup A)\\
 &+(1-p)qf(C\cup B)+pqf(C\cup A\cup B).
\end{align*}
In particular, for a uniform half-subset $R\subseteq U$ and
an optimum $O\subseteq U$,
\[
 4\E f(R)\ge
 f(\varnothing)+f(O)+f(U\setminus O)+f(U).
\]
\end{lemma}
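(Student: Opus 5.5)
The plan is to apply the one-block version of the inequality twice, using conditioning on the first sample. The one-block statement reads as follows. For disjoint $A$ and a base set $D$, and $R_A\sim A(p)$,
\[
 \E f(D\cup R_A)\ge(1-p)f(D)+pf(D\cup A).
\]
This is the standard fact that the multilinear extension is concave along a nonnegative direction. Set $g(S)=f(D\cup S)$ for $S\subseteq A$. This $g$ is submodular, and its multilinear extension is $G$. Then $\E f(D\cup R_A)=G(p\1_A)$.

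Consider the line $x(s)=s\1_A$ for $s\in[0,1]$. The direction has all coordinates nonnegative, so by the concavity fact quoted in the preliminaries~\cite{CCPV11}, $s\mapsto G(s\1_A)$ is concave on $[0,1]$. Therefore it lies above the chord through its endpoints:
\[
 G(p\1_A)\ge(1-p)G(0)+pG(\1_A)=(1-p)f(D)+pf(D\cup A).
\]
If a fully self-contained argument is preferred, the same bound follows by induction over the elements of $A$. That route uses decreasing marginals: $\E f(i\mid D\cup R)\ge \E f(i\mid D\cup R')$ for $R\subseteq R'$.

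For the two-block bound, condition on $R_B$. The sets $A$ and $C\cup R_B$ are disjoint, so the one-block inequality with base $D=C\cup R_B$ gives
\[
 \E[f(C\cup R_A\cup R_B)\mid R_B]\ge(1-p)f(C\cup R_B)+pf(C\cup A\cup R_B).
\]
Take expectations over $R_B$. Apply the one-block inequality again to block $B$ with rate $q$, once with base $C$ and once with base $C\cup A$. This gives
\[
 \E f(C\cup R_B)\ge(1-q)f(C)+qf(C\cup B),
\]
\[
 \E f(C\cup A\cup R_B)\ge(1-q)f(C\cup A)+qf(C\cup A\cup B).
\]
Both weights $1-p$ and $p$ are nonnegative, so substituting preserves the direction of the inequality and yields exactly the four-corner bound.

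For the special case, take $C=\varnothing$, $A=O$, $B=U\setminus O$ and $p=q=1/2$. A uniform half-subset $R$ of $U$ is distributed as $R_A\cup R_B$. The four coefficients are all $1/4$, and multiplying by $4$ gives the displayed inequality.

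The only nontrivial ingredient is the concavity of the multilinear extension along nonnegative directions, and the paper takes it as standard. The rest is conditioning and bookkeeping. The one point needing care is that the induced function $g$ is submodular, so the concavity fact applies.
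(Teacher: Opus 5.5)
Your proposal is correct and follows essentially the same route as the paper, which applies concavity of the multilinear extension along the coordinates of $A$ and then along those of $B$, comparing with the segment endpoints each time. Conditioning on $R_B$ and then applying the one-block bound with bases $C$ and $C\cup A$ is exactly this two-step argument written out, and your specialization $C=\varnothing$, $A=O$, $B=U\setminus O$, $p=q=1/2$ matches the paper's.
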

\begin{proof}
Apply concavity first along the coordinates of $A$ and then
along the coordinates of $B$, comparing each intermediate
point with the two endpoints of its line segment.
The last assertion takes $C=\varnothing$, $A=O$,
$B=U\setminus O$, and $p=q=1/2$.
\end{proof}

\begin{lemma}[Singleton bounds]
\label{lem:prelim-singletons}
Put $v_i=[f(\{i\})-f_0]_+$. For any $C,D\subseteq V$,
\[
 \OPT(C)\le f_0+\sum_{i\in C}v_i,\qquad
 \OPT(C)\le\OPT(D)+\sum_{i\in C\setminus D}v_i.
\]
Elements with $v_i=0$ may be excluded without decreasing the
optimum. Inserting or deleting an element with $v_i\le A$
changes the optimum by at most $A$.
\end{lemma}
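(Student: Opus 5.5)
The plan is to obtain all four assertions of Lemma~\ref{lem:prelim-singletons} from subadditivity above $f_0$, which follows directly from submodularity. For disjoint sets, or more generally any $T$ and an element $i\notin T$, submodularity gives $f(i\mid T)\le f(i\mid\varnothing)=f(\{i\})-f_0\le v_i$. For any $S\subseteq C$, list the elements of $S$ in an arbitrary order and telescope:
\[
 f(S)=f_0+\sum_{k}f(s_k\mid\{s_1,\dots,s_{k-1}\})\le f_0+\sum_{i\in S}v_i\le f_0+\sum_{i\in C}v_i .
\]
The last step uses $v_i\ge0$. Maximizing over $S$ gives the first inequality.

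For the second inequality, take an optimal $S\subseteq C$ and split it as $S=(S\cap D)\cup(S\setminus D)$. Adding the elements of $S\setminus D$ one at a time on top of $S\cap D$ and applying the same marginal bound gives
\[
 f(S)\le f(S\cap D)+\sum_{i\in S\setminus D}v_i\le\OPT(D)+\sum_{i\in C\setminus D}v_i .
\]
Here we use $S\cap D\subseteq D$ and $S\setminus D\subseteq C\setminus D$, together with $v_i\ge0$.

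For the exclusion claim, let $Z=\{i\in C:v_i=0\}$ and apply the second inequality with $D=C\setminus Z$. This gives $\OPT(C)\le\OPT(C\setminus Z)$. The reverse inequality $\OPT(C\setminus Z)\le\OPT(C)$ is trivial by monotonicity of the maximum under inclusion.

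For the update claim, write $C'=C\cup\{u\}$ with $v_u\le A$, covering either an insertion into $C$ or a deletion from $C'$. Monotonicity of $\OPT$ under inclusion gives $\OPT(C)\le\OPT(C')$. The second inequality with the pair $(C',C)$ gives $\OPT(C')\le\OPT(C)+v_u\le\OPT(C)+A$. So the optimum moves by at most $A$ in either direction.

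There is no real obstacle in this argument. The only points needing care are that $f_0$ may be nonzero, which the telescoping handles by starting from $f(\varnothing)$, and that the positive part in $v_i$ is what allows summing over all of $C$ rather than only over $S$.
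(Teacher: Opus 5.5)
Your proof is correct and follows essentially the paper's route: bound each added marginal by its positive singleton gain on top of $S\cap D$, specialize to $D=\varnothing$ for the first bound, and apply the second bound in both directions for the update claim (one direction reduces to monotonicity of $\OPT$ under inclusion, as you use). The only cosmetic difference is that you obtain the exclusion claim from the second inequality with $D=C\setminus Z$, whereas the paper notes directly that $v_i=0$ forces every marginal of $i$ to be nonpositive by diminishing returns.
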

\begin{proof}
Let $S$ be an optimum in $C$. Start from $S\cap D$ and add
the elements of $S\setminus D$, bounding each marginal
by its positive singleton gain. Since $f(S\cap D)\le\OPT(D)$,
this proves the second inequality.
Taking $D=\varnothing$ proves the first
inequality. If $v_i=0$, diminishing returns makes every
marginal of $i$ nonpositive. The final assertion follows by
applying the second inequality in both directions.
\end{proof}

\subsection{The random-chain inequality}
\label{sec:val-chain}

Our value estimator is based on the classical chain modular lower
bound~\cite{Edmonds70}. Choosing a random permutation and taking
its positive support is the random-permutation algorithm of
Iyer, Jegelka, and Bilmes~\cite{IJB13}.
We prove a sharper bound on its certificate, retaining endpoint
terms needed later, and show how to maintain sampled certificate mass.
The chain coordinates may change at many elements
after one update.  The useful property is that these changes all have
the same sign.  This allows us to maintain a sample from the positive
coordinates even when maintaining their complete support would be
expensive.

Fix a finite active universe $U$ and an ordering $\pi$.  Write
$f_0=f(\varnothing)$, let $P_i$ be the strict prefix preceding $i$,
and define
\begin{equation}
 w_i^\pi(U)=f(P_i\cup\{i\})-f(P_i),\qquad
 Q_\pi(U)=f_0+\sum_{i\in U}(w_i^\pi(U))_+.
 \label{eq:val-chain-definition}
\end{equation}
The prefix contains all preceding elements, including those with
negative marginal gain.  This is the distinction between the chain
in~\eqref{eq:val-chain-definition} and a greedy construction that
discards unprofitable elements as it proceeds.

\begin{lemma}[Modular lower bound]
\label{lem:val-modular}
For every order $\pi$ and every $S\subseteq U$,
\[
 f(S)\ge f_0+\sum_{i\in S}w_i^\pi(U).
\]
Consequently,
\[
 f_0\le Q_\pi(U)\le
 f\bigl(\{i:w_i^\pi(U)>0\}\bigr)\le\OPT(U).
\]
\end{lemma}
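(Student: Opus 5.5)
The plan is to prove the modular lower bound by telescoping along the order $\pi$ restricted to $S$. I would then deduce the sandwich inequality by applying it to the positive support $S^+=\{i:w_i^\pi(U)>0\}$.

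\emph{Step 1 (telescoping).} Fix $S\subseteq U$ and list its elements in $\pi$-order as $s_1,\dots,s_k$. Put $S_{j}=\{s_1,\dots,s_j\}$, with $S_0=\varnothing$. Then
\[
 f(S)=f_0+\sum_{j=1}^k f(s_j\mid S_{j-1}).
\]
Every element of $S_{j-1}$ precedes $s_j$ in $\pi$, so $S_{j-1}\subseteq P_{s_j}$. Since $s_j\notin P_{s_j}$, diminishing returns gives $f(s_j\mid S_{j-1})\ge f(s_j\mid P_{s_j})=w_{s_j}^\pi(U)$. Summing these inequalities proves the first display. This is the only place submodularity is used. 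The key point is that $P_i$ is the full prefix, so $S_{j-1}$ is always a subset of it, whatever the signs of the other marginals.

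\emph{Step 2 (consequences).} Apply Step~1 with $S=S^+$. On $S^+$ each $w_i^\pi(U)$ equals its positive part, and elements outside $S^+$ contribute $(w_i^\pi(U))_+=0$. Hence $f(S^+)\ge f_0+\sum_{i\in U}(w_i^\pi(U))_+=Q_\pi(U)$. The inequality $f(S^+)\le\OPT(U)$ holds by definition, since $S^+\subseteq U$. The inequality $f_0\le Q_\pi(U)$ holds because every added term is nonnegative. If $S^+=\varnothing$, everything collapses to $f_0=Q_\pi(U)=f(\varnothing)$, which is consistent.

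I expect no real obstacle. The only care point is to compare each marginal against the full strict prefix $P_i$ rather than a greedy prefix, which is exactly what makes the inclusion $S_{j-1}\subseteq P_{s_j}$ hold.
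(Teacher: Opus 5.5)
Your proposal is correct and is essentially the paper's proof. The paper also builds $S$ in its inherited $\pi$-order, uses that each element's predecessors form a subset of its full prefix $P_i$ to bound each marginal below by $w_i^\pi(U)$, sums these bounds, and then applies the inequality to the positive coordinates to obtain the chain $f_0\le Q_\pi(U)\le f(S^+)\le\OPT(U)$.
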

\begin{proof}
Build $S$ in its inherited order.  When an element $i\in S$ is added,
its predecessors in this construction form a subset of $P_i$.
Its marginal gain is therefore at least $w_i^\pi(U)$.
Summing proves the first inequality; applying it to the positive
coordinates proves the remaining assertions.
\end{proof}

An arbitrary ordering need not give a useful approximation.  A
uniform ordering, however, gives a constant larger than $1/4$.
We retain the endpoint terms in the following theorem because they
will also be used in the construction that maintains a feasible set.
Throughout the paper, put
\[
 \alpha=\sqrt2-1,\qquad \beta=\frac{1-\alpha}{2}.
\]

\begin{theorem}[Random-chain certificate]
\label{thm:val-chain}
For every nonnegative submodular function on $U$, a uniformly random
order satisfies
\begin{equation}
 \E Q_\pi(U)\ge
 \alpha\OPT(U)+\beta\bigl(f(\varnothing)+f(U)\bigr).
 \label{eq:val-chain-profile}
\end{equation}
\end{theorem}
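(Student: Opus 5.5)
The plan is to work with independent uniform priorities and the diagonal of the multilinear extension, as outlined in the overview. Realize $\pi$ by i.i.d.\ priorities $\pi_i\sim\mathrm{Unif}[0,1]$. Conditioned on $\pi_i=p$, the prefix $P_i$ is distributed as $R_{-i}(p)$. Put $H(p)=F(p\1)$ and $G_\pm(p)=\sum_{i\in U}\E[(\pm f(i\mid R_{-i}(p)))_+]$. Then $\E Q_\pi(U)=f_0+\int_0^1G_+(p)\,dp$ and $H'(p)=\sum_i\partial_iF(p\1)=G_+(p)-G_-(p)$. Both $H(0)=f_0$ and $H(1)=f(U)$ are fixed. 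The proof then reduces to a one-dimensional inequality about the three functions $H,G_+,G_-$ on $[0,1]$, linked by the derivative identity and by two pointwise constraints.

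First I would derive the two pointwise constraints. Fix an optimum $O$ with $M=f(O)$ and let $B=U\setminus O$.

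\emph{Upper direction.} Move from $p\1$ to $\1_O+p\1_B$, a nonnegative direction. Concavity of $F$ along this line gives
\[
F(\1_O+p\1_B)-H(p)\le(1-p)\sum_{i\in O}\partial_iF(p\1)\le(1-p)G_+(p).
\]
Lemma~\ref{lem:prelim-sampling} applied to the block $B$ gives $F(\1_O+p\1_B)\ge(1-p)M+pf(U)$. Hence $G_+(p)\ge M+\frac{pf(U)-H(p)}{1-p}$.

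\emph{Lower direction.} Move from $p\1$ to $p\1_O$, a nonpositive direction. Concavity gives $H(p)-F(p\1_O)\le p\sum_{i\in B}\partial_iF(p\1)$. The right-hand side is at most $p\bigl(G_+(p)-G_-(p)\bigr)+p\sum_{i\in O}\E(-f(i\mid\cdot))_+$. I would rather bound directly $-\sum_{i\in B}\partial_iF\le G_-(p)$, and use $F(p\1_O)\ge pM+(1-p)f_0$. Together these should give $G_-(p)\ge M+\frac{(1-p)f_0-H(p)}{p}$, matching the overview. I would double-check this sign bookkeeping: the $O$-coordinates must be handled using the identity $H'=G_+-G_-$, so that the bound is genuinely on $G_-$.

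Next I would combine the constraints using weights and the identity $G_+=H'+G_-$. Pick a split point $a\in(0,1/2)$ and a weight function $\lambda(p)\in[0,1]$. At each $p$, write $G_+\ge\lambda(p)\,[\text{upper bound}]+(1-\lambda(p))\,[H'+\text{lower bound on }G_-]$, pairing $p$ with $1-p$ so the two ends are treated symmetrically. In the middle range $[a,1-a]$ I would take $\lambda\equiv1$; this supplies the $M$ term with total weight $1-2a$. Near the ends the $H'$ term contributes. Integrating $\int(1-\lambda)H'$ by parts produces boundary terms in $H(0)=f_0$ and $H(1)=f(U)$, plus an integral $-\int(1-\lambda)'H$. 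The weights must be chosen so that the total coefficient of $H(p)$ vanishes identically. That coefficient has three pieces: $-\lambda/(1-p)$ from the upper bound, $-(1-\lambda)/p$ from the lower bound, and $-(1-\lambda)'$ from the integration by parts. Setting it to zero is a linear first-order ODE for $\lambda$ on each end interval. It should have an explicit solution that continues to $\lambda=1$ at $p=a$ and remains in $[0,1]$ near the endpoints. What survives is a combination of $M$, $f_0$ and $f(U)$. Continuity at the junction, together with optimizing $a$, should force $a=1-1/\sqrt2$. This gives $\alpha=1-2a=\sqrt2-1$ and endpoint coefficient $\beta=(1-\alpha)/2$ for each of $f_0$ and $f(U)$, after adding back the $f_0$ from $\E Q_\pi$.

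The main obstacle is this weight design. Three things must hold simultaneously: the $H$ terms must cancel exactly (not merely with a favorable sign, since $H$ is an unknown nonnegative function); the weights must be admissible; and the $1/p$ and $1/(1-p)$ singularities must stay integrable near the endpoints. My first attempt would be to solve the ODE in closed form on $[0,a]$, expecting a rational function of $p$. If that fails to give admissible weights, I would fall back on the crude bound $G_-\ge0$ near the ends, so that $\int_0^aG_+\ge H(a)-f_0$, and cancel $H(a)$ against the middle integral via the pairing $p\leftrightarrow1-p$. I would then check that this still yields the constant $\sqrt2-1$.
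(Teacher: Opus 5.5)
\textbf{There is a genuine gap: the weight design is on the wrong intervals.} Your reduction to $H$, $G_\pm$ and the two forcing inequalities is right. The weight design is the crux, as you say, but as proposed it cannot work.

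With $\lambda\equiv1$ on $[a,1-a]$ there is no $H'$ term in the middle. The coefficient of $H(p)$ there is therefore $-1/(1-p)\neq0$, and nothing cancels it. Since you only know $0\le H\le M$, this term can cost up to $M\log\frac{1-a}{a}$, which exceeds $(1-2a)M$ at $a=1-1/\sqrt2$.

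On the end intervals, your cancellation equation $\lambda'=\lambda/(1-p)+(1-\lambda)/p$ is equivalent to $\bigl(p(1-p)\lambda\bigr)'=1-p$. Its solutions are
\[
 \lambda_c(p)=\frac{p-p^2/2+c}{p(1-p)}.
\]
Near $p=0$ this is unbounded unless $c=0$, and $\lambda_0=(1-p/2)/(1-p)>1$. Near $p=1$ it is unbounded unless $c=-1/2$, and $\lambda_{-1/2}=-(1-p)/(2p)<0$. So no weight in $[0,1]$ solves the equation on any neighbourhood of an endpoint. The $1/p$ and $1/(1-p)$ singularities of the forcing bounds make them unusable there with weights summing to one.

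\textbf{The fix is to reverse the allocation.} Take the solution with $c=-1/4$ on the middle interval,
\[
 \lambda(p)=\frac{p-p^2/2-1/4}{p(1-p)}.
\]
It increases from $0$ at $p=a=1-1/\sqrt2$, through $1/2$ at $p=1/2$, to $1$ at $p=1-a$.
\begin{itemize}
\item On $[0,a]$, take $\lambda=0$ with the trivial bound $G_-\ge0$. This gives $\int_0^aG_+\ge H(a)-f_0$, which is your fallback.
\item On $[1-a,1]$, take $\lambda=1$ with $G_+\ge0$.
\item On $[a,1-a]$, integration by parts produces the boundary term $[(1-\lambda)H]_a^{1-a}=-H(a)$. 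This cancels the $H(a)$ from the left end.
\end{itemize}
Your fallback therefore only closes the argument once the middle weight is nonconstant. With $\lambda\equiv1$ there is no $-H(a)$ for it to cancel.

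The $M$-coefficient is $1-2a$ because both forcing bounds have leading term $M$ and $\lambda+(1-\lambda)=1$, not because $\lambda\equiv1$. The endpoint terms come from
\[
 \int_a^{1-a}\lambda\,\frac{p}{1-p}\,dp=\int_a^{1-a}(1-\lambda)\,\frac{1-p}{p}\,dp=a=\beta .
\]
What forces $a=1-1/\sqrt2$ is optimizing the free constant $c$. The admissible interval of $\lambda_c$ is $[1-\sqrt{1+2c},\sqrt{-2c}]$, and its length is maximized at $c=-1/4$, where it equals $\sqrt2-1$.

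This is the paper's argument in unsymmetrized form. The paper first centers at $E_0=(f_0+f(U))/2$ and pairs $p$ with $1-p$ through $K,A,B$. It then uses $k(p)=(1/2-p)/(p(1-p))$ on $[a,1/2]$ and the bound $A+B\ge K'$ on $[0,a]$. Your $\lambda$ corresponds to $(1-k)/2$.

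\textbf{A smaller slip in the lower direction.} Your first inequality there has the wrong sign. Concavity along $-p\mathbf{1}_B$ gives $F(p\mathbf{1}_O)-H(p)\le p\sum_{i\in B}\bigl(-\partial_iF(p\mathbf{1})\bigr)$. Then $\sum_{i\in B}(-\partial_iF)\le G_-$ holds termwise from $\mathbb{E}X\le\mathbb{E}X_+$, with no need for the $O$-coordinates or for $H'=G_+-G_-$. The final bound on $G_-$ that you state is correct.
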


The proof compares a random prefix with an optimum in two ways.
One comparison forces the optimum into the prefix; the other removes
the elements outside the optimum.  If the prefix has little value,
these comparisons require either substantial positive marginal mass
or substantial negative marginal mass.  Pairing the beginning and
the end of the chain lets us combine the two requirements.  The
integration below cancels the expected prefix value and leaves the
factor $\alpha$.

\begin{proof}
Let $M=\OPT(U)$, $f_1=f(U)$, and let $F$ be the multilinear
extension.  For $p\in[0,1]$, define
\[
 H(p)=F(p\mathbf1),\qquad
 G_+(p)=\sum_i\E\bigl[(f(i\mid R_{-i}(p)))_+\bigr],\qquad
 G_-(p)=\sum_i\E\bigl[(-f(i\mid R_{-i}(p)))_+\bigr],
\]
where $R_{-i}(p)$ samples each element of $U\setminus\{i\}$
independently with probability $p$.  In particular, the positive part
is taken before the expectation.  The basic identities are
\begin{equation}
 H'=G_+-G_-,\qquad
 Q:=\E Q_\pi(U)=f_0+\int_0^1G_+(p)\,dp.
 \label{eq:val-diagonal-identities}
\end{equation}
The second identity follows by assigning independent uniform
priorities and conditioning on the priority of each element.

Fix an optimum $O$.  Concavity of $F$ along nonnegative directions
gives
\[
 (1-p)\sum_{i\in O}\partial_iF(p\mathbf1)
 \ge F(\mathbf1_O+p\mathbf1_{U\setminus O})-H(p)
 \ge (1-p)M+pf_1-H(p).
\]
Forcing the coordinates outside $O$ to zero gives similarly
\[
 p\sum_{i\notin O}[-\partial_iF(p\mathbf1)]
 \ge F(p\mathbf1_O)-H(p)
 \ge pM+(1-p)f_0-H(p).
\]
Since an expectation of a positive part dominates the corresponding
expectation, we obtain
\begin{equation}
 G_+(p)\ge M+\frac{pf_1-H(p)}{1-p},
 \qquad
 G_-(p)\ge M+\frac{(1-p)f_0-H(p)}p.
 \label{eq:val-forcing}
\end{equation}

Set $E_0=(f_0+f_1)/2$.  For $0\le p\le1/2$, let
\[
 K(p)=\frac{H(p)+H(1-p)}2-E_0,\qquad
 A(p)=\frac{G_+(p)+G_-(1-p)}2,\qquad
 B(p)=\frac{G_-(p)+G_+(1-p)}2.
\]
Then $K(0)=0$, $K'=A-B$, and $A,B\ge0$.
Equations~\eqref{eq:val-diagonal-identities}
and~\eqref{eq:val-forcing} imply
\begin{equation}
 \begin{split}
 A(p)&\ge M-E_0-\frac{K(p)}{1-p},\\
 B(p)&\ge M-E_0-\frac{K(p)}p,\\
 Q&=E_0+\int_0^{1/2}(A(p)+B(p))\,dp.
 \end{split}
 \label{eq:val-symmetrized}
\end{equation}
Let $a=1-1/\sqrt2$ and
$k(p)=(1/2-p)/(p(1-p))$ on $[a,1/2]$.  Direct differentiation yields
\[
 0\le k\le1,\qquad k(a)=1,\qquad k(1/2)=0,\qquad
 k'+\frac{1-k}{1-p}+\frac{1+k}p=0.
\]
The coefficients $1-k$ and $1+k$ are nonnegative, so
\[
 A+B=(1-k)A+(1+k)B+kK'
 \ge2(M-E_0)
   -K\left(\frac{1-k}{1-p}+\frac{1+k}p\right)+kK'.
\]
Integrating over $[a,1/2]$ and integrating by parts gives
\[
 \int_a^{1/2}(A+B)\,dp
 \ge(1-2a)(M-E_0)-K(a).
\]
On $[0,a]$, the inequality $A+B\ge K'$ gives
$\int_0^a(A+B)\,dp\ge K(a)$.  Hence
\[
 Q\ge E_0+(1-2a)(M-E_0)
   =\alpha M+\beta(f_0+f_1).
\]
The centering by $E_0$ is algebraic; it does not assume that
$f-E_0$ is nonnegative.
\end{proof}

\subsection{Dynamic maintenance}
\label{sec:val-dynamic}

We next maintain an estimate of $Q_\pi(U)$ under arbitrary
insertions and deletions.  The construction has two parts.  At a
fixed value scale, Poisson sampling handles changes to the chain.
A separate schedule batches the work at scales much smaller than
the current optimum.  The schedule is independent of the random
orders and their samples.  This independence is what permits us to
select a current instance without changing its distribution.

Each lifetime receives fresh independent priorities in every replica,
and a stored snapshot retains these priorities.  Thus a deleted and
reinserted identity has distinct lifetimes at the two endpoints of a
snapshot comparison.  Every query contains at most one lifetime of
each identity and is evaluated as the corresponding set of identities.
Historical prefixes are queried through the global oracle defined above.

\subsubsection{Exact sampling at a fixed scale}
\label{sec:val-fixed-scale}

Fix distinct priorities, a scale $A>0$, and an intensity $c>0$.
Write $w_i(U)=w_i^\pi(U)$ and
$W^+(U)=\sum_i(w_i(U))_+$.  We maintain independent counts
\begin{equation}
 K_i\sim\operatorname{Poi}\!\left(\frac{c(w_i(U))_+}{A}\right).
 \label{eq:val-poisson-invariant}
\end{equation}
Only nonzero counts and their total $K=\sum_iK_i$ are stored.
Thus $K\sim\operatorname{Poi}(cW^+(U)/A)$.
The intensity is proportional to value, rather than to the number
of elements with positive weight.

Insertion decreases every old chain marginal, and deletion
increases every surviving marginal.  Consequently, insertion can
be handled by thinning the existing sample.  After a deletion, the
new positive mass must be sampled.  The following telescoping
identity makes this possible without examining every element.
For an interval $J$ in the priority order, let $X_{<J}$ and
$X_{\le J}$ be its preceding and ending prefixes.  Then
\begin{equation}
 B_X(J):=f(X_{\le J})-f(X_{<J})
       =\sum_{i\in X\cap J}w_i(X).
 \label{eq:val-interval}
\end{equation}

It is useful to state the update rule for two snapshots, as in
Figure~\ref{fig:val-transport}. This includes single insertions and
deletions and allows a sequence of local updates to be processed
in one batch.

\begin{lemma}[Snapshot transport]
\label{lem:val-transport}
Let $U_0,U_1$ be two snapshots, and let $C$ be their common
continuously surviving lifetimes.  Put
$D=U_0\setminus C$ and $I=U_1\setminus C$.
An exact sampler satisfying~\eqref{eq:val-poisson-invariant}
at $U_0$ can be transformed into an exact sampler at $U_1$ using
\begin{equation}
 O\!\left(
 \left[|D|+|I|+1+
 \frac{c(\OPT(U_0)+\OPT(U_1))}{A}\right]\log(N+1)
 \right)
 \label{eq:val-transport-cost}
\end{equation}
expected queries.
The snapshot choices must be fixed, or depend only on randomness
independent of the sampler's priorities and Poisson draws.
\end{lemma}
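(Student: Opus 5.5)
We factor the transition as $U_0\to C\to U_1$: first delete $D$, then insert $I$. Each step is handled by one sampling primitive. Both steps preserve the product-Poisson law~\eqref{eq:val-poisson-invariant}, conditional on the priorities. Since the snapshots are fixed, or chosen using randomness independent of the priorities and the Poisson draws, conditioning on them does not disturb that law.

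\emph{Deletion step $U_0\to C$.} The sampler has stored the old prefix values at the positions of $D$, so we query $w_i(U_0)$ for $i\in D$ and discard their counts. This costs $O(|D|+1)$ queries. For each $i\in C$ the prefix $P_i$ loses only elements, so by submodularity the drift $d_i=w_i(C)-w_i(U_0)$ is nonnegative. The positive-part increase $\Delta_i=(w_i(C))_+-(w_i(U_0))_+$ lies in $[0,d_i]$. By~\eqref{eq:val-interval}, for any priority interval $J$,
\[
 \sum_{i\in C\cap J}d_i=B_C(J)-B_{U_0}(J)+\sum_{i\in D\cap J}w_i(U_0).
\]
The first two terms are prefix-value differences, so each interval drift costs $O(1)$ fresh queries plus stored data. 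Old-snapshot prefix values come from the global oracle, which is why historical queries are needed. We draw $\Poi(cD_*/A)$ proposals, where $D_*$ is the total drift. Each proposal is located by binary search over priority intervals, which takes $O(\log(N+1))$ queries and selects $i$ with probability $d_i/D_*$. We accept a proposal at $i$ with probability $\Delta_i/d_i$; computing $w_i(C)$ and $w_i(U_0)$ takes $O(1)$ queries. By Poisson splitting and thinning, the accepted increments at distinct $i$ are independent $\Poi(c\Delta_i/A)$. Added to the existing independent $\Poi(c(w_i(U_0))_+/A)$ counts, they give $\Poi(c(w_i(C))_+/A)$.

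\emph{Insertion step $C\to U_1$.} Now the marginals on $C$ only decrease, so $(w_i(U_1))_+\le(w_i(C))_+$. Each of the $K_i$ marks on $i\in C$ is kept independently with probability $(w_i(U_1))_+/(w_i(C))_+$. Binomial thinning of a Poisson count gives exactly $\Poi(c(w_i(U_1))_+/A)$. Processing a mark costs $O(1)$ queries. For each $i\in I$ we compute $w_i(U_1)$ with two queries and draw a fresh count. This costs $O(|I|+1)$ queries, plus the cost of locating the prefixes, which fits within the $\log(N+1)$ factor.

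\emph{Cost.} The expected number of existing marks thinned is $cW^+(C)/A\le c\OPT(C)/A$ by Lemma~\ref{lem:val-modular}. The number of proposals is $cD_*/A$ in expectation, so the main obstacle is bounding $D_*$. We write
\[
 D_*=\sum_{i\in C}w_i(C)-\sum_{i\in C}w_i(U_0).
\]
The first sum telescopes to $f(C)-f_0\le\OPT(C)$. For the second, $\sum_{i\in U_0}w_i(U_0)=f(U_0)-f_0\ge -f_0$, so
\[
 -\sum_{i\in C}w_i(U_0)\le f_0+\sum_{i\in D}w_i(U_0)\le f_0+\sum_{i\in D}(w_i(U_0))_+\le f_0+W^+(U_0)\le \OPT(U_0)+f_0,
\]
using Lemma~\ref{lem:val-modular} for the last step. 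Because $f_0\le\OPT(U_0)$, this gives $D_*\le\OPT(C)+2\OPT(U_0)$.

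It remains to bound $\OPT(C)$ without an additive $\OPT(C)$ term in~\eqref{eq:val-transport-cost}. Monotonicity of $\OPT$ under inclusion gives $\OPT(C)\le\OPT(U_0)$, since $C\subseteq U_0$. Hence every term is $O(\OPT(U_0)+\OPT(U_1))$. Multiplying by the $O(\log(N+1))$ cost per proposal, and adding the $O(|D|+|I|+1)$ bookkeeping, yields~\eqref{eq:val-transport-cost}.
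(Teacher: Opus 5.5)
Your proposal is correct and follows the paper's proof essentially step for step. Both use the factorization $U_0\to C\to U_1$, Poisson immigration via drift-proportional binary search with acceptance probability $\Delta_i/d_i$ for deletions, binomial thinning plus fresh counts for insertions, and the bound $\OPT(C)\le\min\{\OPT(U_0),\OPT(U_1)\}$. Your drift estimate $D_*\le\OPT(C)+2\OPT(U_0)$ is slightly looser than the paper's $D_*\le\OPT(C)+\OPT(U_0)$, because $f_0+W^+(U_0)\le\OPT(U_0)$ already holds, but this does not affect the $O(\cdot)$ bound.
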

\begin{proof}
We perform the two artificial transitions
$U_0\longrightarrow C\longrightarrow U_1$.
An identity deleted and reinserted between the snapshots is absent
from $C$, even if it belongs to both endpoint sets of identities.

For the deletion transition, query the old weights $w_d(U_0)$ for
$d\in D$, and store their interval sums.  For $i\in C$, define
$d_i=w_i(C)-w_i(U_0)\ge0$.  Equation~\eqref{eq:val-interval} gives
\begin{equation}
 \sum_{i\in C\cap J}d_i
  =B_C(J)-B_{U_0}(J)+\sum_{d\in D\cap J}w_d(U_0).
 \label{eq:val-hole-correction}
\end{equation}
The last sum corrects for the deleted positions in the old chain.
The total drift satisfies
\[
 D_*:=\sum_{i\in C}d_i
 =f(C)-f(U_0)+\sum_{d\in D}w_d(U_0)
 \le\OPT(C)+\OPT(U_0),
\]
using nonnegativity and Lemma~\ref{lem:val-modular}.
After the hole weights have been queried, an interval drift
requires only constantly many value queries.  A balanced search
tree therefore samples $i$ with probability $d_i/D_*$ using
$O(\log(N+1))$ queries.  Intervals of zero drift are never selected.

Delete the points on $D$ and retain all old points on $C$.
Independently generate $\operatorname{Poi}(cD_*/A)$ proposals.
For each proposal, sample $i$ proportionally to $d_i$ and accept
the new point with probability
\[
 \frac{(w_i(C))_+-(w_i(U_0))_+}{d_i}.
\]
The ratio is in $[0,1]$, since the positive-part map is monotone
and $1$-Lipschitz.  Poisson marking and thinning give independent
new counts with precisely the required increments in their means.
Superposition yields the exact process on $C$.

For the insertion transition, retain each point at $i\in C$
independently with probability
\[
 \frac{(w_i(U_1))_+}{(w_i(C))_+}.
\]
A coordinate carrying a point has a positive denominator almost
surely.  Query the new weights at the elements of $I$ and give
them independent counts of means $c(w_i(U_1))_+/A$.
This establishes~\eqref{eq:val-poisson-invariant} on $U_1$.

All new draws and retention coins are independent.  The expected
number of deletion proposals is $cD_*/A$, and the expected number
of points inspected during insertion is $cW^+(C)/A$.
Since $\OPT(C)\le\min\{\OPT(U_0),\OPT(U_1)\}$, the hole queries,
birth queries, and sampled searches give~\eqref{eq:val-transport-cost}.
No conditioning on small sample sizes is used.
\end{proof}

\begin{figure}[tbp]
\centering
\begin{tikzpicture}[
  font=\footnotesize,
  snapshot/.style={draw=blue!35!black,fill=blue!3,
    rounded corners=1.2pt,minimum width=3.15cm,
    minimum height=.84cm,align=center,inner sep=4pt},
  transport/.style={-{Latex[length=2mm,width=1.3mm]},
    draw=blue!40!black,semithick},
  operation/.style={align=center,text width=5.25cm,
    inner sep=0pt,font=\footnotesize}
]
\node[snapshot] (old) at (0,0)
  {$U_0=C\mathbin{\dot\cup}D$\\old snapshot};
\node[snapshot] (common) at (5.4,0)
  {$C$\\surviving lifetimes};
\node[snapshot] (new) at (10.8,0)
  {$U_1=C\mathbin{\dot\cup}I$\\new snapshot};
\draw[transport] (old.east) -- node[above=3pt] {delete $D$}
  (common.west);
\draw[transport] (common.east) -- node[above=3pt] {insert $I$}
  (new.west);
\node[operation,anchor=north] at (2.7,-.68) {
  Discard all marks on $D$.\\
  For $i\in C$, add $\operatorname{Poi}(c\Delta_i/A)$ marks,\\
  $\Delta_i=(w_i(C))_+-(w_i(U_0))_+\ge0$.
};
\node[operation,anchor=north] at (8.1,-.68) {
  Retain each mark on $C$ with probability\\
  $(w_i(U_1))_+/(w_i(C))_+$.\\
  $i\in I$: $\operatorname{Poi}(c(w_i(U_1))_+/A)$ new marks.
};
% The interval endpoints determine the two prefix evaluations.
\fill[black!5] (-1.3,-2.78) rectangle (2.2,-2.52);
\fill[blue!12] (2.2,-2.78) rectangle (8.6,-2.52);
\draw[draw=black!55,-{Latex[length=1.7mm]}]
  (-1.3,-2.78) -- (12.1,-2.78);
\draw[draw=blue!35!black] (2.2,-2.9) -- (2.2,-2.46);
\draw[draw=blue!35!black] (8.6,-2.9) -- (8.6,-2.46);
\node[anchor=south] at (.45,-2.51) {before $J$};
\node[anchor=south] at (5.4,-2.51) {priority interval $J$};
\node[anchor=south] at (10.35,-2.51) {after $J$};
\node[anchor=north] at (2.2,-2.93) {prefix $X_{<J}$};
\node[anchor=north] at (8.6,-2.93) {prefix $X_{\le J}$};
\node[align=center,anchor=north,inner sep=0pt] at (5.4,-3.52) {
  $\displaystyle
  \sum_{i\in C\cap J}d_i
  =\underbrace{B_C(J)-B_{U_0}(J)}_{\text{prefix differences}}
   +\underbrace{\sum_{d\in D\cap J}w_d(U_0)}_{\text{deleted positions}}$
};
\node[align=center,anchor=north,inner sep=0pt] at (5.4,-4.82) {
  $d_i=w_i(C)-w_i(U_0)\ge0,\qquad
   B_X(J)=f(X_{\le J})-f(X_{<J})$.
};
\end{tikzpicture}
\caption{Transport between two snapshots. Deletion raises surviving
marginals and is handled by Poisson immigration; insertion lowers them
and is handled by thinning. The interval identity supplies the drift
used to sample immigration proposals, accepted at $i$ with probability
$\Delta_i/d_i$. A deleted and reinserted identity contributes distinct
lifetimes to $D$ and $I$.}
\label{fig:val-transport}
\end{figure}

The lemma holds for every fixed order.  Conditional on an independent
update schedule and on the priorities, it therefore preserves the
exact product-Poisson law throughout a fully dynamic sequence.
At scales $A$ comparable to the optimum, its expected cost is already
$O((c+1)\log N)$ per update.  The remaining issue is to avoid paying
this cost multiplied by $\OPT/A$ at the smaller scales.

\subsubsection{Value windows and independent refresh times}
\label{sec:val-windows}

We give the construction with parameters, and choose them together
in Section~\ref{sec:val-parameters}.  Let
\[
 0<\tau<1/4,\qquad \lambda\ge1,\qquad \kappa\ge2\lambda.
\]
The constants $\lambda,\kappa$ will be absolute.
Cache the positive singleton gain $v_i$ on the first arrival of each
identity and ignore elements with $v_i=0$. By
Lemma~\ref{lem:prelim-singletons}, the remaining active universe satisfies
\begin{equation}
 \OPT(U)\le f_0+\sum_{i\in U}v_i.
 \label{eq:val-singleton-bound}
\end{equation}
An insertion or deletion of an element with $v_i\le A$ changes
the optimum of a restricted universe by at most $A$.

For each dyadic scale $A=2^j\ge f_0$, maintain
\begin{equation}
 W_A=\{i\in U:\tau A/N\le v_i\le A\},\qquad
 \ell_\tau=\left\lceil\log_2(N/\tau)\right\rceil+2.
 \label{eq:val-windows}
\end{equation}
An element belongs to at most $\ell_\tau$ windows.
Create an instance when its first eligible lifetime arrives, and
discard it when the window becomes empty.  An empty window has
the exact zero point process.  At a nonempty window we maintain
$q$ independent priority replicas and Poisson processes.
All replicas of a window use the same refresh schedule.

At the beginning of a period, first make all replicas current.
Draw a fresh independent half-subset $R$ of the window and set
\begin{equation}
 B=\max\{\lambda f(R),f_0,\max_{i\in W_A}v_i\},
 \qquad
 r=\max\{1,\lfloor B/(\kappa A)\rfloor\}.
 \label{eq:val-period}
\end{equation}
Record the next $r$ local updates without transporting the sampler
after each one.  At the $r$th update, use
Lemma~\ref{lem:val-transport} to move directly to the new snapshot,
and start another period with a fresh sample.
Updates outside the window do not advance its local clock.
All refreshes due at a global update are completed before reporting.

The half-sample is used only to choose a period length.  It need
not be an accurate estimate with high probability.  If $m_0$ is the
window optimum at the period start, nonnegative submodular sampling
gives
\begin{equation}
 \E[B\mid\text{history before the fresh sample}]
       \ge\frac{\lambda m_0}{4},
 \qquad B\le\lambda m_0.
 \label{eq:val-period-moments}
\end{equation}
The lower bound follows from Lemma~\ref{lem:prelim-sampling};
the upper bound is deterministic because every sampled set lies in
the current window and $\lambda\ge1$.

The next lemma is the reason that this coarse estimate suffices.
A stale instance must have large optimum relative to its scale.
Thus an instance at a suitable reporting scale is always current,
regardless of the sampling outcome used to choose its period.

\begin{lemma}[Freshness and period length]
\label{lem:val-period}
If $\OPT(W_A)\le A$, the sampler of $W_A$ is current.
Moreover, with $z=m_0/A$ at a period start,
\begin{equation}
 \E[r\mid\text{history before the fresh sample}]
       \ge\frac{\lambda}{16\kappa}(1+z),
 \qquad r\le1+\frac{\lambda}{\kappa}z.
 \label{eq:val-period-length}
\end{equation}
For a stream starting empty, the sum of the full lengths of all
periods started by any fixed time $T\ge1$ is $O(T\ell_\tau)$,
including the final unfinished periods.
\end{lemma}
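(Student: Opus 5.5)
The lemma has three parts, and I would prove them in order: freshness, the bounds on $r$, and the total-length bound. The first two use only \eqref{eq:val-period-moments} and Lemma~\ref{lem:prelim-singletons}. The third needs a charging argument.

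\emph{Freshness.} I would prove the contrapositive: a stale sampler has $\OPT(W_A)>A$, in fact at least $2A$ as the overview indicates. Suppose the sampler is stale at some moment. Then a period is in progress with $r\ge2$ and at least one recorded update, since a period with $r=1$ transports at its first local update. Now $r\ge2$ forces $B\ge2\kappa A$, and $B\le\lambda m_0$ gives $m_0\ge 2\kappa A/\lambda\ge 4A$, using $\kappa\ge2\lambda$. At most $r-1\le(\lambda/\kappa)z\le z/2$ local updates have occurred since the period start. Each involves an element with $v_i\le A$, so by Lemma~\ref{lem:prelim-singletons} each changes the window optimum by at most $A$. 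Hence the current optimum is at least $m_0-(z/2)A=m_0/2\ge 2A>A$. This establishes the contrapositive, and also the sharper ``at least twice its scale'' claim.

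\emph{Period length.} The upper bound is immediate: $r\le\max\{1,B/(\kappa A)\}\le 1+\lambda z/\kappa$.

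For the lower bound, use $\lfloor x\rfloor\ge x/2$ when $x\ge1$, which gives $r\ge\max\{1,B/(2\kappa A)\}\ge\frac12\bigl(1+B/(2\kappa A)\bigr)$. Taking conditional expectations and applying $\E B\ge\lambda m_0/4$ yields
\[
\E r\ge\tfrac12+\tfrac{\lambda z}{16\kappa}.
\]
Since $\lambda\le\kappa$, we have $\tfrac12\ge\lambda/(16\kappa)$, and the stated bound $\frac{\lambda}{16\kappa}(1+z)$ follows.

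\emph{Total length.} Each update touches at most $\ell_\tau$ windows, so the total number of local updates by time $T$ is at most $T\ell_\tau$. Within a window, finished periods have total length at most the number of local updates. The only extra contribution is the final unfinished period, whose full length $r$ may exceed the local updates it has seen.

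I would bound that length deterministically. Since the stream starts empty, $m_0$ at the start of the last period is at most $A$ times the number of local updates so far, by Lemma~\ref{lem:prelim-singletons} applied insertion by insertion, each with $v_i\le A$, plus $f_0\le A$. Hence $r\le 1+(\lambda/\kappa)(1+n_A)\le 2+n_A$, where $n_A$ is the number of local updates to $W_A$ so far. Summing over the windows that have ever been created, each created by a local update, gives $O(\sum_A n_A)=O(T\ell_\tau)$.

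The main obstacle is this last part. One has to be careful that windows which were emptied and recreated, and periods started at a window's creation, are each charged to a local update. I would charge each period start to the local update that triggered it, or to the creating insertion, so that the count of periods is at most the count of local updates.
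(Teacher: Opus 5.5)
Your proof is correct. The freshness argument and both bounds on $r$ are essentially the paper's. The paper uses $\E r\ge\max\{1,\chi z-1\}$ with $\chi=\lambda/(4\kappa)$ and splits at $z=2/\chi$, where you use $\lfloor x\rfloor\ge x/2$ and averaging; these are cosmetic variants.

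The total-length part takes a different route. The paper bounds the final unfinished period through the \emph{current} optimum. The freshness computation gives $m\ge(1-\lambda/\kappa)m_0$, hence $r\le\frac{\lambda}{\kappa-\lambda}\OPT(W_A)/A$. It then sums over scales using the singleton bound $\sum_{A}\OPT(W_A)/A\le 2+2|U|$ and the fact that there are at most $|U|\ell_\tau$ nonempty windows, obtaining $O(|U|\ell_\tau+1)$. You instead bound $m_0$ through the \emph{history}, $m_0\le f_0+An_A\le A(1+n_A)$, so $r\le 2+n_A$, and then sum $n_A$ over windows. Your version avoids the geometric summation over scales and is more elementary. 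The paper's gives a sharper bound that depends only on the current universe size, though Lemma~\ref{lem:val-amortization} needs only $O(T\ell_\tau)$.

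One point you flagged but did not close. The claim that ``the only extra contribution is the final unfinished period'' needs the fact that no period with $r\ge2$ is cut short when its window empties. Otherwise several over-long periods could accumulate at one scale, and charging period \emph{starts} to updates would not bound their \emph{lengths}. This follows at once from your sharper freshness claim: a stale window has optimum at least $2A>A\ge f_0=\OPT(\varnothing)$, so it cannot be empty. The paper makes exactly this remark.
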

\begin{proof}
A stale period has $r\ge2$, so $B\ge2\kappa A$.
Fewer than $r$ local updates have occurred, and each changes the
window optimum by at most $A$.  Its current optimum is therefore
at least
\[
 m_0-rA
 \ge(1-\lambda/\kappa)m_0
 \ge2(\kappa/\lambda-1)A
 \ge2A.
\]
This proves the freshness assertion.

Let $\chi=\lambda/(4\kappa)$, so $0<\chi\le1/8$.
By~\eqref{eq:val-period-moments},
\[
 \E r\ge\max\{1,\chi z-1\}
          \ge\frac{\chi}{4}(1+z).
\]
The last inequality follows by separating $z\le2/\chi$ and
$z>2/\chi$.  The deterministic upper bound in
\eqref{eq:val-period-length} is immediate.

Completed periods account for at most $T\ell_\tau$ local updates.
For an unfinished period with $r\ge2$, write
$m=\OPT(W_A)$ for its current optimum.  The preceding argument gives
$m\ge(1-\lambda/\kappa)m_0$, and hence
\[
 r\le\frac{\lambda}{\kappa-\lambda}\frac mA.
\]
A period of length one contributes at most one.
There are at most $|U|\ell_\tau$ nonempty windows.  Also, by
\eqref{eq:val-singleton-bound} and geometric summation,
\begin{align*}
 \sum_{A:W_A\ne\varnothing}\frac{\OPT(W_A)}A
 &\le\sum_{A\ge f_0}\frac{f_0}{A}
       +\sum_{i\in U}\sum_{A:i\in W_A}\frac{v_i}{A}\\
 &\le2+2|U|.
\end{align*}
The first sum contributes zero when $f_0=0$.
Thus the full lengths of the unfinished periods sum to
$O(|U|\ell_\tau+1)=O(T\ell_\tau)$.
An instance that becomes empty is reset exactly.  A period with
$r\ge2$ cannot become empty while unfinished, by freshness and
$f_0\le A$, so these resets introduce no uncharged periods.
\end{proof}

\begin{lemma}[Expected update complexity]
\label{lem:val-amortization}
For fixed $\lambda,\kappa$ as above, all windows and $q$ replicas
can be maintained through $T$ updates from an initial universe
of size $n_0$ using
\begin{equation}
 O\!\left(1+(n_0+T)q\ell_\tau(c+1)\log(N+1)\right)
 \label{eq:val-total-cost}
\end{equation}
expected queries.
\end{lemma}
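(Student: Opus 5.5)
The plan is to charge all work to period starts and to bound the expected cost of each period by a constant multiple of $q(c+1)\log(N+1)$ times its expected length. We fix one window $W_A$ and one replica, then multiply by $q$. Summing over windows uses the fact that each update is local to at most $\ell_\tau$ windows. The last part of Lemma~\ref{lem:val-period} then bounds the total full length of all periods started by time $T$ by $O(T\ell_\tau)$.

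Before that, we handle preprocessing. The $n_0$ initial elements can be treated as insertions into an empty stream, which turns $T$ into $n_0+T$. Caching the singleton values $v_i$ costs one query per arrival, and the initial sampler of each instance is built by transport from the empty snapshot.

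Next we bound the cost of a single period. Suppose a period starts at a window with optimum $m_0=zA$. It costs one query for $f(R)$ and $O(1)$ further bookkeeping. At its end it performs one transport from the start snapshot $U_0$ to the end snapshot $U_1$. These snapshots are chosen by a schedule that uses only fresh half-samples, so Lemma~\ref{lem:val-transport} applies. Across the $r$ local updates we have $|D|+|I|\le r$. Each local update changes the window optimum by at most $A$, so $\OPT(U_1)\le m_0+rA$. Hence the expected transport cost, conditional on the history and on $r$, is
\[
 O\bigl((r+1+c(2z+r))\log(N+1)\bigr).
\]
The term $1+z$ needs to be paid for. The bound $\E[r\mid\text{history}]\ge\frac{\lambda}{16\kappa}(1+z)$ from \eqref{eq:val-period-length} does this, since $\lambda,\kappa$ are absolute constants. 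Taking conditional expectations, the expected cost of a period is therefore $O((c+1)\log(N+1))$ times its expected full length $\E r$. Stopping-time and tower-property arguments turn this into a bound on total expected cost of $O((c+1)\log(N+1))$ times the expected total full length of all started periods.

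Unfinished periods require care, and this is the step I expect to be the main obstacle. The transport is never performed for them, yet their start still counts. For these periods we charge only the $O(1)$ start queries, which are dominated by their full length. We also want to avoid conditioning problems, because the charge uses $\E r$ while the realized lengths enter the $O(T\ell_\tau)$ bound. Since costs are nonnegative, we bound the sum over all started periods of the conditional expectation of (start cost plus transport cost). Each summand is at most a constant times $(c+1)\log(N+1)$ times $\E[r\mid\text{history}]$, so the total is at most $O((c+1)\log(N+1))$ times the expected sum of realized full lengths. That sum is $O((n_0+T)\ell_\tau)$ deterministically by Lemma~\ref{lem:val-period}. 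Windows that become empty are reset at zero cost. Multiplying by $q$ replicas and adding the additive $O(1)$ gives \eqref{eq:val-total-cost}.
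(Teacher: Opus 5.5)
Your proposal is correct and follows essentially the same route as the paper. You treat the $n_0$ initial elements as virtual insertions and bound each period's transport by $O(q(c+1)(1+z+r)\log(N+1))$ via Lemma~\ref{lem:val-transport}. You then absorb the $1+z$ term with the conditional lower bound on $r$ from \eqref{eq:val-period-length}, which is legitimate because whether a period has started, and its value of $z$, are fixed before the fresh half-sample. Finally, you deliberately include the unfinished periods so that the deterministic $O(T\ell_\tau)$ bound in Lemma~\ref{lem:val-period} applies.

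One phrasing point: for the transport cost, the ``history'' you condition on should be the scheduling randomness only, as in the paper's ``conditional on the independent schedule.'' It should not include the sampler's current Poisson state, because the expected-cost bound of Lemma~\ref{lem:val-transport} averages over that state.
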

\begin{proof}
First consider a stream starting empty.  At a period start let
$z=m_0/A$.  During its $r$ local updates at most $r$ lifetimes
change, and the next endpoint optimum is at most $A(z+r)$.
Conditional on the independent schedule, the expected transport
cost from Lemma~\ref{lem:val-transport} is
\[
 O\bigl(q(c+1)(1+z+r)\log(N+1)\bigr).
\]
The condition that a period has started by time $T$ is determined
before its fresh scheduling sample is drawn.  We may therefore
sum conditional expectations over all such starts and use
\eqref{eq:val-period-length}:
\[
 \E\sum_{\text{starts by }T}(1+z)
 \le\frac{16\kappa}{\lambda}
       \E\sum_{\text{starts by }T}r
 =O(T\ell_\tau).
\]
We have included the final unfinished periods even though their
transport work has not yet been performed.
The last assertion of Lemma~\ref{lem:val-period} bounds this terminal
charge; omitting it would not establish a bound for every prefix.

There are $O(T\ell_\tau)$ period starts and instance creations.
The one scheduling query per period, singleton queries, and the
initialization of each new one-element instance are absorbed in
\eqref{eq:val-total-cost}.  Process the initial universe as $n_0$
virtual insertions before time zero and apply the same argument
to $n_0+T$ updates.  The initial query for $f_0$ accounts for
the additive constant.
\end{proof}

The proof is unconditional.  An inaccurate period estimate may
increase the work of a particular refresh, but the expected period
length pays for that work.  Fresh independence at every period
start is essential for this argument.

\subsubsection{Choosing a current instance}
\label{sec:val-selection}

Let $a_*=\max\{f_0,\max_{i\in U}v_i\}$.  If $a_*=0$, the optimum
is zero by~\eqref{eq:val-singleton-bound}, and the algorithm reports
zero.  Otherwise start at the least dyadic $A_0\ge a_*$ and examine
$A_0,2A_0,4A_0,\ldots$ until the first current window is found.
Currentness is a stored flag, so no oracle queries are used.
An empty or never-created window counts as current.

\begin{lemma}[Selected restriction]
\label{lem:val-selection}
For $M=\OPT(U)>0$, the selector makes $O(\log(N+1))$ flag tests
and chooses an exact current instance at a scale $A$ such that
\begin{equation}
 a_*\le A<2M,\qquad
 \OPT(W_A)\ge(1-2\tau)M.
 \label{eq:val-selected-guarantee}
\end{equation}
Conditional on the separate scheduling randomness, the selected
scale and restriction are independent of every replica's priorities
and Poisson draws.
\end{lemma}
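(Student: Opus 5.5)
We verify the three assertions of \eqref{eq:val-selected-guarantee} in turn, then the flag count, then independence.

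\emph{Lower bound on $\OPT(W_A)$.} Fix an optimum $O\subseteq U$ and any dyadic $A\ge a_*$. Every $i\in U$ has $v_i\le a_*\le A$, so the only elements of $U$ missing from $W_A$ are those with $v_i<\tau A/N$. There are at most $N$ of them. By the second inequality of Lemma~\ref{lem:prelim-singletons} with $C=U$ and $D=W_A$,
\[
 M\le\OPT(W_A)+\sum_{i\in U\setminus W_A}v_i\le\OPT(W_A)+\tau A.
\]
For $A<2M$ this gives $\OPT(W_A)\ge(1-2\tau)M$. The inequality holds for every scale $A\ge a_*$, so it suffices to show that the selected scale satisfies $A<2M$.

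\emph{The selected scale satisfies $a_*\le A<2M$.} Let $A_1$ be the least dyadic scale with $A_1\ge M$; all dyadic scales here are at least $f_0$, since $A_0\ge a_*\ge f_0$. We have $a_*\le M$, because $f_0\le M$ and $f(\{i\})\le M$ give $v_i\le M$. Hence $A_0\le A_1$, and $A_1$ lies in the examined sequence $A_0,2A_0,\ldots$. Also $A_1<2M$, since $M>0$ and $A_1$ is the least dyadic value at least $M$. At $A_1$ we have $\OPT(W_{A_1})\le\OPT(U)=M\le A_1$, because $W_{A_1}\subseteq U$. Lemma~\ref{lem:val-period} then says that the sampler of $W_{A_1}$ is current; an empty or never-created window counts as current by convention. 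So the scan stops at some $A\le A_1<2M$.

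\emph{Number of flag tests.} This is the step needing the most care, because it requires a lower bound on $a_*$ relative to $M$. The first inequality of Lemma~\ref{lem:prelim-singletons} gives $M\le f_0+\sum_i v_i\le(N+1)a_*$. The scan therefore examines at most $\log_2(A_1/A_0)+1\le\log_2\bigl(2(N+1)\bigr)+1=O(\log(N+1))$ scales. Each test reads a stored flag and uses no oracle queries.

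\emph{Independence.} The selector reads only currentness flags, together with $a_*$ and the cached singleton gains. The cached gains depend on $f$ and the update sequence alone. Each flag is determined by the refresh schedule, that is, by local update counts and by the $r$ values computed from the fresh half-samples $R$ and from $v_i,f_0$. Lemma~\ref{lem:val-transport} requires its snapshot choices to be fixed or driven by independent randomness, and its transport steps never feed back into the schedule. Hence, conditional on the scheduling randomness, the pair $(A,W_A)$ is a deterministic function of the update sequence. It is therefore independent of every replica's priorities and Poisson draws, and the selected instance carries the exact product-Poisson law of \eqref{eq:val-poisson-invariant} at the current $W_A$.
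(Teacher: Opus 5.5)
Your proposal is correct and follows the paper's proof essentially step for step: the least dyadic scale $A'\ge M$ is current by Lemma~\ref{lem:val-period}, the bound $M\le(N+1)a_*$ limits the scan, the gains below the lower threshold sum to at most $\tau A<2\tau M$, and the selector uses only scheduling randomness. Two of your steps differ only cosmetically from the paper: you check explicitly that $a_*\le M$, so that $A'$ lies in the scanned sequence, which the paper leaves implicit; and you invoke Lemma~\ref{lem:prelim-singletons} where the paper applies the submodularity bound on $O\setminus W_A$ directly.
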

\begin{proof}
Let $A'$ be the least dyadic scale at least $M$.
Its window has optimum at most $A'$, so it is current by
Lemma~\ref{lem:val-period}.  The selector therefore stops at
$A\le A'<2M$.
Also $M\le f_0+\sum_i v_i\le(N+1)a_*$, which bounds the number
of scales examined.

Since all $v_i\le A$, the upper window threshold omits no element.
The sum of the gains below the lower threshold is at most
$\tau A<2\tau M$.  If $O$ is optimal, submodularity gives
\[
 f(O)-f(O\cap W_A)\le
 \sum_{i\in O\setminus W_A}v_i<2\tau M.
\]
This proves the approximation of the restriction.
Finally, the schedule and the selector depend only on the update
sequence, singleton values, and fresh independent half-samples.
They use no priority or Poisson randomness.
\end{proof}

The selected scale may be smaller than the optimum.  This is
harmless for reporting because the total point count is already
stored.  We never scan the selected sample merely to read its
total.  A confidence correction that depends on this count will
handle every possible ratio $M/A$.

\subsubsection{Reporting and the choice of parameters}
\label{sec:val-parameters}

We use the following elementary consequence of Poisson concentration.

\begin{lemma}[Poisson lower estimate]
\label{lem:val-confidence}
For $K\sim\operatorname{Poi}(\mu)$ and $x>0$, put
$R(K,x)=2\sqrt{Kx}+4x$.
Except with probability $2e^{-x}$,
\[
 |K-\mu|\le R(K,x),\qquad
 R(K,x)\le3\sqrt{\mu x}+7x.
\]
\end{lemma}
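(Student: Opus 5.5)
We combine the standard Poisson (Bernstein/Chernoff) tail bounds with some elementary algebra that swaps the unknown mean $\mu$ for the observed count $K$.

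\emph{Tail bound.} For $K\sim\Poi(\mu)$ and $x>0$, the Chernoff bound gives
\[
\Pr\{K\ge\mu+\sqrt{2\mu x}+x\}\le e^{-x}
\qquad\text{and}\qquad
\Pr\{K\le\mu-\sqrt{2\mu x}\}\le e^{-x}.
\]
The upper tail comes from the Bernstein form $\mu h(t/\mu)\ge t^2/(2(\mu+t/3))$, where $h(u)=(1+u)\log(1+u)-u$. The lower tail is sub-Gaussian with variance proxy $\mu$. So outside an event of probability at most $2e^{-x}$,
\[
|K-\mu|\le\sqrt{2\mu x}+x.
\]

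\emph{Replacing $\mu$ by $K$.} Treat $|K-\mu|\le\sqrt{2\mu x}+x$ as a quadratic inequality in $\sqrt\mu$. It gives $\mu\le K+\sqrt{2\mu x}+x$, so
\[
\sqrt\mu\le\sqrt{x/2}+\sqrt{x/2+K+x}\le\sqrt K+(\sqrt{1/2}+\sqrt{3/2})\sqrt x.
\]
Substituting back,
\[
\sqrt{2\mu x}+x\le\sqrt2\sqrt{Kx}+\sqrt2(\sqrt{1/2}+\sqrt{3/2})x+x=\sqrt2\sqrt{Kx}+(2+\sqrt3)x.
\]
Since $\sqrt2\le2$ and $2+\sqrt3<4$, this is at most $R(K,x)$. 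Hence $|K-\mu|\le R(K,x)$ on the good event.

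\emph{Bounding $R$ by $\mu$.} On the same event, $K\le\mu+\sqrt{2\mu x}+x\le(\sqrt\mu+\sqrt x)^2$, so $\sqrt K\le\sqrt\mu+\sqrt x$. Then
\[
R(K,x)\le2\sqrt{x}(\sqrt\mu+\sqrt x)+4x=2\sqrt{\mu x}+6x\le3\sqrt{\mu x}+7x.
\]

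\emph{Main obstacle.} The only delicate point is the exact form of the upper tail. The step $\mu h(t/\mu)\ge t^2/(2\mu+2t/3)$ yields the deviation $\sqrt{2\mu x}+\tfrac{2}{3}x$, which is stronger than the deviation $\sqrt{2\mu x}+x$ used above. If that standard inequality feels too loose to cite, I would instead prove $\Pr\{K\ge\mu+t\}\le\exp(-t^2/(2\mu+2t))$ directly from the moment generating function. That version gives the deviation $\sqrt{2\mu x}+2x$. The constants $4$ and $7$ in $R$ leave enough slack for this: the inversion becomes $\sqrt\mu\le\sqrt K+c\sqrt x$ with $c=\sqrt{1/2}+\sqrt{5/2}\approx2.29$, so $\sqrt{2\mu x}\le\sqrt2\sqrt{Kx}+3.24x$ and $|K-\mu|\le\sqrt2\sqrt{Kx}+5.24x$. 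This exceeds $4x$ when $K$ is small, so the constants would need checking carefully. That is why I would prefer the Bernstein form with $x$ rather than $2x$.
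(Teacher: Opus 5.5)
Your proof is correct and takes the same route as the paper. Both apply a Poisson Chernoff/Bernstein tail bound (the paper uses the deviation $\sqrt{2\mu x}+2x/3$, which, as you note, implies your $\sqrt{2\mu x}+x$), then use the lower-tail side to trade $\sqrt\mu$ for $\sqrt K$ and the upper-tail side to trade $K$ for $\mu$; your exact quadratic inversion and $\sqrt K\le\sqrt\mu+\sqrt x$ are slightly tighter forms of the paper's $\sqrt\mu\le\sqrt K+2\sqrt x$ and $K\le2\mu+2x$. Your closing worry about the cruder $\sqrt{2\mu x}+2x$ tail is unnecessary. The lower tail has no linear term and by itself gives $\sqrt\mu\le\sqrt K+\sqrt{2x}$, with which the constants $4$ and $7$ still suffice.
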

\begin{proof}
The Poisson Chernoff bound gives
$|K-\mu|\le\sqrt{2\mu x}+2x/3$ except with probability $2e^{-x}$.
Its lower-tail side implies
$\sqrt\mu\le\sqrt K+2\sqrt x$, and its upper-tail side implies
$K\le2\mu+2x$.  Substitution gives the two assertions.
\end{proof}

Let $W=W_A$ be the selected restriction.  Replica $j$ has a stored
count $K_j$ and a true certificate
$Q_j=f_0+W_j^+(W)$.  Report
\begin{equation}
 L_t=\max_{1\le j\le q}
 \left\{f_0+\frac Ac[K_j-R(K_j,x)]_+\right\}.
 \label{eq:val-report}
\end{equation}
On its confidence event, replica $j$'s contribution
$\underline Q_j$ satisfies
\begin{align*}
 0\le Q_j-\underline Q_j
 &\le\frac{2A}{c}R(K_j,x)\\
 &\le6\sqrt{\frac{A W_j^+(W)x}{c}}+\frac{14Ax}{c}\\
 &\le\omega(c,x)M,\qquad
 \omega(c,x):=6\sqrt{\frac{2x}{c}}+\frac{28x}{c}.
\end{align*}
Here we used only $A<2M$ and $W_j^+(W)\le M$.
Thus the correction does not require $A$ to upper-bound the optimum.

The remaining loss comes from obtaining a sufficiently good random
order.  Let $0<\zeta<\alpha$.  Since
$Q_j\in[0,\OPT(W)]$ and $\E Q_j\ge\alpha\OPT(W)$,
\[
 \Pr\{Q_j\ge(\alpha-\zeta)\OPT(W)\}
 \ge\frac{\zeta}{1-\alpha+\zeta}\ge\zeta.
\]
Conditional on the schedule, the $q$ orders are independent and
uniform on the selected restriction.  Their maximum therefore
meets this bound except with probability $e^{-q\zeta}$.
Conditional also on the priorities, Lemma~\ref{lem:val-confidence}
applies to the exact Poisson laws.  A union bound over the replicas
gives the following parameterized statement.

\begin{proposition}[Value estimator with parameters]
\label{prop:val-parameters}
At each fixed update time, with probability at least
$1-e^{-q\zeta}-2qe^{-x}$, the report satisfies
\begin{equation}
 \bigl(\alpha-\zeta-2\alpha\tau-\omega(c,x)\bigr)\OPT(U_t)
 \le L_t\le\OPT(U_t).
 \label{eq:val-parametric-accuracy}
\end{equation}
Its total expected query bound is~\eqref{eq:val-total-cost}.
\end{proposition}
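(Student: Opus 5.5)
The proof assembles the pieces already established and checks the two inequalities in \eqref{eq:val-parametric-accuracy} on a single good event. Fix an update time $t$ and write $M=\OPT(U_t)$. If $a_*=0$, the report is zero and $M=0$ by \eqref{eq:val-singleton-bound}, so both inequalities hold trivially. Otherwise Lemma~\ref{lem:val-selection} supplies a current exact instance at a scale $A$ with $a_*\le A<2M$ and $\OPT(W)\ge(1-2\tau)M$, where $W=W_A$. Crucially, conditional on the scheduling randomness, $A$ and $W$ are independent of the replica priorities and Poisson draws. Condition on the schedule. Then $W$ is a fixed set, each replica's order is uniform on $W$, and by Lemma~\ref{lem:val-transport} (applied along the fixed-in-advance snapshot sequence) each count satisfies $K_j\sim\Poi(cW_j^+(W)/A)$ given the priorities.

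Define two events. Let $\mathcal E_1$ be the event that some replica has $Q_j\ge(\alpha-\zeta)\OPT(W)$. Theorem~\ref{thm:val-chain} gives $\E Q_j\ge\alpha\OPT(W)$ (dropping the nonnegative endpoint terms), and $Q_j\le\OPT(W)$ by Lemma~\ref{lem:val-modular}. The reverse Markov argument in the text then gives per-replica success probability at least $\zeta$. Independence of the $q$ replicas gives $\Pr(\mathcal E_1^c)\le(1-\zeta)^q\le e^{-q\zeta}$. Let $\mathcal E_2$ be the event that every replica satisfies the confidence bound of Lemma~\ref{lem:val-confidence} with parameter $x$. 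Applying that lemma conditionally on priorities and taking a union bound gives $\Pr(\mathcal E_2^c)\le 2qe^{-x}$. Removing the conditioning yields the stated probability $1-e^{-q\zeta}-2qe^{-x}$.

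On $\mathcal E_2$, each $\underline Q_j:=f_0+\frac Ac[K_j-R(K_j,x)]_+$ satisfies $\underline Q_j\le Q_j$. This holds because $K_j-R\le\mu_j$, the positive part is monotone, and $\mu_j\ge0$. So $\underline Q_j\le f_0+W_j^+(W)=Q_j\le\OPT(W)\le M$, which gives the upper bound $L_t\le M$. The lower bound uses the displayed estimate $Q_j-\underline Q_j\le\omega(c,x)M$. That estimate needs only $A<2M$ and $W_j^+(W)\le M$, together with $R(K_j,x)\le3\sqrt{\mu_jx}+7x$. One point must be checked: the bound $|K_j-\mu_j|\le R$ should yield $Q_j-\underline Q_j\le\frac Ac\cdot 2R$ even when the positive part truncates. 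It does, since $\frac Ac\mu_j-\frac Ac[K_j-R]_+\le\frac Ac(\mu_j-K_j+R)\le\frac{2AR}{c}$. On $\mathcal E_1\cap\mathcal E_2$, choosing the good replica gives
\[
L_t\ge(\alpha-\zeta)(1-2\tau)M-\omega(c,x)M\ge(\alpha-\zeta-2\alpha\tau-\omega(c,x))M,
\]
using $(\alpha-\zeta)2\tau\le2\alpha\tau$.

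The query bound is exactly Lemma~\ref{lem:val-amortization}, because reporting uses only stored counts and flags and so makes zero oracle queries. The main subtlety is the conditioning. We must justify that selecting $A$ based on currentness flags does not bias the Poisson law or the order distribution on the selected window. Freshness depends only on the schedule, and the schedule depends only on fresh half-samples independent of the priorities and Poisson draws. We can therefore condition on the schedule first and treat $W$ as deterministic, which is precisely the independence clause of Lemma~\ref{lem:val-selection}.
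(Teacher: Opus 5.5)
Your proposal is correct and follows the paper's argument. You condition on the independent schedule so that the selected window is fixed and the replica orders and Poisson counts keep their exact laws, apply reverse Markov to the orders and Lemma~\ref{lem:val-confidence} with a union bound to the counts, and combine with~\eqref{eq:val-selected-guarantee} by dropping $2\tau\zeta\OPT(U_t)$. Your extra checks of the case $a_*=0$ and of the truncation in $[K_j-R(K_j,x)]_+$ spell out details the paper leaves implicit, and the query bound comes from Lemma~\ref{lem:val-amortization} exactly as in the paper.
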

\begin{proof}
On the intersection of the priority and confidence events,
\[
 \OPT(U_t)\ge L_t
 \ge(\alpha-\zeta)\OPT(W)-\omega(c,x)\OPT(U_t).
 \]
Apply~\eqref{eq:val-selected-guarantee} and drop the nonnegative
term $2\tau\zeta\OPT(U_t)$.
The probability bounds above hold conditionally on the independent
schedule, so they also hold without conditioning.
Lemma~\ref{lem:val-amortization} supplies the query bound.
\end{proof}

We now choose the parameters once.  For
$0<\varepsilon<1/10$ and $0<\delta<1/2$, take
\begin{equation}
 \begin{gathered}
 \lambda=4,\qquad \kappa=4\lambda,\qquad
 \tau=\varepsilon/8,\qquad \zeta=\varepsilon/4,\\
 q=\left\lceil\zeta^{-1}\log(2/\delta)\right\rceil,\qquad
 x=\log(4q/\delta),\qquad
 c=(32/\varepsilon)^2x.
 \end{gathered}
 \label{eq:val-parameter-choice}
\end{equation}
These choices satisfy $\kappa\ge2\lambda$ and all the stated range
conditions.  They also give
\[
 e^{-q\zeta}\le\delta/2,\qquad 2qe^{-x}=\delta/2,
 \qquad
 \zeta+2\alpha\tau<\varepsilon/2.
\]
Finally,
\[
 \omega(c,x)
 =\frac{3\sqrt2}{16}\varepsilon
       +\frac7{256}\varepsilon^2
 <\varepsilon/2.
\]
Substitution into Proposition~\ref{prop:val-parameters} proves
the value-estimation theorem below.

\begin{theorem}[Fully dynamic value estimation]
\label{thm:val-dynamic}
For every $0<\varepsilon<1/10$ and $0<\delta<1/2$, there is a
fully dynamic algorithm such that, at each fixed time $t$,
\[
 \Pr\!\left\{
   (\sqrt2-1-\varepsilon)\OPT(U_t)\le L_t\le\OPT(U_t)
       \right\}\ge1-\delta.
\]
It uses
\[
 O\!\left(
 \varepsilon^{-3}\log(1/\delta)\log(q/\delta)
 \log(N/\varepsilon)\log(N+1)\right)
 \]
expected amortized oracle queries per update, with the same bound
per initial element in preprocessing.  For fixed
$\varepsilon,\delta$, this is $O(\log^2(N+1))$.
\end{theorem}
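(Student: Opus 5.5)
The plan is to obtain the theorem by substituting the parameter choices~\eqref{eq:val-parameter-choice} into Proposition~\ref{prop:val-parameters} and Lemma~\ref{lem:val-amortization}. Nothing new is needed beyond checking the numerical inequalities and simplifying the cost expression.

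\emph{Accuracy.} First we confirm the range conditions used earlier: $0<\tau=\varepsilon/8<1/4$, $\lambda=4\ge1$, $\kappa=16\ge2\lambda$, and $0<\zeta=\varepsilon/4<\alpha$. Then we check the three bounds already displayed.
\begin{itemize}
\item $e^{-q\zeta}\le e^{-\log(2/\delta)}=\delta/2$, since $q\zeta\ge\log(2/\delta)$.
\item $2qe^{-x}=2q\cdot\delta/(4q)=\delta/2$.
\item $\zeta+2\alpha\tau=\varepsilon/4+\alpha\varepsilon/4<\varepsilon/2$, since $\alpha<1$.
\end{itemize}
For $\omega$, the choice $c/x=(32/\varepsilon)^2$ gives $\sqrt{2x/c}=\sqrt2\,\varepsilon/32$ and $x/c=\varepsilon^2/1024$. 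Hence $\omega=\tfrac{3\sqrt2}{16}\varepsilon+\tfrac{7}{256}\varepsilon^2<0.27\varepsilon+0.003\varepsilon<\varepsilon/2$, using $\varepsilon<1/10$. Proposition~\ref{prop:val-parameters} then gives failure probability at most $\delta$ and a lower factor of at least $\alpha-\varepsilon/2-\varepsilon/2=\sqrt2-1-\varepsilon$. The upper bound $L_t\le\OPT(U_t)$ is part of the proposition.

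\emph{Query bound.} We insert the parameters into~\eqref{eq:val-total-cost}. This gives $q=O(\varepsilon^{-1}\log(2/\delta))$ and $c+1=O(\varepsilon^{-2}\log(4q/\delta))$. Since $\log q=O(\log(1/\varepsilon)+\log\log(2/\delta))$, we have $\log(4q/\delta)=O(\log\frac{2}{\varepsilon\delta})$. Also $\ell_\tau=\lceil\log_2(8N/\varepsilon)\rceil+2=O(\log\frac{N+1}{\varepsilon})$. The product is $O(\varepsilon^{-3}\log\frac2\delta\log\frac{2}{\varepsilon\delta}\log\frac{N+1}{\varepsilon}\log(N+1))$ per update or initial element, plus $O(1)$. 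Dividing by $n_0+T$ gives the amortized statement; the preprocessing is treated as $n_0$ virtual insertions, as in Lemma~\ref{lem:val-amortization}. For fixed $\varepsilon,\delta$, every factor except $\log\frac{N+1}{\varepsilon}$ and $\log(N+1)$ is constant, which yields $O(\log^2(N+1))$.

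The main point needing care, though it is minor, is the pointwise-in-time validity of the probability statement. Proposition~\ref{prop:val-parameters} already conditions on the independent schedule and on the selector of Lemma~\ref{lem:val-selection}, so the union bound over the order event and the $q$ Poisson confidence events applies at each fixed $t$.
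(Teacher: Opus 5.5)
Your proposal is correct and matches the paper's proof. The paper also proves the theorem by substituting the parameter choices~\eqref{eq:val-parameter-choice} into Proposition~\ref{prop:val-parameters}, whose query bound comes from Lemma~\ref{lem:val-amortization}, and it checks the same three inequalities together with $\omega(c,x)<\varepsilon/2$; your simplification of the query count is the form recorded in Theorem~\ref{thm:main-value}.
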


The guarantee has no dependence on the stream length or on the
range of nonzero oracle values.  It holds for every fixed prefix
of an arbitrary oblivious sequence of insertions, deletions, and
reinsertions.  If simultaneous accuracy through a prescribed
horizon $H$ is required, replace $\delta$ by $\delta/(H+1)$ and
take a union bound over the reporting times.  This introduces
only logarithmic factors in $H$ into the query bound.

\subsection{Limitation of the uniform-chain certificate}
\label{sec:val-sharpness}

The factor in Theorem~\ref{thm:val-chain} is tight even on directed
cut functions.  The example also separates the value of the
certificate from the value of its positive support.  Thus an
improvement cannot come solely from a sharper analysis of the same
statistic, although a different statistic or a different ordering
may avoid the example.

For $b\in[0,1]$, take disjoint groups $X,Y$, each of size $m$,
and define
\begin{equation}
 f_m(S)=x(1-y)+b\,y(1-x),\qquad
 x=\frac{|S\cap X|}{m},\qquad y=\frac{|S\cap Y|}{m}.
 \label{eq:val-tight-family}
\end{equation}
This is the outgoing cut function of a bidirected complete bipartite
graph: each edge from $X$ to $Y$ has weight $1/m^2$, and each
reverse edge has weight $b/m^2$.
It is nonnegative and submodular, and
$f_m(\varnothing)=f_m(X\cup Y)=0$.
The displayed function is affine in each of $x,y$, so its maximum
on the unit square is attained at a corner.  Hence its optimum
is $f_m(X)=1$.

In a uniform order, a prefix at priority $p$ contains approximately
a $p$ fraction of each group.  Put $s=1+b$.
The corresponding marginals, multiplied by $m$, approach
$1-sp$ on $X$ and $b-sp$ on $Y$.  The positive-chain value is
therefore governed by the two areas
\begin{equation}
 g(b):=\int_0^1(1-sp)_+\,dp+\int_0^1(b-sp)_+\,dp
      =\frac{1+b^2}{2(1+b)}.
 \label{eq:val-tight-limit}
\end{equation}
We make both the limit and the effect of replication explicit.

\begin{proposition}[Tight family]
\label{prop:val-sharpness}
Let $Q_m$ be the uniform-chain certificate of
\eqref{eq:val-tight-family}, and let $s=1+b$.  Then
\[
 g(b)\le\E Q_m\le g(b)+\frac{\pi s}{4\sqrt m},
 \qquad
 \Pr\{Q_m>\E Q_m+t\}\le
       \exp(-mt^2/s^2).
 \]
Consequently, for any $K$ uniform-priority replicas and
$\delta\in(0,1)$, with probability at least $1-\delta$,
\begin{equation}
 \max_{1\le r\le K}Q_m^{(r)}
 \le g(b)+\frac{\pi s}{4\sqrt m}
              +s\sqrt{\frac{\log(K/\delta)}m}.
 \label{eq:val-portfolio-bound}
\end{equation}
Independence between the replicas is unnecessary; each replica
must have independent uniform priorities within its own order.
\end{proposition}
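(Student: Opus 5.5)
The plan is to compute the chain marginals of $f_m$ exactly, then treat the expectation bounds and the concentration bound separately. We realize the order by independent uniform priorities $\pi_k$ on the $2m$ elements, and we use $f_m(\varnothing)=0$ throughout.

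\textbf{Exact marginals.} Since $f_m$ is affine in $x$ for fixed $y$, adding $i\in X$ to a prefix with $Y$-fraction $y$ changes the value by exactly $\frac1m\bigl((1-y)-by\bigr)=(1-sy)/m$. Similarly, adding $j\in Y$ to a prefix with $X$-fraction $x$ changes it by exactly $(b-sx)/m$. Writing $Y_i=|Y\cap P_i|/m$ and $X_j=|X\cap P_j|/m$, we get the closed form
\[
 Q_m=\frac1m\sum_{i\in X}(1-sY_i)_+ +\frac1m\sum_{j\in Y}(b-sX_j)_+ .
\]

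\textbf{Expectation bounds.} Conditional on $\pi_i=p$, the prefix count satisfies $mY_i\sim\mathrm{Bin}(m,p)$, so $\E[Y_i\mid\pi_i=p]=p$.
\begin{itemize}
\item For the lower bound, the map $y\mapsto(1-sy)_+$ is convex, so Jensen's inequality gives $\E[(1-sY_i)_+\mid p]\ge(1-sp)_+$. The same holds for the $Y$-terms. Integrating over $p$ and summing the $m$ identical terms in each group yields $\E Q_m\ge g(b)$ by \eqref{eq:val-tight-limit}.
\item For the upper bound, the positive part is $1$-Lipschitz. Hence $\E[(1-sY_i)_+\mid p]\le(1-sp)_++s\,\E|Y_i-p|\le(1-sp)_++s\sqrt{p(1-p)/m}$, by Cauchy--Schwarz on the binomial variance.
\item Since $\int_0^1\sqrt{p(1-p)}\,dp=\pi/8$, each group contributes an excess of at most $\pi s/(8\sqrt m)$. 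The two groups together give $\pi s/(4\sqrt m)$.
\end{itemize}

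\textbf{Concentration.} We apply McDiarmid's inequality to $Q_m$ as a function of the $2m$ independent priorities. The target is a bounded-difference constant $s/m$ per coordinate: then $\sum_k c_k^2=2s^2/m$, and McDiarmid gives $\exp(-2t^2/(2s^2/m))=\exp(-mt^2/s^2)$.

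The naive bound (own term plus up to $m$ cross terms) is too large, so this step is the main obstacle. The fix is a sign cancellation. Suppose $k\in X$ moves from priority $p$ to a later priority $p'$, and let $n$ be the number of $Y$-elements with priority in $(p,p')$.
\begin{itemize}
\item Those $n$ elements lose $k$ from their prefix. Each of their terms weakly increases, by at most $s/m^2$, so the total increase is at most $sn/m^2$.
\item The own term of $k$ gains those $n$ elements in its prefix. It weakly decreases, by at most $\min\{1/m,\,sn/m^2\}$.
\end{itemize}
Because the two changes have opposite signs, $|\Delta Q_m|\le\max\{sn/m^2,\,1/m\}\le s/m$, using $n\le m$ and $s\ge1$. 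The case of moving earlier is symmetric. For $k\in Y$ the same argument applies with own term in $[0,b/m]$ and cross terms on $X$.

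\textbf{Replicas.} Each replica $Q_m^{(r)}$ has the same law as $Q_m$. Take $t=s\sqrt{\log(K/\delta)/m}$ in the tail bound, so that each replica exceeds $\E Q_m+t$ with probability at most $\delta/K$. A union bound over the $K$ replicas, which needs no independence between them, combined with the upper bound on $\E Q_m$, gives \eqref{eq:val-portfolio-bound}.
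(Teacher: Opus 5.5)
Your proposal is correct and follows the paper's proof essentially step for step. It uses the same binomial conditioning with Jensen, the Lipschitz bound and $\int_0^1\sqrt{p(1-p)}\,dp=\pi/8$ for the expectation, the same opposite-sign cancellation giving bounded-difference constant $s/m$ for McDiarmid over the $2m$ priorities, and the same per-replica union bound (your $\max\{sn/m^2,1/m\}\le s/m$ is slightly looser than the paper's observation that both changes lie in $[0,sn/m^2]$, but gives the same constant).
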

\begin{proof}
Condition on the priority $p$ of an element in $X$.
The number $Z$ of preceding $Y$ elements is
$\operatorname{Bin}(m,p)$, and its marginal is
$(1-sZ/m)/m$.  Similarly, a $Y$ marginal is
$(b-sZ'/m)/m$, where $Z'$ has the same binomial distribution.
Thus
\[
 \E Q_m=
 \int_0^1\E(1-sZ/m)_+\,dp+
 \int_0^1\E(b-sZ/m)_+\,dp.
 \]
For either $a\in\{1,b\}$, convexity and the Lipschitz property of
the positive part imply
\[
 0\le\E(a-sZ/m)_+-(a-sp)_+
 \le s\E|Z/m-p|
 \le s\sqrt{p(1-p)/m}.
 \]
Integrating and using $\int_0^1\sqrt{p(1-p)}\,dp=\pi/8$
proves the expectation bound.

Changing one priority changes the certificate by at most $s/m$.
Indeed, suppose an $X$ element moves later and crosses $k$
elements of $Y$.  Its own positive contribution decreases by
an amount in $[0,sk/m^2]$.  The sum of the crossed elements'
positive contributions increases by an amount in the same
interval.  All other terms are unchanged, so the absolute change
in the total is at most $sk/m^2\le s/m$.
Moving earlier reverses the comparison, and the argument for a
$Y$ element is identical.
The bounded-differences inequality~\cite{McDiarmid89}, applied to
the $2m$ independent priorities, gives
\[
 \Pr\{Q_m-\E Q_m>t\}
 \le\exp\!\left(-\frac{2t^2}{2m(s/m)^2}\right).
 \]
Apply this bound to each replica with
$t=s\sqrt{\log(K/\delta)/m}$ and take a union bound.
\end{proof}

\paragraph{Choice of the edge-weight parameter.}
Differentiating~\eqref{eq:val-tight-limit} shows that its minimum
on $[0,1]$ occurs at $b=\sqrt2-1$.  At this choice,
$b^2+2b=1$, so $g(b)=b=\alpha$.
Equation~\eqref{eq:val-portfolio-bound} implies that the best of
polynomially many uniform replicas is at most
$\alpha+o(1)$ with high probability.
Thus even replication cannot improve the approximation ratio of
the uniform raw-chain certificate by a fixed constant.

The value windows do not discard either group on this example.
Its singleton gains are $1/m$ and $b/m$.  For the choice
in~\eqref{eq:val-parameter-choice}, $\tau<b$.
Any selected scale satisfies $1/m\le A<2$; since $N\ge2m$,
its lower threshold is
$\tau A/N<\tau/m<b/m$.
Consequently its retained window is the entire universe.

This conclusion concerns the certificate, rather than the optimum
value attainable by other algorithms.  For example, placing all
of $X$ before all of $Y$ gives certificate one.
Moreover, under a uniform order, the positive support has limiting
fractions $x=1/(1+b)$ and $y=b/(1+b)$, by uniform convergence of
the empirical priority distributions.  Its limiting value is
\[
 \frac{1+b^3}{(1+b)^2},
\]
which is strictly larger than $\alpha$ at the minimizing choice
of $b$.  The certificate loses value because an element's raw
marginal is evaluated in the presence of preceding elements that
will ultimately be omitted.

The dynamic sampler does not construct this entire support.
For instance, on the modular function $f(S)=|S|/N$, the full
certificate equals one, while a Poisson sample of bounded intensity
at scale one reveals only a bounded expected number of identities.
Its marked set has expected value at most $c/N$.
Estimating the positive mass therefore does not, by itself, maintain
a feasible set with comparable value.  The next part develops
algorithms that maintain such a set.

\section{Maintaining solutions with square-root updates}
\label{set-sec-root-algorithms}
\label{sec:root-updates}

We now require the data structure to maintain a feasible set.
All queries in this section involve only currently active elements.
The algorithm has a decremental core and a small insertion buffer.
We maintain a solution on the core and reoptimize its extension to
the buffer after every update. Choosing a square-root rebuilding
period balances the cost of maintaining the core against the cost
of optimizing the buffer.

Two ingredients come first. Weighted double greedy combines the
core with its insertion buffer. A decremental mask structure
maintains the core's chain solution; we will also use this structure
in the general tradeoff. We keep the parameters symbolic until
Subsection~\ref{set-sec-root-parameters}.

\subsection{Double greedy with a forced base}
We require an endpoint-sensitive form of randomized double
greedy. The algorithm is the usual one~\cite{BFNS12,BFNS15};
we include the weighted inequality to make its later use explicit.

Fix disjoint sets $L,D$. Initialize $X=L$ and $Y=L\cup D$,
and process the elements of $D$ in any fixed order.
For the current element $i$, let
\[
 g=f(X\cup\{i\})-f(X),\qquad
 h=f(Y\setminus\{i\})-f(Y).
\]
Add $i$ to $X$ with probability $g_+/(g_++h_+)$ and otherwise
remove it from $Y$, using probability $1/2$ when both positive
parts vanish. The final common set $T=X=Y$ contains $L$ and
is contained in $L\cup D$. The procedure uses $O(|D|+1)$
value queries.

\begin{lemma}[Weighted double greedy]
\label{lem:prelim-rdg}
For every $L\subseteq Q\subseteq L\cup D$ and every
$\theta,\phi\ge0$ satisfying $\theta\phi\ge1/4$,
\[
 (1+\theta+\phi)\E f(T)
 \ge f(Q)+\theta f(L)+\phi f(L\cup D).
\]
The inequality remains valid conditionally when $L,D,Q$ are
fixed before the fresh randomness of double greedy is drawn.
\end{lemma}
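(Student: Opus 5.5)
We follow the standard potential argument of Buchbinder, Feldman, Naor, and Schwartz, replacing the symmetric AM--GM step by a weighted one. Let $D=\{d_1,\dots,d_n\}$ be listed in processing order, and let $X_k,Y_k$ denote the sets after $k$ steps, so that $X_0=L$, $Y_0=L\cup D$, and $X_n=Y_n=T$. Fix $L\subseteq Q\subseteq L\cup D$ and introduce the hybrid comparator
\[
 Q_k=(Q\cup X_k)\cap Y_k ,
\]
which satisfies $Q_0=Q$ and $Q_n=T$. Define the potential
\[
 \Phi_k=f(Q_k)+\theta f(X_k)+\phi f(Y_k).
\]
Then $\Phi_0=f(Q)+\theta f(L)+\phi f(L\cup D)$ and $\Phi_n=(1+\theta+\phi)f(T)$. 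The lemma therefore follows by telescoping once we prove the one-step inequality
\[
 \E[\Phi_k-\Phi_{k-1}\mid\text{history}]\ge 0 ,
\]
i.e., $\E[f(Q_{k-1})-f(Q_k)]\le \theta\,\E[f(X_k)-f(X_{k-1})]+\phi\,\E[f(Y_k)-f(Y_{k-1})]$. The conditional version of the statement is immediate, since only the fresh coins of double greedy are averaged.

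For the one-step bound, fix step $k$ with element $i$ and gains $g,h$ as defined in the algorithm. Two facts are standard. First, $g+h\ge 0$ by submodularity, because $X_{k-1}\subseteq Y_{k-1}\setminus\{i\}$. Second, the change in the comparator is controlled by these gains. If $i\in Q$ and the algorithm removes $i$, the loss $f(Q_{k-1})-f(Q_{k-1}\setminus\{i\})$ is at most $g$, since $X_{k-1}\subseteq Q_{k-1}\setminus\{i\}$. If $i\notin Q$ and the algorithm adds $i$, the loss is at most $h$, since $Q_{k-1}\subseteq Y_{k-1}\setminus\{i\}$. In every other outcome, $Q_k=Q_{k-1}$.

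We then split into cases on the signs of $g$ and $h$. If $h\le 0$, then $g\ge0$ and the algorithm adds $i$ deterministically. The $\theta$-term gains $\theta g\ge 0$ and the $\phi$-term is unchanged. The comparator loss is at most $h\le0$ when $i\notin Q$ and is zero when $i\in Q$, so the step is nonnegative. The case $g\le0$ is symmetric: the algorithm removes $i$, the $\phi$-term gains $\phi h\ge0$, and the comparator loss is at most $g\le0$. The degenerate case $g=h=0$, where the probability $1/2$ is used, is covered by both arguments, since every term vanishes.

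The main case is $g,h>0$. Here the expected potential gain is
\[
 \frac{\theta g^2+\phi h^2}{g+h},
\]
and the expected comparator loss is at most $gh/(g+h)$: this is $g\cdot h/(g+h)$ when $i\in Q$, and $h\cdot g/(g+h)$ when $i\notin Q$. It therefore suffices to show $gh\le \theta g^2+\phi h^2$. This follows from the weighted AM--GM inequality $\theta g^2+\phi h^2\ge 2\sqrt{\theta\phi}\,gh\ge gh$, which is exactly where the hypothesis $\theta\phi\ge1/4$ enters. Summing the one-step inequalities over $k$ and taking expectations completes the proof.

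The only point needing care, and the main obstacle I expect, is verifying that the comparator bounds remain valid with the forced base $L$. This holds because $L\subseteq Q$ and $Q\subseteq L\cup D$ give $X_k\subseteq Q_k\subseteq Y_k$ throughout, so each step differs from the standard proof only in using weights $(\theta,\phi)$ in place of $(1/2,1/2)$.
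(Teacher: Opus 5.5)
Your proof is correct and follows the paper's argument: the same hybrid comparator $Q_k=(Q\cup X_k)\cap Y_k$, the same bounds $g$ and $h$ on the comparator loss, the same sign-based case split, and the same replacement of $gh\le\frac12(g^2+h^2)$ by $gh\le\theta g^2+\phi h^2$ under $\theta\phi\ge1/4$. Writing the telescoping sum as a potential $\Phi_k$ is only a repackaging. Your sentence saying the algorithm adds $i$ deterministically when $h\le 0$ is slightly loose when $g=h=0$, since the fair coin is then used, but your separate treatment of that degenerate case already covers it.
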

\begin{proof}
Couple the comparator with the endpoints by setting
$Q_i=(Q\cup X_i)\cap Y_i$ after step $i$.
Before an element is processed, $X\subseteq Q_i\subseteq Y$.
Submodularity gives $g+h\ge0$.
If $g,h\ge0$ and $g+h>0$, the expected decrease in the
comparator's value is at most $gh/(g+h)$.
Indeed, if the element belongs to the comparator, only its
removal can cause a decrease, bounded by $g$; otherwise,
only its addition can cause a decrease, bounded by $h$.
The expected increases of the lower and upper endpoints
are respectively $g^2/(g+h)$ and $h^2/(g+h)$.
Since
\[
 gh\le\theta g^2+\phi h^2
 \qquad(\theta\phi\ge1/4),
\]
the comparator's expected loss is at most the corresponding
weighted sum of endpoint gains.

If one raw gain is negative, the algorithm chooses the other
endpoint deterministically and the comparator's loss is
nonpositive. If both gains vanish, the same inequality is
immediate. Summing over the steps and using
$Q_{\rm final}=X_{\rm final}=Y_{\rm final}=T$ gives
\[
 f(Q)-\E f(T)
 \le \theta\bigl(\E f(T)-f(L)\bigr)
     +\phi\bigl(\E f(T)-f(L\cup D)\bigr).
\]
Rearranging proves the claim.
\end{proof}

\subsection{Maintaining positive chain masks under deletions}
\label{set-sec-decremental}

The value estimator maintains only sampled mass. To maintain a feasible
solution, we must also identify enough of the positive coordinates.
This can be done efficiently when the active universe only shrinks.
Under deletions, every surviving chain marginal increases. Once an
element is known to have positive marginal, it remains safe to retain
it. The remaining task is to find positive coordinates that have not
yet been discovered.

We use the following general form of a chain vector. At the beginning
of an epoch, choose an order on the retained elements. Let $z_i$ be
either its raw forward marginal or the average of its forward and
reverse marginals. Both choices have three properties:
\begin{enumerate}
\item $z$ is a modular lower vector, with
      $\sum_i z_i=f(W)-f(\varnothing)$;
\item deleting elements can only increase surviving coordinates;
\item the sum of $z_i$ over an interval of the order can be obtained
      with a constant number of prefix or suffix queries.
\end{enumerate}
The averaging in the second choice does not change these properties.
We may also designate an eligibility set from the sampled priorities
at the beginning of the epoch. Only eligible positive coordinates
are required in the maintained mask.

The data structure stores a lower anchor $b_i\le z_i$ for each
coordinate. Its mask consists of the eligible coordinates with
$b_i>0$. An anchor is updated only when it becomes positive or when
its negative magnitude drops by a constant factor. This rule is what
bounds the total number of successful repairs. Random sampling detects
whether a significant amount of positive mass is still missing.

\begin{lemma}[Decremental mask maintenance]
\label{set-lem-mask-maintenance}
Fix $0<\xi<1/20$. A decremental instance initialized on $m$ active
identities maintains a retained universe $W\subseteq C$ and a stored
positive mask for each of a constant number of chain vectors of the
forms above. At each fixed time of an oblivious deletion sequence,
\begin{equation}\label{set-eq-mask-properties}
 \OPT(W)\ge(1-2\xi)\OPT(C),\qquad
 \E E\le2\xi\OPT(W),
\end{equation}
where $E$ is the eligible positive mass omitted by one mask. No
stored-mask coordinate has negative current marginal. The omitted
positive singleton gains sum to at most $2\xi\OPT(C)$.

Initialization and any prefix of the subsequent deletions use at most
\begin{equation}\label{set-eq-mask-cost}
 m\,q(N,\xi),\qquad
 q(N,\xi)=O\!\left(\xi^{-1}
                  \log^4\frac{N+1}{\xi}\right)
\end{equation}
oracle queries, for every outcome of the random choices. All query
sets are subsets of the current active cohort. The bound is independent
of the numerical range of positive function values.
\end{lemma}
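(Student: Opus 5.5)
We follow the construction sketched in the overview: value epochs, a singleton cutoff, lower anchors, and drift-proportional sampling with a failure counter. There are three things to prove, in this order: the structural guarantees on $W$, the deterministic query bound, and the bound on $\E E$.

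\textbf{Epochs and the cutoff.} Define epochs by $L=\max(\{f_0\}\cup\{f(\{i\}):i\in C\})$, measured at the epoch start. An epoch ends and is rebuilt when the current maximum falls below $L/2$. Within an epoch we set $W=\{i\in C:s_i\ge\theta\}$ with $\theta=\xi L/N$.
\begin{itemize}
\item The omitted singleton gains sum to less than $N\theta=\xi L$. During the epoch, $L\le 2\max(\{f_0\}\cup\{f(\{i\})\})\le 2\OPT(C)$. This gives the stated bound $2\xi\OPT(C)$.
\item The same submodularity argument as in Lemma~\ref{lem:val-selection} then gives $\OPT(W)\ge(1-2\xi)\OPT(C)$.
\item An element stays in $W$ only while $\theta\le s_i\le\OPT(C)\le(N+1)L$, and $L$ halves at each rebuild. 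Hence each element lies in at most $O(\log((N+1)/\xi))$ retained epochs. Elements whose $s_i$ exceeds the current scale are handled by the maximum and need no separate treatment.
\end{itemize}

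\textbf{Rebuilds and the deterministic query bound.} At a rebuild we recompute $z$ on $W$ using $O(|W|)$ prefix queries and set $b_i=z_i$. Properties (1)–(3) hold for both chain forms. Property (2) gives $d_i=z_i-b_i\ge0$ throughout. Property (3), combined with storing the old marginals at deleted positions (exactly as in \eqref{eq:val-hole-correction}), gives interval drift sums with $O(1)$ queries. So each draw costs $O(\log(N+1))$ queries.

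Cost accounting:
\begin{itemize}
\item An anchor $b_i\le-\theta$ can be halved at most $O(\log(\OPT/\theta))=O(\log((N+1)/\xi))$ times before it enters $(-\theta,0]$.
\item It crosses zero at most once.
\item A positive anchor is never productive again.
\end{itemize}
So productive draws number $O(m\log(N/\xi))$ per epoch. Each deletion triggers at most $\tau$ unproductive draws, and there are at most $m$ deletions. Each productive draw resets the counter, so it can be charged $\tau$ further draws. Multiplying by epochs, $\tau=O(\xi^{-1}\log(1/\xi))$, and the $\log N$ search cost, the total is within $m\,O(\xi^{-1}\log^4((N+1)/\xi))$ for every outcome. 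The mask contains only coordinates with $b_i>0$, and $z_i\ge b_i$, so no stored coordinate has negative current marginal.

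\textbf{Bounding $\E E$ (the main obstacle).} Fix a query time and condition on the state after the last sampling round. The step I expect to be hardest is showing that whenever $E>\xi\OPT(W)$, each draw is productive with probability at least $\xi/3$.
\begin{itemize}
\item Every omitted eligible positive coordinate has $b_i\le0<z_i$, so it is productive and contributes $d_i\ge z_i$. Productive drift is therefore at least $E$.
\item An unproductive coordinate either has $b_i>0$ with $d_i\le z_i^+$, or has $b_i<0$ with $z_i\le b_i/2$, giving $d_i\le|b_i|/2\le z_i^-$. The exception is an anchor in $(-\theta,0]$ with $z_i\le0$, which contributes at most $\theta$ each, so at most $\xi L$ in total.
\item Total positive mass is at most $\OPT(W)$ by Lemma~\ref{lem:val-modular}. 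Total negative mass is $\sum z^+-(f(W)-f_0)\le\OPT(W)+f_0$; I would absorb $f_0$ by using $f_0\le\OPT(W)$, possibly after centering. Unproductive drift is thus at most about $2\OPT(W)+2\xi\OPT(W)$.
\end{itemize}
The productive fraction is then at least $E/(E+3\OPT(W))\ge\xi/3$, using $\xi<1/20$.

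Stopping with error above $\xi\OPT(W)$ requires $\tau$ consecutive failures while the mass is still large, which has probability at most $(1-\xi/3)^\tau$. Setting $\tau=\lceil 3\xi^{-1}\ln(1/\xi)\rceil$ makes this at most $\xi$. This needs care: the error probability must hold at the fixed time, not per deletion. The deletion sequence is oblivious and the last sampling round used fresh randomness, so the bound holds at the fixed time. Therefore
\[
\E E\le\xi\OPT(W)+\xi\cdot\OPT(W)\le2\xi\OPT(W),
\]
using $E\le\OPT(W)$.
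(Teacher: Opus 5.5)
Your epochs, singleton cutoff, lower anchors, productive criterion, drift sampling and halving count all match the paper. The gap is in the tail bound for a repair, and it also makes your choice of $\tau$ too small.

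\textbf{The missing union bound.} A repair is not one run of draws. Every productive draw resets the counter and starts a new phase, and the repair halts at the first phase that collects $\tau$ consecutive failures. Your bound $(1-\xi/3)^\tau$ controls the failure of one prescribed phase. Conditioning on ``the last sampling round'' is selection-biased, because the terminating phase is by definition the one that failed. Any phase that begins with $E>\xi\OPT(W)$ can be the terminating one with probability up to $e^{-\xi\tau/3}$. A single repair can pass through as many such phases as it has productive draws: up to $H(N,\xi)=\Theta(N\log((N+1)/\xi))$ in an epoch. This happens, for example, when the missing mass consists of many tiny coordinates that have just crossed zero, or when many small negative anchors are halving-productive while $E$ stays large. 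With $\tau=\lceil 3\xi^{-1}\ln(1/\xi)\rceil$, a union bound gives only $H(N,\xi)\,\xi$, so $\E E\le 2\xi\OPT(W)$ does not follow. The paper takes the union bound over all possible productive rounds of an epoch and requires $H(N,\xi)e^{-\xi\tau/3}\le\xi$, that is, $\tau=\lceil 3\xi^{-1}\log(H(N,\xi)/\xi)\rceil=O(\xi^{-1}\log((N+1)/\xi))$. The query bound still closes. The $O(m\log^2((N+1)/\xi))$ productive refreshes and the at most $m$ deletions each pay for one block of $\tau$ draws, at $O(\log(N+1))$ queries per draw. This gives $m\,O(\xi^{-1}\log^4((N+1)/\xi))$, so the $\log N$ you omitted from $\tau$ is absorbed by one factor of $\log((N+1)/\xi)$ in $q(N,\xi)$.

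\textbf{Smaller points.}
\begin{itemize}
\item Do not reuse the hole correction \eqref{eq:val-hole-correction} for interval drift sums. The anchors are refreshed individually at different times, so $b$ is not the marginal vector of any single earlier snapshot. Also, $B_{U_0}(J)$ would query prefixes containing deleted identities, violating the requirement that every query set be a current active subset. Instead, keep the anchors in a balanced priority tree with subtree sums. The current interval sum of $z$ on $W$ costs $O(1)$ prefix/suffix queries on current sets, and the drift sum is the difference.
\item The mass bounds are sharper than you state. The positive mass satisfies $P\le\OPT(W)-f_0$, and so $Q=P-(f(W)-f_0)\le\OPT(W)-f(W)$; no centering is needed. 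Then $P+Q+N\theta\le(2+4\xi)\OPT(W)$ gives productive probability $\xi/(2+4\xi)>\xi/3$, whereas your $E/(E+3\OPT(W))$ yields only $\xi/(3+\xi)$.
\item Charge initialization and refresh costs per element, through its $O(\log((N+1)/\xi))$ retained epochs. Do not charge them per epoch times the number of epochs, which is bounded only by $m+1$.
\end{itemize}
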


\subsubsection{Value epochs and stored anchors}

Put $f_0=f(\varnothing)$ and cache the singleton gains
$s_i=f(\{i\})-f_0$ when the instance is initialized. Elements with
$s_i\le0$ may be excluded from the answer: every later marginal of
such an element is nonpositive.

At the beginning of a value epoch, let
\[
 L=\max\bigl(\{f_0\}\cup\{f(\{i\}):i\in C\}\bigr),\qquad
 \theta=\xi L/N,\qquad W=\{i\in C:s_i\ge\theta\}.
\]
If $L=0$, return the empty set. Otherwise initialize the required
chain vectors on $W$, set $b_i=z_i$, and store the eligible coordinates
with positive anchors. Initializing a constant number of vectors
uses $O(|W|)$ queries.

The epoch ends when the current largest singleton value, including
$f_0$, drops below $L/2$. The next epoch begins directly at the new
maximum. Thus no intermediate numerical scales need to be visited.
Epoch boundaries depend only on the fixed update sequence and cached
singleton values, and are independent of the sampled orders.

On a deletion, remove the identity from all membership lists and
anchor trees. If the epoch ends, rebuild on the surviving cohort.
Otherwise, a deletion outside $W$ requires no sign repair. A deletion
inside $W$ invokes the procedure below.

\subsubsection{Repairing a mask}

During an epoch the drift $d_i=z_i-b_i$ is nonnegative. An interval
sum of the current vector uses at most four value queries, while the
anchor sum is stored. Hence an interval drift sum is available with
constantly many queries. Descending a balanced priority tree samples
a coordinate with probability
\[
 \frac{d_i}{\sum_jd_j}
\]
using $O(\log(N+1))$ queries. Querying an individual current coordinate
also uses constantly many prefix and suffix queries.

A sampled coordinate is called \emph{productive} if
\[
 b_i\le0<z_i,
 \qquad\text{or}\qquad
 b_i\le-\theta\ \text{ and }\ z_i>b_i/2.
\]
On a productive draw, replace $b_i$ by $z_i$, update the anchor sums,
and add the coordinate to its mask if it is eligible and positive.
Reset a consecutive-failure counter to zero. An unproductive draw
increments the counter. Stop when the counter reaches a trial
parameter $\tau$, or immediately if the total drift vanishes.
The choice of $\tau$ is given at the end of this section.

\subsubsection{Accuracy}

The omitted positive singleton gains have total at most $N\theta=\xi L$.
Before the epoch ends, $L/2\le\OPT(C)$. Therefore
\begin{equation}\label{set-eq-truncation}
 \OPT(W)\ge(1-2\xi)\OPT(C),\qquad
 N\theta\le\frac{2\xi}{1-2\xi}\OPT(W)\le4\xi\OPT(W).
\end{equation}
These inequalities remain valid when $W$ is empty, since
$\OPT(\varnothing)=f_0$.

Write $M_W=\OPT(W)$, $P=\sum_i(z_i)_+$, and
$Q=\sum_i(-z_i)_+$. The modular lower-vector property gives
$P,Q\le M_W$. Every eligible positive coordinate missing from the
mask is productive and has drift at least its positive weight.
If its total missing mass is $E$, the unproductive drift is at most
\begin{align*}
 &\sum_{b_i>0}(z_i-b_i)
 +\sum_{\substack{b_i\le-\theta\\z_i\le b_i/2}}(z_i-b_i)
 +\sum_{\substack{-\theta<b_i<0\\z_i\le0}}(z_i-b_i)\\
 &\hspace{35mm}\le P-E+Q+N\theta.
\end{align*}
Since productive drift is at least $E$, the probability of a productive
draw is at least $E/(P+Q+N\theta)$. In particular, whenever
$E>\xi M_W$, this probability is at least
\[
 \frac{\xi}{2+4\xi}>\frac{\xi}{3}.
\]

Initially, $|b_i|\le\OPT(W_{\mathrm{initial}})\le(N+1)L$.
Each productive negative refresh halves the anchor's magnitude until
it exceeds $-\theta$, after which at most one positive crossing remains.
Thus the number of productive rounds, together with their final
sampling phase, in any one epoch is at most
\[
 H(N,\xi):=
 N\left(\left\lceil\log_2\frac{2(N+1)^2}{\xi}\right\rceil+3\right)+1.
\]
Suppose that the integer $\tau$ satisfies
\begin{equation}\label{set-eq-trial-condition}
 H(N,\xi)\exp(-\xi\tau/3)\le\xi.
\end{equation}
A union bound over the possible productive rounds shows that a repair
stops with $E>\xi M_W$ with probability at most $\xi$. The argument
conditions on all earlier history and uses fresh drift samples. It
does not require earlier repairs to have been successful. Because
$0\le E\le M_W$ always, we obtain $\E E\le2\xi M_W$.

\subsubsection{Query complexity and the trial parameter}

An identity $i$ is retained only during epochs whose initial scales
satisfy
\[
 f(\{i\})\le L\le Ns_i/\xi.
\]
Successive epoch scales decrease by more than a factor of two.
Consequently each initial identity belongs to
$O(\log((N+1)/\xi))$ retained epochs, irrespective of numerical value
ratios. Initialization over all epochs costs
$O(m\log((N+1)/\xi))$ queries. Negative-anchor halvings contribute
$O(m\log^2((N+1)/\xi))$ productive refreshes.

Each productive refresh and each deletion pays for at most one block
of $\tau$ samples, and each sample uses $O(\log(N+1))$ queries.
We choose
\[
 \tau=\left\lceil\frac3{\xi}\log\frac{H(N,\xi)}{\xi}\right\rceil.
\]
This satisfies \eqref{set-eq-trial-condition}, and is
$O(\xi^{-1}\log((N+1)/\xi))$. Combining the bounds proves
Lemma~\ref{set-lem-mask-maintenance}. The accounting charges only
work already performed, so it also covers a module destroyed during
its final epoch. Membership lists are updated before a repair returns;
enumerating the stored mask makes no oracle calls.

\subsection{Square-root updates via a core and an insertion buffer}
\label{set-sec-sqrt}

We first give the stronger approximation available with square-root
update queries. Partition the active universe into an old core and a
small buffer of new elements. During one epoch the core receives only
deletions, so Lemma~\ref{set-lem-mask-maintenance} maintains a useful
core solution. The buffer is small enough to optimize anew after each
update.

Simply keeping the core solution and optimizing the buffer separately
does not suffice for a nonmonotone objective: taking their union can
destroy value. We instead extend two different core sets by randomized
double greedy on the buffer. One is the maintained core solution;
the other is an independent random sample of the core. The first
preserves the information obtained by optimization, whereas the
second provides the sampling inequalities needed to control
interactions across the two parts.

\subsubsection{A decremental core with endpoint terms}

Let $\alpha=\sqrt2-1$ and $\beta=(1-\alpha)/2$. The endpoint terms in
Theorem~\ref{thm:val-chain} are useful here: a lower bound that keeps
$f(C)$ can be combined with the loss from forcing a core set into a
buffer extension.

\begin{theorem}[Decremental core profile]
\label{set-thm-core-profile}
For every $0<\varepsilon<1/5$, a decremental instance initialized on
$m$ identities maintains a stored set $S\subseteq C$ such that
\begin{equation}\label{set-eq-core-profile}
 \E f(S)\ge
 \alpha\OPT(C)+\beta[f(\varnothing)+f(C)]
       -\varepsilon\OPT(C)
\end{equation}
at every fixed time of an oblivious deletion sequence. Initialization
and any deletion prefix together use
$m\,O(\varepsilon^{-1}\log^4((N+1)/\varepsilon))$ queries,
deterministically. All queries concern current active subsets.
\end{theorem}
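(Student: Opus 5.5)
The plan is to instantiate Lemma~\ref{set-lem-mask-maintenance} with a single forward chain vector $z_i=w_i^\pi(W)$ under uniform independent priorities, with every coordinate eligible, and $\xi=\varepsilon/c$ for a suitable absolute constant $c$. The output is the stored mask $S$, i.e.\ the coordinates with positive anchor $b_i>0$. The query bound then follows directly from~\eqref{set-eq-mask-cost}, and all queries concern current subsets. It remains to prove~\eqref{set-eq-core-profile} at a fixed time.

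Fix a time, and let $W$ be the retained universe of the current epoch and $\pi$ its order. Epoch boundaries are independent of the priorities, so conditional on the schedule $\pi$ is a uniform order on the current $W$. Here I use that deleting elements from a uniform order leaves a uniform order on the survivors, and that the current chain marginals are exactly the chain marginals $w_i^\pi(W_{\rm current})$.

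Apply Lemma~\ref{lem:val-modular} to $S$. Every stored coordinate has $z_i\ge b_i>0$, so
\[
 f(S)\ge f_0+\sum_{i\in S}z_i=Q_\pi(W)-E .
\]
Taking expectations and using Theorem~\ref{thm:val-chain} on $W$ together with $\E E\le2\xi\OPT(W)$ gives
\[
 \E f(S)\ge\alpha\OPT(W)+\beta\bigl(f_0+f(W)\bigr)-2\xi\OPT(W).
\]

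The main obstacle is replacing $W$ by $C$ in the endpoint term $f(W)$, since $f(C)$ may exceed $f(W)$. First, $\OPT(W)\ge(1-2\xi)\OPT(C)$ from~\eqref{set-eq-truncation}, and $\OPT(W)\le\OPT(C)$. For the endpoint term, I would use submodularity via the singleton bound (Lemma~\ref{lem:prelim-singletons}): adding the elements of $C\setminus W$ to $W$ one at a time increases $f$ by at most their positive singleton gains. Elements with $s_i\le0$ contribute nothing. The omitted positive gains total at most $N\theta\le2\xi\OPT(C)$ by Lemma~\ref{set-lem-mask-maintenance}. Hence
\[
 f(W)\ge f(C)-2\xi\OPT(C).
\]

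Collecting terms, the total loss is at most $(2\alpha\xi+2\beta\xi+2\xi)\OPT(C)\le4\xi\OPT(C)$. Choosing $\xi=\varepsilon/4<1/20$ then gives~\eqref{set-eq-core-profile}. The case $L=0$, where the output is $S=\varnothing$ and all values vanish, is immediate.
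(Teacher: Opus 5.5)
Your proposal is correct and follows essentially the same route as the paper's proof. You apply the decremental mask lemma to one forward chain with all coordinates eligible, use Theorem~\ref{thm:val-chain} on the current retained universe $W$ minus the missing positive mass, and move $\OPT(W)$ and $f(W)$ to $C$ through the omitted positive singleton gains, reaching the same loss $(2+\sqrt2)\xi\OPT(C)$ and the same choice $\xi=\varepsilon/4$.
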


\begin{proof}
Apply Lemma~\ref{set-lem-mask-maintenance} to one raw forward chain,
with all coordinates eligible. At each value-epoch start, draw independent
uniform priorities for the retained identities and use their increasing
order. The epoch boundaries and current retained universe are independent
of these priorities, so their restriction at every fixed time remains a
uniform random order. Its maintained mask contains only
positive current coordinates, so its value is at least the full
chain certificate minus the missing positive mass. On the current
retained universe $W$, Theorem~\ref{thm:val-chain} gives
\[
 \E f(S)\ge
 \alpha\OPT(W)+\beta[f_0+f(W)]-2\xi\OPT(W).
\]
Let $e$ be the sum of positive singleton gains omitted from $W$.
Diminishing returns gives
\[
 \OPT(C)\le\OPT(W)+e,\qquad f(C)\le f(W)+e,
 \qquad e\le2\xi\OPT(C).
\]
Removing elements of nonpositive singleton gain cannot hurt either
inequality. It follows that
\begin{align*}
 \E f(S)
 &\ge\alpha\OPT(C)+\beta[f_0+f(C)]
       -(\alpha+\beta)e-2\xi\OPT(W)\\
 &\ge\alpha\OPT(C)+\beta[f_0+f(C)]
       -(2+\sqrt2)\xi\OPT(C).
\end{align*}
Choose $\xi$ to satisfy $(2+\sqrt2)\xi\le\varepsilon$; for example,
$\xi=\varepsilon/4$ suffices. The cost follows from
\eqref{set-eq-mask-cost}.
\end{proof}

\subsubsection{Two extensions of the core}

The only randomized double-greedy property we need is the weighted
endpoint inequality in Lemma~\ref{lem:prelim-rdg}. Its weights are
analysis parameters; both extension calls use the ordinary randomized
double-greedy algorithm.

Fix $p\in(0,1)$ and positive weights $r,s,u,v$ satisfying
$rs=uv=1/4$. Put $q=1-p$ and define
\begin{align}
 d_R&=1+r+s,& d_S&=1+u+v,\notag\\
 \lambda&=sq-p,&
 L&=(1+s)q^2-rpq,& A&=(1+s)q.
 \label{set-eq-augmentation-parameters}
\end{align}
The conditions used in the analysis are
\begin{equation}\label{set-eq-augmentation-conditions}
 \lambda,L\ge0,\qquad
 \frac{L}{\alpha}\ge\lambda d_S,\qquad
 \frac{\beta}{\alpha}L+rp^2\ge\lambda(1+v).
\end{equation}
All parameter values are fixed in Subsection~\ref{set-sec-root-parameters}.

\begin{lemma}[Augmenting a maintained core]
\label{set-lem-augmentation}
Let $C,D$ partition a fixed universe $U$, let $M=\OPT(U)$, and suppose
that a stored random set $S\subseteq C$ satisfies
\[
 \E f(S)\ge
 \alpha\OPT(C)+\beta[f_0+f(C)]-\zeta\OPT(C).
\]
Independently sample $R\subseteq C$ with inclusion probability $p$.
Apply randomized double greedy on $[R,R\cup D]$ and on $[S,S\cup D]$,
and denote the outputs by $T_R,T_S$. Under
\eqref{set-eq-augmentation-conditions},
\begin{equation}\label{set-eq-augmentation-bound}
 \E\max\{f(S),f(T_R),f(T_S)\}
 \ge(\rho-\zeta)M,\qquad
 \rho=\frac{\alpha A}{\alpha d_R+L}.
\end{equation}
The two extensions and their value comparison use $O(|D|+1)$ queries.
\end{lemma}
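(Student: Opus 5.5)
The plan is to lower-bound each of the three candidates separately by applying Lemma~\ref{lem:prelim-rdg} and Lemma~\ref{lem:prelim-sampling}. We then take a nonnegative linear combination in which every term other than $f(O)$ either cancels or carries a nonnegative coefficient and can be dropped by nonnegativity of $f$. Fix an optimum $O$ of $U$ and write $X=O\cap C$, $Y=O\cap D$, $Z=C\setminus O$, so $M=f(X\cup Y)$. Both double-greedy calls use fresh randomness, and $R$ is independent of $S$. Hence Lemma~\ref{lem:prelim-rdg} applies conditionally on the bases, with comparators $Q=R\cup Y$ (weights $r,s$) and $Q=S\cup Y$ (weights $u,v$), and we may then take expectations.

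\emph{Sampled-core extension.} Lemma~\ref{lem:prelim-rdg} gives $d_R\E f(T_R)\ge \E f(R\cup Y)+r\E f(R)+s\E f(R\cup D)$. Apply Lemma~\ref{lem:prelim-sampling} with the two blocks $X,Z$, both sampled at rate $p$, over the bases $Y$, $\varnothing$ and $D$. Each expectation is then bounded below by a four-corner average with weights $q^2,pq,pq,p^2$. The first bound, for example, reads $\E f(R\cup Y)\ge q^2f(Y)+pq\,M+pq\,f(Z\cup Y)+p^2f(C\cup Y)$. We keep the positive term $pq\,M$ (from $R\cup Y$) and the corner $f(X\cup D)$ weighted by $s\,pq$, and we bound $f(X\cup D)\ge f(O)$ minus marginals of $D\setminus Y$; I expect instead to simply use $f(X\cup D)\ge 0$ if that is cheaper. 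We also retain the terms in $f(C)$, $f(C\cup Y)$ and $f(Z\cup Y)$ for later pairing. The quantities $A=(1+s)q$ and $L=(1+s)q^2-rpq$ should appear as the net coefficients of $M$ and of the $f(C)$-type endpoint terms once these corners are paired.

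\emph{Maintained-core extension.} Lemma~\ref{lem:prelim-rdg} gives $d_S\E f(T_S)\ge \E f(S\cup Y)+u\E f(S)+v\E f(S\cup D)$. Since $S\subseteq C$ and $Y\cap C=\varnothing$, submodularity gives $f(S\cup Y)\ge f(S)+f(C\cup Y)-f(C)$. We drop $v f(S\cup D)\ge0$ or keep it as needed. This produces $(1+u)\E f(S)$, a positive $f(C\cup Y)$ term and a penalty $-f(C)$. The $f(C\cup Y)$ term is then combined with $f(Y)$ and $f(Z\cup Y)$ from the sampled bound via the submodular inequality $f(X\cup Y)+f(Z\cup Y)\ge f(C\cup Y)+f(Y)$ (used in the direction that converts sampled corners plus $f(C\cup Y)$ into $M$), together with $f(X)\le\OPT(C)$. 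This yields a lower bound of the form (positive combination of $\E f(T_R),\E f(T_S),\E f(S)$) $\ge A\,M-\lambda[\text{loss in }f(X)\text{ and }f(C)]$, with $\lambda=sq-p$ the net coefficient of the loss.

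\emph{Combination.} Multiply the $T_S$ inequality by $\lambda$ and add the hypothesis on $\E f(S)$ with a suitable weight. The term $\alpha\OPT(C)$ absorbs the $f(X)$ loss, which is the condition $L/\alpha\ge\lambda d_S$. The $\beta f(C)$ endpoint, together with the $rp^2f(C)$ corner from $r\E f(R)$, absorbs the $f(C)$ penalty, which is the condition $\frac\beta\alpha L+rp^2\ge\lambda(1+v)$. The weights on the three candidates should come out as $\alpha d_R$ for $T_R$, $L$-proportional for $S$ and $T_S$, summing to $\alpha d_R+L$. Bounding each candidate by the maximum and normalizing then gives $\rho=\alpha A/(\alpha d_R+L)$. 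The $\zeta\OPT(C)\le\zeta M$ error passes through with coefficient at most one after normalization. The main obstacle is the bookkeeping in this step: finding the exact multipliers so that the coefficients of $f_0$, $f(Y)$, $f(Z\cup Y)$, $f(C\cup Y)$ and $f(C)$ are all nonnegative or cancel exactly under \eqref{set-eq-augmentation-conditions}. I would first try to solve for the multipliers that zero the $f(C\cup Y)$ and $f(C)$ coefficients, and then check that the remaining coefficients are nonnegative. The query bound is immediate, since each double-greedy run uses $O(|D|+1)$ queries and the comparison uses three more.
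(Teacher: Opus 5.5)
Your skeleton matches the paper's proof: the two weighted double-greedy bounds, the four-corner bounds $\E f(R\cup H)\ge B_p(H)$, the inequality $f(S\cup Y)\ge f(C\cup Y)+f(S)-f(C)$, multiplying the $T_S$ bound by $\lambda$, adding $L/\alpha$ times the core hypothesis after $f(X)\le\OPT(C)\le M$, and normalizing. However, the step you defer as bookkeeping is the actual content of the proof, and what you say about it fails.

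The inequality $f(X\cup Y)+f(Z\cup Y)\ge f(C\cup Y)+f(Y)$ has $M$ on its larger side. In a lower-bound chain it can therefore only trade $M+f(Z\cup Y)$ for $f(C\cup Y)+f(Y)$; it cannot turn positive mass on $f(C\cup Y)$ into $M$. The paper instead uses up the entire coefficients of $f(D)$ (namely $sq^2$) and of $f(C\cup Y)$ (namely $p^2+\lambda$) with three different pairings:
\begin{align*}
 pq\,[f(Z\cup Y)+f(D)]&\ge pq\,[f(Y)+f(Z\cup D)],\\
 q\lambda\,[f(C\cup Y)+f(D)]&\ge q\lambda\,[f(Y)+f(U)],\\
 spq\,[f(C\cup Y)+f(X\cup D)]&\ge spq\,[f(O)+f(U)].
\end{align*}
These balance exactly because $pq+q\lambda=sq^2$ and $q\lambda+spq=p^2+\lambda$, and this is what forces $\lambda=sq-p$. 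The third line is why you cannot replace the corner $f(X\cup D)$ by $0$: it supplies the $spq$ share of $A=(1+s)q^2+(1+s)pq$. The accumulated $f(Y)$ coefficient is $q^2+pq+q\lambda=(1+s)q^2$. It is then converted by $f(Y)\ge M-f(X)+f_0$, which is the sole source of the $f(X)$ loss. After subtracting the $rpq\,f(X)$ corner of $rB_p(\varnothing)$, that loss has net coefficient $L=(1+s)q^2-rpq$. So the loss coefficient is $L$, not $\lambda$ as you state, and $L$ is not an $f(C)$-type coefficient.

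Second, you must keep $v\,\E f(S\cup D)$ and apply $f(S\cup D)\ge f(U)+f(S)-f(C)$. This is what produces $d_S\E f(S)$ and the penalty $-(1+v)f(C)$ that appear in \eqref{set-eq-augmentation-conditions}. If you drop it, you keep only $(1+u)\E f(S)$ and $-f(C)$. The total weight on the three candidates then becomes $d_R+L/\alpha+\lambda v$, so the ratio falls below $\rho$ whenever $\lambda v>0$.

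With both corrections you reach
\[
 d_R\E f(T_R)+\lambda d_S\E f(T_S)\ge AM-Lf(X)+\lambda d_S\E f(S)+[rp^2-\lambda(1+v)]f(C)+d_Rq^2f_0 .
\]
This discards only $rpq\,f(Z)+(1+s)pq\,f(Z\cup D)+[sp+\lambda(q+v)]f(U)$. From there your final combination and normalization step works as you describe.
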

\begin{proof}
Fix an optimum $O$ and put
\[
 X=O\cap C,\qquad Z=C\setminus O,\qquad
 Y=O\cap D,\qquad W=D\setminus O.
\]
These four sets may have arbitrary sizes.  Concatenated letters below
mean unions; thus $XY=O$, $XZ=C$, and $YW=D$.
For sampling probability $p$, write $q=1-p$ and
\[
 B_p(H)=q^2f(H)+pq f(XH)+pq f(ZH)+p^2f(CH).
\]
Concavity along the two disjoint sampling directions gives
$\mathbb E f(R\cup H)\ge B_p(H)$.

The weighted endpoint inequality for randomized double greedy gives
\begin{align}
 d_R\mathbb E f(T_R)
 &\ge B_p(Y)+rB_p(\varnothing)+sB_p(D),
 \label{set-eq-sqrt-random-extension}\\
 d_S\mathbb E f(T_S)
 &\ge\mathbb E f(SY)+u\mathbb E f(S)+v\mathbb E f(SD).
 \label{set-eq-sqrt-cached-extension}
\end{align}
For each realized $S\subseteq C$, submodularity supplies
\[
 f(SY)\ge f(CY)+f(S)-f(C),\qquad
 f(SD)\ge f(U)+f(S)-f(C).
\]
Consequently
\[
 d_S\mathbb E f(T_S)
 \ge f(CY)+d_S\mathbb E f(S)-(1+v)f(C)+v f(U).
\]

Add $\lambda$ times the preceding
cached-extension bound to \eqref{set-eq-sqrt-random-extension}.
Expand the three $B_p$ terms, and apply the following three
submodularity inequalities with the indicated nonnegative weights:
\begin{align*}
 pq\,[f(ZY)+f(D)]
 &\ge pq\,[f(Y)+f(ZD)],\\
 q\lambda\,[f(CY)+f(D)]
 &\ge q\lambda\,[f(Y)+f(U)],\\
 spq\,[f(CY)+f(XD)]
 &\ge spq\,[f(O)+f(U)].
\end{align*}
The available coefficients match exactly because
$pq+q\lambda=sq^2$ and
$q\lambda+spq=p^2+\lambda$.
Finally use $f(Y)\ge M-f(X)+f_0$, another submodularity inequality.
The resulting bound, after dropping only nonnegative terms, is
\begin{align}
 d_R\mathbb E f(T_R)+\lambda d_S\mathbb E f(T_S)
 \ge{}& AM-Lf(X)+\lambda d_S\mathbb E f(S)
 \notag\\
 &+[rp^2-\lambda(1+v)]f(C)+d_Rq^2f_0.
 \label{set-eq-sqrt-augmentation-master}
\end{align}
For clarity, the terms discarded in this step are exactly
\[
 rpq f(Z)+(1+s)pq f(ZD)
       +[sp+\lambda(q+v)]f(U),
\]
and all their coefficients are nonnegative.

Since $\operatorname{OPT}(C)\le M$ and $\operatorname{OPT}(C)\ge f(X)$,
the core guarantee implies
$\mathbb E f(S)\ge \alpha f(X)+\beta[f_0+f(C)]-\zeta M$.
Substitution into
\eqref{set-eq-sqrt-augmentation-master} gives
\begin{align*}
 &\left(\frac L\alpha-\lambda d_S\right)\mathbb E f(S)
     +d_R\mathbb E f(T_R)+\lambda d_S\mathbb E f(T_S)\\
 &\quad\ge
 \left(A-\frac L\alpha\zeta\right)M
 +[\frac{\beta}{\alpha} L+rp^2-\lambda(1+v)]f(C)
 +[\frac{\beta}{\alpha} L+d_Rq^2]f_0.
\end{align*}
By \eqref{set-eq-augmentation-conditions}, all three output coefficients and both remaining endpoint coefficients
are nonnegative.  Divide by the sum $d_R+L/\alpha$ of the output
coefficients.  Since a maximum is at least every convex combination,
the approximation factor is at least
\[
 \frac{\alpha A}{\alpha d_R+L}-\zeta\frac{L}{\alpha d_R+L}
 \ge\frac{\alpha A}{\alpha d_R+L}-\zeta.
\]

\end{proof}
\subsubsection{Fully dynamic implementation}
\label{set-sec-sqrt-dynamic-schedule}

We now combine the decremental core profile with the augmentation
lemma.  The schedule uses actual update counts and requires no
information about future deletions.

\begin{theorem}[Square-root updates with a maintained set]
\label{set-thm-sqrt-dynamic}
Let $\rho_{\rm sq}$ be the constant obtained from \eqref{set-eq-augmentation-bound} by the choices in Subsection~\ref{set-sec-root-parameters}.
For every fixed $0<\varepsilon<1/5$, there is a fully dynamic algorithm
maintaining a stored set $S_t\subseteq U_t$ with
\[
 \mathbb E f(S_t)\ge(\rho_{\rm sq}-\varepsilon)\operatorname{OPT}(U_t)
\]
at every fixed time of an arbitrary oblivious update sequence.
Its deterministic total query cost over any prefix of $T$ updates is
\begin{equation}\label{set-eq-sqrt-dynamic-total}
 O\!\left(1+n_0q(N,\varepsilon)
       +T\left[\frac Nd q(N,\varepsilon)+d\right]\right),
 \quad
 d=\lceil\sqrt N\rceil,
 \quad
 q(N,\varepsilon)=
 O\!\left(\varepsilon^{-1}
       \log^4\frac{N+1}{\varepsilon}\right).
\end{equation}
In particular, the amortized query bound is
$\widetilde O_\varepsilon(\sqrt N)$.
All queries use current active subsets.  The stored output can be
enumerated without further oracle calls.
\end{theorem}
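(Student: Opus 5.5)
We combine the decremental core profile (Theorem~\ref{set-thm-core-profile}) with the augmentation lemma (Lemma~\ref{set-lem-augmentation}) through an epoch schedule driven only by update counts. Partition the update stream into epochs of $d=\lceil\sqrt N\rceil$ consecutive updates. At the start of an epoch, the whole active universe becomes the core $C$, and the insertion buffer $D$ is emptied. During the epoch, a deletion of a core element is forwarded to the core's decremental instance. An insertion places the new lifetime in $D$. A deletion of a buffer element removes it from $D$. Thus $C$ sees only deletions, $|C|\le N$, and $|D|\le d$; at every time $U_t=C\dot\cup D$. After each update we draw a fresh independent $R\sim C(p)$ and run randomized double greedy on $[R,R\cup D]$ and on $[S,S\cup D]$, where $S$ is the core's stored mask. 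We store the best of $\{S,T_R,T_S\}$, which is a subset of $U_t$ and is therefore feasible.

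For the approximation, fix a time $t$. The epoch boundaries depend only on update counts, so $C$ and $D$ at time $t$ are determined by the oblivious sequence. Theorem~\ref{set-thm-core-profile} applied to the current epoch's instance with accuracy $\varepsilon$ gives the hypothesis of Lemma~\ref{set-lem-augmentation} with $\zeta=\varepsilon$, because $C$ receives an oblivious deletion sequence within the epoch. The sample $R$ and the double-greedy coins are fresh and independent of $S$, as the lemma requires; the conditional form of Lemma~\ref{lem:prelim-rdg} applies after fixing $S$. Lemma~\ref{set-lem-augmentation} with the parameters of Subsection~\ref{set-sec-root-parameters}, which satisfy \eqref{set-eq-augmentation-conditions}, then yields $\E f(S_t)\ge(\rho_{\rm sq}-\varepsilon)\OPT(U_t)$. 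One check is needed: the core theorem's ``fixed time'' guarantee must hold even though $t$ may fall early in an epoch. It does, because initialization is part of the instance.

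For the cost, the per-update extension work is $O(|D|+1)=O(d)$ queries. The hard part is accounting for rebuilds without a linear spike. A new core instance on $m\le N$ identities costs at most $m\,q(N,\varepsilon)$ over its entire lifetime, deterministically, by Theorem~\ref{set-thm-core-profile}. The initial core, built before any updates, is charged to the $n_0q(N,\varepsilon)$ preprocessing term. Every later core is created after an epoch of $d$ updates, so we charge its lifetime bound $N q(N,\varepsilon)$ to those $d$ updates, giving $(N/d)q(N,\varepsilon)$ per update. This charge covers the core's full future work, including deletions in its own epoch, so a query-count bound holds for every prefix of $T$ updates. The additive constant covers querying $f_0$.

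The main obstacle is the claim that the query bound is deterministic rather than amortized-expected in a way that permits a large spike. Rebuilding a core at an epoch start costs $\Theta(N)$ queries in a single update, so the statement must be read as a total bound over prefixes, which the charging above achieves. To de-amortize this, one can build the next core gradually during the current epoch, but that is not needed for the stated bound. With $d=\lceil\sqrt N\rceil$, summing these terms gives \eqref{set-eq-sqrt-dynamic-total} and the $\widetilde O_\varepsilon(\sqrt N)$ amortized bound. All queries involve $C$, $D$, and their subsets, which are currently active sets.
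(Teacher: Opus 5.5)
Your proposal is correct and follows the paper's proof essentially step by step. It uses the same $d$-update epochs with a decremental core and an insertion buffer, the core profile feeding Lemma~\ref{set-lem-augmentation} with $\zeta=\varepsilon$, and the same prepayment of each later core's $Nq(N,\varepsilon)$ lifetime bound from the preceding $d$ updates. The only deviation is that you redraw $R\sim C(p)$ after every update, whereas the paper draws it once per epoch and maintains it by removing deleted core lifetimes. Both give the required product law independent of $S$ and cost no oracle queries; the paper's choice merely avoids rebuilding a linear-size set after each update, which the query model does not charge.
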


\begin{proof}
Partition updates into consecutive epochs of $d$ updates.  At time
zero, and immediately after each multiple of $d$ updates, place
every currently active lifetime in a core $C$, initialize the
decremental data structure of
Theorem~\ref{set-thm-core-profile}, and empty an insertion
buffer $D$.  During the epoch, every insertion enters $D$.
A deletion is sent to the part containing its current lifetime.
Reinsertion starts a new lifetime in $D$; it never inserts into an
existing core module.  Thus the current $C,D$ always partition
$U_t$, the core sees only deletions, and $|D|\le d$.

Choose the module's error parameter to be $\varepsilon$; the
augmentation lemma multiplies this error by a coefficient below one.
Its lifetime bound therefore
has the form $m q(N,\varepsilon)$.
At the epoch's start, draw independent Bernoulli-$p$ membership
bits for the core, independently of the core module, and call the
resulting sample $R$.  Maintain $R$ by removing deleted core
lifetimes.  At each fixed time it has the required product
distribution on the current core and is independent of
the cached core solution: the epoch schedule is deterministic, the
update sequence is oblivious, and the core module does not use
the augmentation samples or outputs to guide its decisions.

After each update, including its required core repair and any
resulting reset, recompute the two augmentation outputs from
Lemma~\ref{set-lem-augmentation}.  Each extension
optimizes only the current buffer, starting from $R$ or from the
stored core answer $S$.  It takes
$O(|D|+1)=O(d)$ queries.  Store the resulting membership lists,
query the values of $S,T_R,T_S$, and return a handle to the
largest-value candidate.  Perform this comparison at initialization
as well.  Every forced set and every extension element is active,
so all candidates and all query sets are feasible current subsets.
The core profile and the augmentation lemma prove the approximation
at each fixed time.  No conditioning on previous approximation
success is used.

It remains to account for the core rebuilds and their repairs.
The initial module costs at most $n_0q(N,\varepsilon)$, including
every repair until its destruction.  Each subsequent module has
at most $N$ initial elements.  Prepay its entire lifetime bound
$Nq(N,\varepsilon)$ from the $d$ actual updates preceding its
initialization.  There are at most $\lfloor T/d\rfloor$ such
initializations in a prefix of $T$ updates, so this cost is at most
$(TN/d)q(N,\varepsilon)$.  The prepayment also covers a module
that is still alive at the end of the prefix; no charge is made
to future updates.  The two extensions and value comparisons
cost $O(Td+1)$ in total.  This proves
\eqref{set-eq-sqrt-dynamic-total}.

All membership lists are updated before the answer is returned.
Enumerating its members therefore makes no additional oracle queries.
\end{proof}

\subsection{Choice of parameters}
\label{set-sec-root-parameters}

\label{set-sec-sqrt-parameters}

The square-root argument is valid for every choice satisfying
\eqref{set-eq-augmentation-conditions}. Choose
\[
 p=\frac4{11},\quad q=\frac7{11},\quad
 r=\frac5{33},\quad s=\frac{33}{20},\quad
 u=\frac{51}{20},\quad v=\frac5{51}.
\]
Then $rs=uv=1/4$ and
\[
 \lambda=\frac{151}{220},\qquad
 L=\frac{82901}{79860},\qquad A=\frac{371}{220},\qquad
 d_R=\frac{1849}{660},\qquad d_S=\frac{3721}{1020}.
\]
In particular $\lambda,L>0$. For the remaining inequalities, use
$1/\alpha=1+\sqrt2$ and $\beta/\alpha=1/\sqrt2$.
The elementary lower bounds
$\sqrt2>7071/5000$ and $1/\sqrt2>7071/10000$ imply
\[
 \frac L\alpha-\lambda d_S>\frac1{500},
 \qquad
 \frac{\beta}{\alpha}L+rp^2-\lambda(1+v)>\frac1{2500}.
\]
Both comparisons are rational after substituting these lower bounds.
Thus all coefficients discarded in the proof are nonnegative. Finally,
\[
 \rho_{\rm sq}:=\frac{\alpha A}{\alpha d_R+L}
 =0.31772\ldots>0.3177.
\]
For an exact verification of the last inequality, substitute
$\alpha>2071/5000$ into the increasing function
$a\mapsto aA/(ad_R+L)$; the resulting rational value exceeds
$3177/10000$.

Taking $d=\lceil\sqrt N\rceil$ balances rebuilding the decremental core
against optimizing the insertion buffer. The error term in
Lemma~\ref{set-lem-augmentation} has coefficient less than one,
so using core accuracy $\varepsilon$ gives the claimed
$(\rho_{\rm sq}-\varepsilon)$ approximation.

\section{The general approximation-update tradeoff}
\label{set-sec-general-tradeoff}
\label{sec:general-tradeoff}

The square-root algorithm combines a decremental core with a small
insertion buffer. To obtain every fixed positive update exponent,
we replace this partition by a hierarchy of $K$ cohorts with
geometrically spaced rebuilding periods. The decremental mask
structure still applies within each cohort. The remaining task is
to combine the maintained cohort solutions: taking only the best
one would lose a factor of $K$.

The approximation analysis uses the slack of half sampling.
If the slack is large, the half-sample itself gives the desired
improvement. When it is small, paired chains yield nearly optimal
solutions within the cohorts, and their interactions admit a
controlled quadratic approximation. Choosing a uniform set of
$\lfloor K/2\rfloor$ cohorts then captures these interactions with probability strictly
larger than $1/4$.

We first establish the structural estimates, then derive the
decremental cohort guarantee and the balanced composition bound.
We finish with the fully dynamic hierarchy and its parameter choices.

\subsection{Structure near the quarter-sampling bound}
\label{set-sec-clean-structure}
\label{set-sec-round2-foundations}

The same half-sampling slack controls two difficulties. Within one
cohort it bounds the loss of a paired-chain solution. Across cohorts
it bounds the error of representing the objective by interactions
between good and bad elements. We first establish the common
structural estimates.

Throughout this subsection, $f:2^U\to\mathbb R_{\ge0}$ is submodular.
For a set $C$, write $C(p)$ for its independent Bernoulli-$p$ sample.
Fix an optimum $O\subseteq U$ and put
\[
 B=U\setminus O,\quad M=f(O),\quad f_0=f(\varnothing),\quad
 \mu=\mathbb E f(U(1/2)),\quad
 \eta=4\mu-M-f_0-f(U),\quad \rho=\eta-f(B).
\]
The optimum, minima, and these slack quantities are analysis objects;
the algorithm does not compute them.  The quantity $\eta$ measures the slack in the half-sampling bound.
We prove that this slack controls both conditional nonmodularity
and errors caused by assigning elements the wrong optimization role.

\subsubsection{Sampling, modular lower vectors, and residuals}

The multilinear expectation of a submodular function is concave on
every line with a nonnegative direction.  Indeed, its second
derivative on such a line is a nonnegative weighted sum of mixed
second differences, each of which is nonpositive.  Applying this
concavity successively to two disjoint blocks $A,D$ gives
\begin{equation}\label{set-eq-clean-four-corners}
 4\mathbb E h((A\cup D)(1/2))
 \ge h(\varnothing)+h(A)+h(D)+h(A\cup D).
\end{equation}
Here $h$ can be any submodular function; nonnegativity is not needed.
For $A=O,D=B$, this proves $\rho\ge0$ and hence $\eta\ge0$.
For a nonnegative $h$ on $W$, choosing $A$ to maximize $h$ and
discarding the nonnegative term $h(W\setminus A)$ also gives
\begin{equation}\label{set-eq-clean-quarter-endpoints}
 4\mathbb E h(W(1/2))
 \ge \max h+h(\varnothing)+h(W).
\end{equation}

Fix any order $i_1,\ldots,i_m$ of $W$ and define its greedy vector by
\[
 w_{i_j}=h(\{i_1,\ldots,i_j\})-h(\{i_1,\ldots,i_{j-1}\}).
\]
Summing diminishing returns along this order proves
\begin{equation}\label{set-eq-clean-base}
 h(T)\ge h(\varnothing)+w(T)\quad(T\subseteq W),\qquad
 w(W)=h(W)-h(\varnothing).
\end{equation}
In particular, for nonnegative $h$,
\begin{equation}\label{set-eq-clean-base-mass}
 \sum_i(w_i)_+\le\max h-h(\varnothing),\qquad
 \sum_i(-w_i)_+\le\max h-h(W).
\end{equation}
The first inequality evaluates \eqref{set-eq-clean-base} on the positive
support; the second follows by subtracting the total sum of $w$.

Define the midpoint gap and the corresponding residual by
\[
 D(h)=2\mathbb E h(W(1/2))-h(\varnothing)-h(W),\qquad
 u(T)=h(T)-h(\varnothing)-w(T).
\]
For every $T\subseteq W$, submodularity and
\eqref{set-eq-clean-four-corners} imply
\[
 0\le h(T)+h(W\setminus T)-h(\varnothing)-h(W)\le2D(h).
\]
Also \eqref{set-eq-clean-base} on $W\setminus T$ gives
$w(T)\ge h(W)-h(W\setminus T)$.  Consequently
\begin{equation}\label{set-eq-clean-residual}
 u\ge0,\quad u\text{ is submodular},\quad
 u(\varnothing)=u(W)=0,\quad
 u(T)\le2D(h),\quad \mathbb E u(W(1/2))=D(h)/2.
\end{equation}
Concavity and the zero endpoints imply
\begin{equation}\label{set-eq-clean-residual-expectation}
 \mathbb E u(W(p))\le D(h)\max\{p,1-p\}.
\end{equation}
For example, when $p\le1/2$, express $1/2$ as the convex
combination of $p$ and $1$ with coefficient $1/[2(1-p)]$ on $p$.
Concavity then gives \eqref{set-eq-clean-residual-expectation}; the other
case is symmetric.

We shall also need a dimension-free bound on residual derivatives.
For any nonnegative submodular $u$, the sum of its positive deletion
marginals inside a set $S$ is at most $u(S)$: delete just the elements
with positive deletion marginal, using diminishing returns, and use
nonnegativity of the resulting value.  Likewise, the sum of the
absolute negative addition marginals outside $S$ is at most $u(S)$:
add the elements with negative marginal and use nonnegativity of
the final value.  Taking expectations for $S\sim W(p)$ and using
independence of each element's own sampling bit yields
\[
 \sum_{i\in W}\mathbb E\left|u(i\mid (W\setminus\{i\})(p))\right|
 \le\frac{\mathbb E u(W(p))}{p(1-p)}.
\]
For the residual in \eqref{set-eq-clean-residual}, this becomes
\begin{equation}\label{set-eq-clean-residual-gradient}
 \sum_{i\in W}\mathbb E\left|u(i\mid (W\setminus\{i\})(p))\right|
 \le\frac{D(h)}{\min\{p,1-p\}}.
\end{equation}

\subsubsection{Wrong-role budgets and complementary conditional gaps}
\label{set-sec-wrong-role-budgets}

Set
\[
 x=\mathbb E f(O(1/2)),\quad y=\mathbb E f(B(1/2)),\quad
 u=\mathbb E f(O\cup B(1/2)),\quad
 v=\mathbb E f(B\cup O(1/2)).
\]
Concavity gives $\mu\ge(y+u)/2$, $\mu\ge(x+v)/2$,
$u\ge(M+f(U))/2$, and $x\ge(f_0+M)/2$.  Thus
$y\le(f_0+\eta)/2$ and $v\le(f(U)+\eta)/2$.
Applying \eqref{set-eq-clean-quarter-endpoints} to $f|_B$ and to
$X\mapsto f(B\cup X)$ on $O$ proves
\begin{align}
 r&:=\max_{T\subseteq B}f(T)-f_0
       \le4y-2f_0-f(B)\le2\eta-f(B),
       \label{set-eq-clean-bad-budget}\\
 r_k&:=\max_{T\subseteq O}f(B\cup T)-f(U)
       \le4v-f(B)-2f(U)\le2\eta-f(B).
       \label{set-eq-clean-good-budget}
\end{align}
Both budgets are nonnegative.  Adding any subset $Z\subseteq B$
to a disjoint context increases the value by at most $r$, since its
addition marginal is at most $f(Z)-f_0$.  Adding $H\subseteq O$
to a disjoint context contained in $U$ decreases the value by at
most $r_k$, since its smallest addition marginal is
$f(U)-f(U\setminus H)\ge-r_k$.

For $Z\subseteq B$, let $D_Z$ be the midpoint gap of
$X\mapsto f(X\cup Z)$ on $O$, and put $Z'=B\setminus Z$.
For each $X\subseteq O$, applying \eqref{set-eq-clean-four-corners}
to the split $Z,Z'$ of the $B$ coordinates gives
\[
 f(X\cup Z)+f(X\cup Z')
 \le4\mathbb E f(X\cup B(1/2))-f(X)-f(X\cup B).
\]
Average over $X\sim O(1/2)$.  At the two $O$ endpoints,
submodularity bounds the sum of the four subtracted values below
by $f_0+f(B)+M+f(U)$.  Hence
\begin{align*}
 D_Z+D_{Z'}
 &\le8\mu-2x-2v-[f_0+f(B)+M+f(U)]\\
 &\le2\rho.
\end{align*}
The last step uses $2x\ge f_0+M$ and $2v\ge f(B)+f(U)$.
Moreover, $\mu\ge(x+v)/2$ gives
\begin{equation}\label{set-eq-clean-conditional-gaps}
 D_Z+D_{Z'}\le2\rho,\qquad D_{\varnothing}+D_B\le\rho.
\end{equation}
The proof is symmetric in the blocks $O,B$, so both assertions
also hold with their roles interchanged.  In particular, the
midpoint gap of either whole color block in every fixed context
on the other block is at most $2\rho\le2\eta$.

\subsubsection{Centered slack of an arbitrary conditional subproblem}

\begin{lemma}[Restriction and conditioning of a midpoint gap]
For any submodular $h$ on $W$, partition $W=R\mathbin{\dot\cup}
P\mathbin{\dot\cup}Q$ and define $h_R(T)=h(R\cup T)$ on $P$.
Then $D(h_R)\le2D(h)$.  More precisely, for every split
$P=A\mathbin{\dot\cup}D$,
\begin{equation}\label{set-eq-clean-conditioned-interaction}
 h(R\cup A)+h(R\cup D)-h(R)-h(R\cup P)\le2D(h).
\end{equation}
\end{lemma}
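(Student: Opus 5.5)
The plan is to reduce the conditioned interaction \eqref{set-eq-clean-conditioned-interaction} to a single value of the residual $u$ from \eqref{set-eq-clean-residual}, for a well-chosen order. Then the first assertion $D(h_R)\le2D(h)$ follows by averaging over a random split of $P$.

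\textbf{Step 1 (choice of order).} Fix the split $P=A\mathbin{\dot\cup}D$. I would choose the order on $W$ that lists all of $R$ first, then $A$, then $D$, and finally $Q$, with arbitrary order inside each block. Let $w$ be the greedy vector of $h$ along this order and $u(T)=h(T)-h(\varnothing)-w(T)$ the residual.

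The facts in \eqref{set-eq-clean-residual} used below were derived for an arbitrary fixed order. The key one is $0\le u(T)\le2D(h)$, which came from the complementary-pair bound and \eqref{set-eq-clean-base} applied to $W\setminus T$. So they hold for this order too.

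\textbf{Step 2 (modular part cancels; prefixes have zero residual).} Write each of the four values in the interaction as $h(T)=h(\varnothing)+w(T)+u(T)$. The constant terms cancel, since there are two plus signs and two minus signs. The modular terms also cancel, because
\[
 w(R\cup A)+w(R\cup D)=w(R)+w(R\cup P).
\]
Hence the interaction equals $u(R\cup A)+u(R\cup D)-u(R)-u(R\cup P)$.

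The sets $R$, $R\cup A$ and $R\cup P=R\cup A\cup D$ are prefixes of the chosen order. On a prefix, the greedy vector telescopes exactly, so $w(T)=h(T)-h(\varnothing)$ and $u(T)=0$. Therefore the interaction equals $u(R\cup D)$. By Step 1 this is at most $2D(h)$, which proves \eqref{set-eq-clean-conditioned-interaction}.

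\textbf{Step 3 (the midpoint gap of $h_R$).} Let $A=P(1/2)$ and $D=P\setminus A$. Both $A$ and $D$ are distributed as $P(1/2)$. Taking expectations gives
\[
 2\,\E h_R(P(1/2))-h_R(\varnothing)-h_R(P)
 =\E\bigl[h(R\cup A)+h(R\cup D)-h(R)-h(R\cup P)\bigr]\le2D(h).
\]
The left side is exactly $D(h_R)$. The only delicate point in the argument is Step 2: the order must be adapted to the split, so that three of the four sets are prefixes. A single fixed order would only give the bound $4D(h)$.
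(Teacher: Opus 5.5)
Your proof is correct and is essentially the paper's argument in residual form. The paper applies submodularity twice to bound the conditioned interaction by the global complementary-pair slack $h(R\cup A)+h(D\cup Q)-h(\varnothing)-h(W)\le 2D(h)$, then averages over a fair split of $P$ exactly as in your Step~3; your exact identity (for the order $R,A,D,Q$ the interaction equals $u(R\cup D)$), combined with $u(R\cup D)\le h(R\cup D)+h(A\cup Q)-h(\varnothing)-h(W)$, performs the same two submodularity steps with the roles of $A$ and $D$ interchanged, hidden in the chain inequality $h(A\cup Q)\ge h(\varnothing)+w(A\cup Q)$ behind the residual bound.
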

\begin{proof}
For a split $P=A\mathbin{\dot\cup}D$, submodularity twice gives
\begin{align*}
 &h(R\cup A)+h(R\cup D)-h(R)-h(R\cup P)\\
 &\qquad\le h(R\cup A)+h(D\cup Q)-h(\varnothing)-h(W).
\end{align*}
To verify the direction, the right side minus the left side is
\[
 h(D\cup Q)+h(R)+h(R\cup P)-h(\varnothing)-h(W)-h(R\cup D).
\]
The first two terms are at least $h(R\cup D\cup Q)+h(\varnothing)$;
submodularity on $R\cup D\cup Q$ and $R\cup P$ makes the remainder
nonnegative.  The right side of the first display is a global
complementary-pair slack and is at most $2D(h)$ by
\eqref{set-eq-clean-residual}.  Averaging over a fair split of $P$
proves the lemma.
\end{proof}

\begin{theorem}[Conditional centered slack]
\label{set-thm-clean-conditional-slack}
For arbitrary disjoint $R,C\subseteq U$, define $g(T)=f(R\cup T)$
on $C$.  Let
\[
 m_g=\min_{T\subseteq C}g(T),\qquad
 \eta_g=4\mathbb E g(C(1/2))-\max g-g(\varnothing)-g(C).
\]
Then
\begin{equation}\label{set-eq-clean-twenty}
 0\le\eta_g-m_g\le C_{\mathrm c}\eta.
\end{equation}
The assertion is pointwise for all choices of $R,C$, with a
constant $C_{\mathrm c}$ fixed in Subsection~\ref{set-sec-tradeoff-parameters}.
\end{theorem}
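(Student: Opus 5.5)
The plan is to prove the two inequalities in \eqref{set-eq-clean-twenty} separately. The lower bound is a direct application of \eqref{set-eq-clean-quarter-endpoints}. For the upper bound, split $C$ along the optimum into its good and bad parts. The two-block sampling defect is controlled by the conditional midpoint gaps of \eqref{set-eq-clean-conditional-gaps}, transported to the context $R$ by the restriction lemma. The wrong-role budgets \eqref{set-eq-clean-bad-budget}--\eqref{set-eq-clean-good-budget} then compare one leftover corner value with $m_g$.

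\emph{Lower bound.} The function $h=g-m_g$ is nonnegative and submodular on $C$. Applying \eqref{set-eq-clean-quarter-endpoints} to $h$ gives
\[
 4\mathbb E g(C(1/2))-4m_g\ge\max g+g(\varnothing)+g(C)-3m_g,
\]
which is exactly $\eta_g\ge m_g$.

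\emph{Upper bound: splitting the sampling defect.} Put $X=C\cap O$ and $Y=C\cap B$, and write $C(1/2)=X(1/2)\cup Y(1/2)$.

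First condition on a realized $Z=X(1/2)$. Consider the function $T\mapsto f(R\cup Z\cup T)$ on $Y$. It is a restriction, in the sense of the restriction lemma, of the function $T\mapsto f((R\cap O)\cup Z\cup T)$ on all of $B$, with $R\cap B$ moved into the fixed part. That lemma bounds its midpoint gap by twice the gap of the $B$-block function in a context contained in $O$. By the symmetric form of \eqref{set-eq-clean-conditional-gaps}, this latter gap is at most $2\eta$, so the conditional gap is at most $4\eta$. Hence
\[
 2\mathbb E g(Z\cup Y(1/2))\le g(Z)+g(Z\cup Y)+4\eta .
\]

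Averaging over $Z$, apply the same argument with the roles of the blocks exchanged. The functions $T\mapsto g(T)$ and $T\mapsto g(Y\cup T)$ on $X$ have midpoint gap at most $4\eta$ each. This gives
\[
 2\mathbb E g(X(1/2))\le g(\varnothing)+g(X)+4\eta,\qquad 2\mathbb E g(X(1/2)\cup Y)\le g(Y)+g(C)+4\eta .
\]
Combining the three displays yields
\[
 4\mathbb E g(C(1/2))\le g(\varnothing)+g(X)+g(Y)+g(C)+16\eta .
\]
Together with $g(X)\le\max g$, this reduces the claim to
\[
 g(Y)\le m_g+O(\eta).
\]

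\emph{Key step: wrong-role comparison.} Let $T_0$ attain $m_g$. Pass from $R\cup T_0$ to $R\cup Y$ in two stages. First remove $T_0\cap X\subseteq O$. Since adding a subset of $O$ to a disjoint context inside $U$ lowers the value by at most $r_k$, removing it raises the value by at most $r_k\le2\eta$. Then add $Y\setminus T_0\subseteq B$, which raises the value by at most $r\le2\eta$. So $g(Y)\le m_g+4\eta$, and altogether $\eta_g-m_g\le 20\eta$, matching $C_{\mathrm c}=20$.

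\emph{Main obstacle.} The delicate point is the bookkeeping of contexts in the gap transfer. The restriction lemma must be applied to a function whose ambient block is a whole color block, $B$ or $O$. The fixed context must lie in the other block, while the parts of $R$ of the same color must be placed into the fixed part $R$ of the lemma. Only then does \eqref{set-eq-clean-conditional-gaps} apply with the bound $2\rho\le2\eta$. I would check this carefully, including the case where the context mixes $R\cap O$ with a sampled $Z$. I would also verify that each of the three conditional gaps costs at most $4\eta$, so that the constants indeed total at most $20$.
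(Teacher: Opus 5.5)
Your proposal is correct and is essentially the paper's proof. It obtains the lower bound from \eqref{set-eq-clean-quarter-endpoints} applied to $g-m_g$. It splits the four-corner defect into three conditional midpoint gaps, each at most $4\eta$; you condition on the good-block sample and take the bad-block gap first, which is the mirror image of the paper's order. It then uses the wrong-role budgets to get $g(C\cap B)\le m_g+r+r_k$, for a total of $20\eta$. Your explicit bookkeeping, placing the same-color part of $R$ into the fixed set of the restriction lemma, is exactly what the paper's terse ``apply the preceding lemma inside its whole color block'' requires.
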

\begin{proof}
The lower bound follows by applying
\eqref{set-eq-clean-quarter-endpoints} to $g-m_g$.
Put $P=C\cap O$ and $Q=C\cap B$.  Each midpoint gap obtained by
varying $P$ or $Q$ with the other coordinates fixed is at most
$4\eta$: apply the preceding lemma inside its whole color block
and then use \eqref{set-eq-clean-conditional-gaps}.

For $Z\subseteq Q$ write
\[
 d_P(Z)=2\mathbb E_{X\sim P(1/2)}g(X\cup Z)-g(Z)-g(P\cup Z),
\]
and write $d_Q(\varnothing),d_Q(P)$ for the analogous gaps on $Q$
with the $P$ coordinates respectively empty and full.  Expanding
all terms gives the exact identity
\begin{align*}
 \Gamma&:=4\mathbb E g(C(1/2))-g(P)-g(Q)-g(\varnothing)-g(C)\\
   &=2\mathbb E_{Z\sim Q(1/2)}d_P(Z)
                         +d_Q(\varnothing)+d_Q(P)\le16\eta.
\end{align*}
Every $T\subseteq C$ also satisfies $g(T)\ge g(Q)-r-r_k$:
add its missing $Q$ coordinates, whose possible increase is at
most $r$, and then remove its $O$ coordinates, whose possible
increase is at most $r_k$.  The resulting set is $R\cup Q$.
Thus $m_g\ge g(Q)-r-r_k$, whereas $\max g\ge g(P)$.
It follows that $\eta_g-m_g\le\Gamma+r+r_k\le20\eta$.
\end{proof}

The proof uses a four-corner error of at most $16\eta$ and
wrong-role budgets totaling at most $4\eta$. This verifies the
choice of $C_{\mathrm c}$ made in
Subsection~\ref{set-sec-tradeoff-parameters}.

\subsection{Paired chains on near-tight instances}
\label{set-sec-clean-paired-chains}
\label{set-sec-paired-chain-hard-anchor}

A single random chain can underestimate the optimum even when its
positive mask has high value. For the fixed-depth construction, we
need a different guarantee: the maintained mask should approach the
optimum as the half-sampling slack tends to zero.

We average forward and reverse chain marginals. Their conditioning
sets are complementary, so the conditional midpoint bounds from the
previous section control their combined error. Marginals near the
ends of an order are less stable. We therefore retain only a central
interval, and use a second order shifted by one half so that every
element is central in exactly one of the two orders.

Assign independent uniform priorities $\pi_i\in[0,1)$ to the
elements of $U$.  Let $w_i^{\to},w_i^{\leftarrow}$ be their raw
greedy marginals in increasing and decreasing priority order, and
write $z_i=(w_i^{\to}+w_i^{\leftarrow})/2$.
This averaged vector satisfies \eqref{set-eq-clean-base}.  Its interval
sums need four prefix/suffix value queries.  Inserting an element
decreases every old coordinate; deleting one increases every
surviving coordinate, by diminishing returns.

\subsubsection{A fixed-base comparison for the paired coordinates}

Use one fixed deterministic order on $O$.  For each $Z\subseteq B$,
let $m^Z$ and $u_Z$ be the greedy vector and residual of
$X\mapsto f(X\cup Z)$ in this order.  For $Z'=B\setminus Z$, set
\[
 J_Z(X)=f(X\cup Z)+f(X\cup Z')-f(X)-f(X\cup B).
\]
Submodularity gives $J_Z(X)\ge0$.  The complementary-pair bound
for the function on $B$ conditioned on $X$, together with
\eqref{set-eq-clean-conditional-gaps}, gives $J_Z(X)\le4\rho$.
Its decomposition into the four greedy vectors is
\[
 J_Z(X)=J_Z(\varnothing)+c^Z(X)
       +u_Z(X)+u_{Z'}(X)-u_{\varnothing}(X)-u_B(X),
\]
where $c^Z=m^Z+m^{Z'}-m^{\varnothing}-m^B$.
By \eqref{set-eq-clean-residual} and
\eqref{set-eq-clean-conditional-gaps}, the residual difference belongs
to $[-2\rho,4\rho]$.  Evaluate the display on the positive and
negative supports of $c^Z$.  Its positive mass is at most
$6\rho-J_Z(\varnothing)$ and its negative mass is at most
$4\rho+J_Z(\varnothing)$.  Hence
\begin{equation}\label{set-eq-clean-paired-lone}
 \|m^Z+m^{Z'}-m^{\varnothing}-m^B\|_1\le10\rho.
\end{equation}

Put $v=m^{\varnothing}$ and $u=m^B$.  The vector $v$ is
nonnegative: optimality gives $f(i\mid O\setminus\{i\})\ge0$
for $i\in O$, and diminishing returns makes every earlier
$O$-only marginal nonnegative as well.  Its total is
$v(O)=M-f_0$.  The negative mass of $u$ is at most $r_k$ by
\eqref{set-eq-clean-base-mass}.  Condition on one coordinate's priority
$\pi_i=p$.  Its forward prefix is a Bernoulli-$p$ sample of the
other coordinates; its reverse prefix is the complementary sample.
Couple these prefixes as $R_O\sim O(p)$ and $Z\sim B(p)$, omitting
the differentiated element from either marginal.  The exact
decompositions into bases and residuals give
\[
 2z_i=m_i^Z+m_i^{B\setminus Z}
      +u_Z(i\mid R_O\setminus\{i\})
      +u_{B\setminus Z}(i\mid O\setminus(R_O\cup\{i\})).
\]
Consequently, if $e_i=2z_i-v_i-u_i$ and
$d_p=\min\{p,1-p\}$, then
\begin{equation}\label{set-eq-clean-coordinate-error}
 \sum_{i\in O}\mathbb E[|e_i|\mid\pi_i=p]
 \le10\rho+\frac{2\rho}{d_p}.
\end{equation}
Indeed, \eqref{set-eq-clean-paired-lone} bounds the base error, and
\eqref{set-eq-clean-residual-gradient} bounds the two residual derivative
sums by $(D_Z+D_{B\setminus Z})/d_p\le2\rho/d_p$.
In the left side each coordinate is conditioned on its own priority
being $p$; linearity of expectation permits the common auxiliary
sampling coupling used to bound that sum.

Apply the same construction on $B$, using a fixed deterministic
order there.  Let $b$ be the base of $f|_B$, and let $-t$ be the
base of $Z\mapsto f(O\cup Z)$.  Global optimality of $O$ and
diminishing returns imply $t_i\ge0$ for every $i\in B$.
The positive mass of $b$ is at most $r$.  The symmetric version of
\eqref{set-eq-clean-coordinate-error} reads $2z_i=b_i-t_i+e_i$ on $B$,
with the same summed error bound.

These deterministic bases also give, for every set $A\subseteq U$,
\begin{equation}\label{set-eq-clean-intrinsic-loss}
 f(A)\ge M-v(O\setminus A)-t(A\cap B).
\end{equation}
To prove it, put $K=A\cap O$ and $Z=A\cap B$.  The base on $O$
gives $f(K)\ge f_0+v(K)$, and the base on $B$ conditioned on $O$
gives $f(O\cup Z)\ge M-t(Z)$.  Diminishing returns implies
$f(K\cup Z)-f(K)\ge f(O\cup Z)-f(O)$; combine the three
inequalities and use $v(O)=M-f_0$.

\subsubsection{Interior sign errors and the two-shift theorem}

Fix $0<a<1/2$, put $\ell=1-2a$, and let $\widetilde A$ be any
random mask contained in the priority interval $[a,1-a)$.
Its weighted sign error is
\[
 E_a=\sum_{i:\,a\le\pi_i<1-a}|z_i|
  \mathbf1\{\mathbf1_{i\in\widetilde A}\ne\mathbf1_{z_i>0}\}.
\]
The mask may depend on all priorities and all queries.
For an omitted good coordinate in the interval,
$v_i\le(u_i)_-+|e_i|+2(z_i)_+$; the final term is needed only
when the sign decision is wrong.  Similarly, an included bad
coordinate has $t_i\le(b_i)_++|e_i|+2(-z_i)_+$, again with the
last term needed only for a wrong sign.  Integrating
\eqref{set-eq-clean-coordinate-error} over $p\in[a,1-a]$ gives
\begin{align}
 \mathbb E v(\{i\in O:a\le\pi_i<1-a,\ i\notin\widetilde A\})
 &\le\ell r_k+
       \left(10\ell+4\log\frac1{2a}\right)\rho
       +2\mathbb E E_a^{O},\label{set-eq-clean-good-interior}\\
 \mathbb E t(\widetilde A\cap B)
 &\le\ell r+
       \left(10\ell+4\log\frac1{2a}\right)\rho
       +2\mathbb E E_a^{B},\label{set-eq-clean-bad-interior}
\end{align}
where $E_a^O,E_a^B$ are the corresponding parts of $E_a$.
The integral used here is
$\int_a^{1-a}2/d_p\,dp=4\log(1/(2a))$.

\begin{theorem}[Two shifted paired-chain masks]
\label{set-thm-clean-two-shift}
Let $\pi'_i=(\pi_i+1/2)\bmod1$.  Form the paired vectors for the
two orders, and let $\widetilde A_1,\widetilde A_2$ be maintained
masks contained in their respective intervals $[1/4,3/4)$.
Let $E_1,E_2$ be their weighted sign errors as defined above.
Then their feasible union $S=\widetilde A_1\cup\widetilde A_2$
satisfies
\begin{equation}\label{set-eq-round2-two-shift-profile}
 \mathbb E f(S)\ge M-C_{\mathrm p}\eta-2\mathbb E(E_1+E_2).
\end{equation}
If $m=\min_{T\subseteq U}f(T)$, the stronger centered form is
\begin{equation}\label{set-eq-clean-two-shift-centered}
 \mathbb E f(S)\ge M-C_{\mathrm p}(\eta-m)-2\mathbb E(E_1+E_2).
\end{equation}
Here $C_{\mathrm p}$ is the constant fixed in Subsection~\ref{set-sec-tradeoff-parameters}. The current universe must be fixed independently of the sampled
priorities; no independence between the two masks is required.
\end{theorem}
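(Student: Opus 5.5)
We will derive \eqref{set-eq-round2-two-shift-profile} from the intrinsic loss inequality \eqref{set-eq-clean-intrinsic-loss} applied to $A=S$, then bound the two loss terms with the interior estimates \eqref{set-eq-clean-good-interior}--\eqref{set-eq-clean-bad-interior} at $a=1/4$. With this choice, $\ell=1/2$ and $\log(1/(2a))=\log 2$. The first step is a partition observation. Under the original priorities the interval $[1/4,3/4)$ is the central region. Under $\pi'$ the same interval corresponds to $\pi\in[0,1/4)\cup[3/4,1)$. Hence every element of $U$ is interior in exactly one of the two orders. Moreover, the shifted priorities $\pi'_i$ are again independent uniform on $[0,1)$, so the whole analysis of the previous subsubsection applies verbatim to the shifted paired vector.

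For the good block, every $i\in O\setminus S$ is interior in some order $k\in\{1,2\}$, and since $S\supseteq\widetilde A_k$ it is omitted by $\widetilde A_k$. Therefore
\[
v(O\setminus S)\le\sum_{k=1}^2 v(\{i\in O:i\text{ interior for }k,\ i\notin\widetilde A_k\}).
\]
Since $v\ge0$, each summand is controlled by \eqref{set-eq-clean-good-interior}.

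For the bad block, $t\ge0$ and $S\cap B\subseteq(\widetilde A_1\cap B)\cup(\widetilde A_2\cap B)$. Hence $t(S\cap B)\le t(\widetilde A_1\cap B)+t(\widetilde A_2\cap B)$, and each term is controlled by \eqref{set-eq-clean-bad-interior}. This uses only that each mask lies in its own interval. No independence between the masks is needed, because we sum expectations.

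Summing the four bounds gives
\[
\mathbb E f(S)\ge M-(r_k+r)-2(10+4\log 2)\rho\cdot\tfrac12\cdot2-2\,\mathbb E(E_1+E_2),
\]
with the $\ell$-factors handled carefully: each of the four terms carries $\ell r_k$ or $\ell r$ plus $(5+4\log2)\rho$. Using $r,r_k\le2\eta$ from \eqref{set-eq-clean-bad-budget}--\eqref{set-eq-clean-good-budget} and $\rho\le\eta$ then yields a constant $C_{\mathrm p}$ of order $2+4(5+4\log2)=22+16\log 2$. The sign-error terms appear with coefficient $2$, matching the statement.

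For the centered form \eqref{set-eq-clean-two-shift-centered}, apply everything to $\tilde f=f-m$, which is nonnegative and submodular. Differences of $f$ and $\tilde f$ coincide, so marginals, $z$, the masks, $E_1,E_2$, $O$, $B$ and $\rho$ are unchanged. Replacing $f$ by $\tilde f$ shifts $\mu$, $M$, $f_0$ and $f(U)$ each by $-m$, so the new slack is $\eta-4m+3m=\eta-m$ (using $M$ in place of the max). The guarantee $\mathbb E\tilde f(S)\ge(M-m)-C_{\mathrm p}(\eta-m)-\dots$ then adds back $m$ on both sides.

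The main obstacle is bookkeeping rather than ideas. First, the interior bounds were stated for a single mask contained in $[a,1-a)$ of the order being analyzed. For the shifted order we must confirm that the conditional prefix coupling still holds with $\pi'$; it does, since $\pi'$ is i.i.d.\ uniform. Second, the non-strictness of the boundary points $1/4$ and $3/4$ is irrelevant with probability one. Third, I would double-check that $\eta-m\ge0$ and that the wrong-role budgets for $\tilde f$ are bounded by $2(\eta-m)$, which \eqref{set-eq-clean-bad-budget}--\eqref{set-eq-clean-good-budget} give directly for the shifted function.
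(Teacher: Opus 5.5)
Your argument is the paper's proof. The two central intervals partition $U$, the intrinsic-loss inequality is applied to $S$ with the same priority-independent vectors $v,t$, each interior bound is used once per order at $a=1/4$, and the centered form follows by passing to $f-m$, which replaces $\eta$ by $\eta-m$.

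Only your final tally is off. Since $r+r_k\le4\eta$ (not $2\eta$), and the four $\rho$-terms sum to $(20+16\ln2)\rho$ (your displayed $2(10+4\log2)\cdot\tfrac12\cdot2$ undercounts), the constant your argument actually yields is $24+16\ln2$. That is exactly $C_{\mathrm p}$, not $22+16\log2$.
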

\begin{proof}
Each priority family is independently uniform across elements.
The two interval sets partition $U$: every identity is interior in
exactly one order under the half-open interval convention.  Use the
same fixed vectors $v,t$ in the two applications of
\eqref{set-eq-clean-good-interior}--\eqref{set-eq-clean-bad-interior}.
An omitted good coordinate is misclassified in its eligible order,
and the $t$ mass of the union of included bad coordinates is at most
the sum of the two $t$ masses.  With $a=1/4$ and $\ell=1/2$, the
combined expected intrinsic loss is therefore at most
\[
 r+r_k+(20+16\ln2)\rho+2\mathbb E(E_1+E_2)
 \le C_{\mathrm p}\eta+2\mathbb E(E_1+E_2).
\]
Apply \eqref{set-eq-clean-intrinsic-loss}.  For the centered form,
replace $f$ by $f-m$.  All marginal vectors, signs, and errors stay
unchanged, whereas the optimum becomes $M-m$ and the endpoint slack
becomes $\eta-m$.  Adding $m$ back proves the claim.
\end{proof}

Finally, apply the centered theorem to a conditional objective
$g(T)=f(R\cup T)$ and use Theorem~\ref{set-thm-clean-conditional-slack}.
For any two maintained interior masks on this subproblem,
\begin{equation}\label{set-eq-clean-conditional-paired-profile}
 \mathbb E g(S)\ge\operatorname{OPT}(g)
                    -C_{\mathrm c}C_{\mathrm p}\eta-2\mathbb E(E_1+E_2).
\end{equation}
When the subproblem and its outside context are determined before
its fresh priorities are drawn, this is a conditional guarantee;
averaging gives the same statement for such random subproblems.
The two structural results above supply the constants used in the
fully dynamic cohort analysis.

The proof requires $C_{\mathrm p}\ge24+16\ln2$, precisely the
choice in Subsection~\ref{set-sec-tradeoff-parameters}. The two
half-shifted eligibility intervals partition the universe; thus an
element is never excluded merely because it is extreme in one order.

\subsection{Maintained cohort solutions}
\label{set-sec-cohort-proof}

We now combine the paired-chain estimate with the decremental mask
structure. The resulting module maintains a cohort solution whose
loss is controlled by the half-sampling slack of the current larger
universe. This form will allow the cohort guarantees to be summed
inside the dynamic hierarchy.

For a current universe $U$, define its half-sampling slack by
\[
 \eta_U=4\E f(U(1/2))-\OPT(U)-f(\varnothing)-f(U)\ge0.
\]
Here $U(1/2)$ is an independent half-sample. The inequality follows
from the four-corner sampling bound in
Lemma~\ref{lem:prelim-sampling}.
The algorithm never evaluates $\eta_U$.

We use the following explicit cohort procedure. At each value-epoch
start, draw independent uniform priorities $\pi_i$ and the shifted
priorities $\pi'_i=(\pi_i+1/2)\bmod1$. In each order, average its
forward and reverse chain marginals and apply
Lemma~\ref{set-lem-mask-maintenance}, retaining only priorities in
$[1/4,3/4)$. Return the union of the two maintained masks. Each
identity is eligible in exactly one order.

\begin{corollary}[Maintained cohort solutions]
\label{set-cor-cohort}
Fix $0<\xi<1/20$. Suppose a decremental cohort and its deletion sequence are fixed
independently of the module's randomness. At any fixed time, let
$U\supseteq C$ be a larger universe, also independent of that randomness.
There is a stored set $S_C\subseteq C$ with
\begin{equation}\label{set-eq-cohort-deficit}
 \E f(S_C)\ge
 \OPT(C)-D_0\eta_U-C_{\mathrm e}\xi\OPT(U).
\end{equation}
Here $D_0,C_{\mathrm e}$ are absolute constants specified in
Subsection~\ref{set-sec-tradeoff-parameters}. Its initialization
and every deletion prefix use at most $m q(N,\xi)$ queries when
the initial cohort has size $m$.
\end{corollary}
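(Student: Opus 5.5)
The plan is to combine three earlier results: the decremental mask structure (Lemma~\ref{set-lem-mask-maintenance}), the conditional two-shift bound~\eqref{set-eq-clean-conditional-paired-profile}, and the conditional slack bound (Theorem~\ref{set-thm-clean-conditional-slack}). The stored set is the union $S_C=\widetilde A_1\cup\widetilde A_2$ produced by the cohort procedure. The query bound is immediate from~\eqref{set-eq-mask-cost}, since two paired-chain vectors are a constant number of chain vectors of the admissible forms.

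\textbf{Reduction to the retained universe.} At a fixed time, work with the retained universe $W\subseteq C$ of the current value epoch. Epoch boundaries depend only on the update sequence and cached singleton values, so $W$ is independent of the priorities drawn at that epoch start. Apply~\eqref{set-eq-clean-conditional-paired-profile} with empty context $R=\varnothing$ and subuniverse $W$. This uses Theorem~\ref{set-thm-clean-conditional-slack} for the pair $(\varnothing,W)$, whose centered slack is at most $C_{\mathrm c}\eta_U$, and gives
\[
 \E f(S_C)\ge\OPT(W)-C_{\mathrm c}C_{\mathrm p}\eta_U-2\E(E_1+E_2).
\]
Theorem~\ref{set-thm-clean-conditional-slack} is stated with the global optimum $O$ of the larger universe $U$, and $U$ is fixed independently of the module's randomness. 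Hence $\eta=\eta_U$ and the conditional slack is measured relative to $U$, as required.

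\textbf{Sign errors.} The main step is to bound $E_1,E_2$ by the omitted mass $E$ of Lemma~\ref{set-lem-mask-maintenance}. The lemma guarantees that no stored coordinate has negative current marginal, so a mask includes no coordinate with $z_i\le 0$ except possibly ones with $z_i=0$, which contribute zero weight. The only sign errors are therefore eligible positive coordinates missing from the mask, and $E_j$ equals the omitted eligible positive mass of mask $j$. The lemma then gives $\E E_j\le2\xi\OPT(W)\le2\xi\OPT(U)$. I expect this to be the delicate point. The lemma's expectation bound holds conditionally on the retained universe, but the paired vector and the eligibility set are random, so I would check that its argument (fresh drift samples, union bound over productive rounds) applies verbatim to the averaged vector restricted to priorities in $[1/4,3/4)$. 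It should, because the productive-draw analysis only used $P,Q\le M_W$, and that property~\eqref{set-eq-clean-base-mass} holds for the paired vector.

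\textbf{Truncation.} By~\eqref{set-eq-mask-properties}, $\OPT(W)\ge(1-2\xi)\OPT(C)\ge\OPT(C)-2\xi\OPT(U)$, using $\OPT(C)\le\OPT(U)$. Combining the three displays yields~\eqref{set-eq-cohort-deficit} with $D_0=C_{\mathrm c}C_{\mathrm p}=20(24+16\ln2)$ and $C_{\mathrm e}=2+2\cdot2\cdot2=10$, matching the constants announced in the overview. Finally, I would verify that the epoch-start case $L=0$ (empty output) is consistent with the bound. There $\OPT(C)=0$ by Lemma~\ref{lem:prelim-singletons}, so the right side of~\eqref{set-eq-cohort-deficit} is nonpositive and the bound holds trivially.
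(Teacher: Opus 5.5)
Your proposal is correct and follows the paper's proof essentially step for step. Both apply Lemma~\ref{set-lem-mask-maintenance} to the two half-shifted averaged chains, use the centered paired-chain bound on the retained universe $W$ together with Theorem~\ref{set-thm-clean-conditional-slack}, bound the sign errors by $2\E(E_1+E_2)\le 8\xi\OPT(W)$, and add the truncation loss $2\xi\OPT(C)$ to get $C_{\mathrm e}=10$. Your hedge about stored coordinates with $z_i=0$ is unnecessary, since every stored coordinate satisfies $z_i\ge b_i>0$.
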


\begin{proof}[Proof of Corollary~\ref{set-cor-cohort}]
Use Lemma~\ref{set-lem-mask-maintenance} for the two shifted averaged
chain vectors, with their respective central eligibility intervals.
Draw fresh independent uniform priorities at each value-epoch start,
and obtain the second order by the half-shift. The retained survivors
at every fixed time are independent of these priorities.
The two missing-mass bounds give
$2\E(E_1+E_2)\le8\xi\OPT(W)$. The conditional centered-slack theorem
applies to the current restriction $W\subseteq U$, and truncation loses
at most $2\xi\OPT(C)$. Thus the paired-chain estimate gives
\[
 \E f(S_C)\ge
 \OPT(C)-C_{\mathrm c}C_{\mathrm p}\eta_U-10\xi\OPT(U).
\]
We may therefore take $C_{\mathrm e}=10$. The query bound is
\eqref{set-eq-mask-cost}. The bound concerns the slack of the
current larger universe, rather than the state at which the module
was created.
\end{proof}

\subsection{Balanced composition of cohort solutions}
\label{set-sec-clean-balanced-cohorts}

Independent fair sampling of cohort solutions retains an interaction
from one selected cohort to another unselected cohort with probability
$1/4$. Choosing exactly half of the cohort indices increases this
probability to $q_K>1/4$. To turn this elementary gain into an
approximation improvement, we must show that the useful part of the
objective is captured by interactions between different cohorts.
Near-tightness of half sampling gives precisely this statement.

The proof replaces each cohort's good and bad portions by one coarse
coordinate. Up to an additive error controlled by $\eta$, the resulting
function is a bipartite quadratic polynomial. Same-cohort interactions
are charged to the local optimization deficits; balanced sampling then
captures the remaining interactions at rate $q_K$.

Let $f:2^U\to\mathbb R_{\ge0}$ be submodular, fix an optimum $O$,
and put $B=U\setminus O$, $M=f(O)$, $f_0=f(\varnothing)$, and
\[
 \eta=4\mathbb E f(U(1/2))-M-f_0-f(U).
\]
Write
\[
 r_B=\operatorname{OPT}(B)-f_0\le2\eta,\qquad
 r_k=\max_{X\subseteq O}f(B\cup X)-f(U)\le2\eta,
\]
and $D_O=2\mathbb E f(O(1/2))-f_0-M\le\eta$.
We use the following proved midpoint facts: every conditional function on one of $O,B$, with the other color fixed, has midpoint gap at most $2\eta$; a restriction with a fixed context has gap at most twice the original gap; and a greedy-base residual lies in $[0,2D]$ when its midpoint gap is $D$.  These facts hold for real-valued submodular functions and do not require normalization.

Relabel the cohort slots by $1,\ldots,K$ in this section.
Let $C_1,\ldots,C_K$ partition $U$, where $K\ge2$, and let
$S_j\subseteq C_j$ be arbitrary masks.  Empty cohorts are allowed.
Define their local deficits by
\[
 \delta_j=\operatorname{OPT}(C_j)-f(S_j)\ge0.
\]
Choose a uniformly random set $J\subseteq[K]$ of cardinality
$s=\lfloor K/2\rfloor$, independently of the masks, and return
$S_J=\bigcup_{j\in J}S_j$.  Put
\[
 p_K=\frac sK,\qquad
 q_K=\frac{s(K-s)}{K(K-1)}
 =\begin{cases}
 K/(4(K-1)),&K\text{ even},\\
 (K+1)/(4K),&K\text{ odd}.
 \end{cases}
\]
In particular, $q_K>1/4$ for every fixed finite $K$.

\begin{theorem}[Balanced cohort composition]
\label{set-thm-balanced}
For every realization of the masks,
\begin{equation}\label{set-eq-clean-balanced-cohort-main}
 \mathbb E_J f(S_J)
 \ge q_K M-2q_K\sum_{j=1}^K\delta_j-C_K\eta,
\end{equation}
where $C_K$ is a coefficient determined at the end of the proof.
In particular $C_K=O(K)$.
The assertion permits the masks to depend on arbitrary preprocessing
randomness.  Only the final balanced subset $J$ must be independent
of those masks.
\end{theorem}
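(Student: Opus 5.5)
The plan follows the overview: reduce $f(G_X\cup H_Y)$ to a coarse bipartite quadratic, charge the diagonal interactions to the local deficits, and let balanced sampling capture the off-diagonal ones at rate $q_K$.

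\emph{Setup.} Split each mask as $G_j=S_j\cap O$ and $H_j=S_j\cap B$, and put $G=\bigcup_jG_j$. First I would show that $f(G)\ge M-\sum_j\delta_j-O(K\eta)$. Let $O_j=O\cap C_j$. Its value satisfies $f(O_j)\le\OPT(C_j)=f(S_j)+\delta_j$. Removing $H_j$ from $S_j$ decreases the value by at most $r_k$ (the wrong-role budget) plus an $O(\eta)$ conditioning correction, so $f(G_j)\gtrsim f(O_j)-\delta_j$. To pass from the pieces to the whole, I would use a fixed order on $O$ listing the blocks $O_1,\dots,O_K$ consecutively. The residual of its greedy base is at most $2D_O\le2\eta$, so $f$ on $O$ is modular up to $O(\eta)$, and the lower bound on $f(G)$ follows.

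\emph{Quadratic model.} Fix the block order $G_1,\dots,G_K,H_1,\dots,H_K$. Define $a_i$ as the marginal of $G_i$ after $G_{<i}$, $b_j$ as the marginal of $H_j$ after $H_{<j}$, and $c_{ij}\ge0$ as the decrease in $G_i$'s marginal caused by adding $H_j$, so that the formula for $Q(X,Y)$ in the overview holds. The main technical step is Lemma-type control:
\[
 f(G_X\cup H_Y)\ge Q(X,Y)-8K\eta \quad\text{for all } X,Y\subseteq[K].
\]
I would prove it by telescoping over blocks. Each step compares a marginal in an arbitrary context with the prescribed chain context. The error is a conditional midpoint residual, which is at most $2\cdot 2\eta$ by the facts listed before the theorem: restriction at most doubles the gap, and each colour block has gap at most $2\eta$ in any fixed context. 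Summing over $2K$ blocks gives the $O(K\eta)$ error; the sign of the error should come out favorable because residuals are nonnegative.

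\emph{Averaging over $J$.} Output $S_J=G_J\cup H_J$ and use the second form of $Q$:
\[
 Q(J,J)=f_0+\sum_{i\in J}u_i+\sum_{j\in J}b_j+\sum_{i\in J,\,j\notin J}c_{ij}.
\]
Taking expectations gives $f_0+p_K\sum_iu_i+p_K\sum_jb_j+q_K\sum_{i\ne j}c_{ij}$. The identity $\sum_iu_i+\sum_{i,j}c_{ij}=f(G)-f_0$, with the diagonal included, rewrites this as
\[
 \E_JQ(J,J)=f_0+q_K\bigl(f(G)-f_0\bigr)+(p_K-q_K)\sum_iu_i+p_K\sum_jb_j-q_K\sum_jc_{jj}.
\]
Since $p_K\ge q_K$, the $u_i$ term is harmless except for its negative part, which is bounded by $\sum_i(u_i)_-\le r_k\le2\eta$. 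For the $b_j$ term, $\sum_j(b_j)_-$ is at most $\OPT(B)-f(H)$ plus residual terms, hence $O(\eta)$ from the $B$-side bounds, while positive $b_j$ only help.

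\emph{Diagonal budget (main obstacle).} It remains to show $\sum_jc_{jj}\le\sum_j\delta_j+O(K\eta)$. Consider $f(S_j)$. Dropping $H_j$ leaves $G_j\subseteq C_j$, so $f(G_j)\le\OPT(C_j)=f(S_j)+\delta_j$. Up to $O(\eta)$ residual error, the model gives $f(G_j)-f(S_j)\approx c_{jj}-b_j$, and $b_j\le r_B+O(\eta)$. This yields $c_{jj}\le\delta_j+O(\eta)$. The delicate part is that $c_{jj}$ is defined in the global chain context, with $G_{<j}$ and $H_{<j}$ present, whereas the comparison happens in isolation. Transferring between the two contexts is again a restricted midpoint gap, so it costs $O(\eta)$ per cohort.

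\emph{Conclusion.} Combining the bounds and using $f_0\ge0$ gives
\[
 \E f(S_J)\ge q_K\Bigl(M-\sum_j\delta_j\Bigr)-q_K\sum_j\delta_j-O(K\eta),
\]
which is the claimed bound with $C_K=O(K)$, the constant obtained by adding up the listed errors.
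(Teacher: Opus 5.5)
Your route is the paper's. You use the coarse quadratic lower bound with error $8K\eta$, the bound $f(G)\ge M-\sum_j\delta_j-K(r_B+2D_O)$ obtained from a greedy base of $f|_O$, the diagonal budget via the isolated interaction plus a change of context, and balanced averaging with $p_K\ge q_K$ and $\sum_i(u_i)_-\le r_k$.

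\textbf{The step that fails is your treatment of the $b_j$ term.} The first half is correct: $\sum_j(b_j)_-\le\OPT(B)-f(H)$. The conclusion that this is $O(\eta)$ is false. We have $\OPT(B)-f(H)=r_B+f_0-f(H)$, and the $B$-side budget controls only $\OPT(B)-f_0$, not $f_0-f(H)$.

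For a concrete example, take $O=\{o\}$, $B=\{b\}$, and the modular function with $f(\varnothing)=1$, $f(\{o\})=2$, $f(\{b\})=0$, $f(\{o,b\})=1$. Then $\eta=0$. With $K=2$ and $S_1=C_1=\{o\}$, $S_2=C_2=\{b\}$, you get $b_2=-1$.

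Charging $(b_j)_-$ to the local deficits instead would raise the coefficient of $\sum_j\delta_j$ to $2q_K+p_K$, which is weaker than the theorem. The repair is to keep the $f_0$ term rather than discard it "using $f_0\ge0$". By the exact identity $\sum_jb_j=f(H)-f_0$, the terms of your expansion of $\E_JQ(J,J)$ not involving $f(G)$, $u$, or $c_{jj}$ sum to $(1-q_K)f_0+p_K(f(H)-f_0)=(1-p_K-q_K)f_0+p_Kf(H)\ge0$. This uses $p_K,q_K\le1/2$, and it is exactly how the paper argues.

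\textbf{Three smaller points.}

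\emph{First step.} Deleting $H_j\subseteq B$ from $S_j$ costs at most $f(H_j)-f_0\le r_B$, directly by submodularity. The budget $r_k$ governs the opposite effect, and no conditioning correction is needed.

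\emph{Diagonal step.} Write $I(A,D\mid Z)=f(Z\cup A)+f(Z\cup D)-f(Z)-f(Z\cup A\cup D)$, so that $c_{jj}=I(G_j,H_j\mid G_{<j}\cup H_{<j})$. A cross interaction of this kind is not itself small, so "again a restricted midpoint gap" needs a reduction to same-colour quantities. The paper uses the exact identities $I(A,D\mid Z_O\cup Z_B)-I(A,D\mid Z_B)=I(A,Z_O\mid Z_B\cup D)-I(A,Z_O\mid Z_B)$ and $I(A,D\mid Z_B)-I(A,D\mid\varnothing)=I(D,Z_B\mid A)-I(D,Z_B\mid\varnothing)$. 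Their right sides are differences of same-colour interactions lying in $[0,4\eta]$, which gives $c_{jj}\le\delta_j+r_B+8\eta$. Comparing the two-sided residual bounds of the coarse conditional bases also works, with a worse constant.

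\emph{Negative budget.} The bound $\sum_i(u_i)_-\le r_k$ needs its short justification. We have $u_i=a_i(H)\ge a_i(B)$. The $a_i(B)$ are block sums of a greedy base of $X\mapsto f(B\cup X)$ on $O$ that lists $G$ first, and that base has total negative mass at most $r_k$.
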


We prove the theorem in the following three steps.

\subsubsection{A coarse bipartite quadratic approximation}

Write $G_j=S_j\cap O$, $H_j=S_j\cap B$, and let $G=\bigcup_jG_j$,
$H=\bigcup_jH_j$.  Use the fixed group order $1,\ldots,K$ and define
\begin{align*}
 a_i(Z)&=f(G_{<i}\cup G_i\cup Z)-f(G_{<i}\cup Z),\\
 a_i&=a_i(\varnothing),\\
 b_j&=f(H_{<j}\cup H_j)-f(H_{<j}),\\
 c_{ij}&=a_i(H_{<j})-a_i(H_{<j}\cup H_j)\ge0.
\end{align*}
Here $G_{<i}=\bigcup_{s<i}G_s$, and similarly for $H_{<j}$.
For $X,Y\subseteq[K]$, define
\[
 Q(X,Y)=f_0+\sum_{i\in X}a_i+\sum_{j\in Y}b_j
                    -\sum_{i\in X,\,j\in Y}c_{ij}.
\]
This polynomial is used only for analysis; the algorithm need not
construct its coefficients.

\begin{lemma}[Uniform lower approximation on the coarse groups]
\label{set-lem-coarse-lower}
For all $X,Y\subseteq[K]$,
\[
 f(G_X\cup H_Y)\ge Q(X,Y)-8K\eta.
\]
Moreover, $Q([K],\varnothing)=f(G)$ and
$\sum_jb_j=f(H)-f_0$ exactly.
\end{lemma}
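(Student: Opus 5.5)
The two exact identities come first, and they are telescoping. Since $a_i=f(G_{\le i})-f(G_{<i})$, summing over $i\in[K]$ gives $Q([K],\varnothing)=f_0+f(G)-f_0=f(G)$. Similarly, $\sum_j b_j=f(H)-f_0$. The work lies in the lower bound.

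For fixed $X,Y$, I would build $G_X\cup H_Y$ by first adding the selected $H_j$ ($j\in Y$) in group order, and then the selected $G_i$ ($i\in X$) in group order. The value then splits into two telescoping sums:
\[
f(G_X\cup H_Y)=f(H_Y)+\sum_{i\in X}\bigl[f(G_{X,<i}\cup G_i\cup H_Y)-f(G_{X,<i}\cup H_Y)\bigr].
\]
The $H$-part is a restriction of $f|_B$, with $H_Y$ built from sub-blocks indexed by $Y$. It compares with $f_0+\sum_{j\in Y}b_j$ through the residual of the greedy base of $f|_B$ in the coarse order.

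For this comparison I would use the fact that $b_j$ is the chain marginal of block $H_j$ given the full prefix $H_{<j}$. Adding $H_j$ after only the selected predecessors gains at least $b_j$, by diminishing returns. This is the coarse analogue of Lemma~\ref{lem:val-modular}, so $f(H_Y)\ge f_0+\sum_{j\in Y}b_j$ exactly, with no error.

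The $G$-part is the main obstacle. The $i$th summand is the marginal of $G_i$ given context $G_{X,<i}\cup H_Y$. I must show it is at least $a_i-\sum_{j\in Y}c_{ij}-O(\eta)$. I would split this into two moves, each costing one residual.

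The first move removes the $G$-context deficiency. Replacing $G_{X,<i}$ by the full $G_{<i}$ can only lower the marginal, by diminishing returns, so it is free: the marginal given $G_{X,<i}\cup H_Y$ is at least $a_i(H_Y)$, defined with full $G_{<i}$.

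The second move compares $a_i(H_Y)$ with $a_i-\sum_{j\in Y}c_{ij}$. Telescoping $a_i(Z)$ over the full chain gives $a_i(H)=a_i-\sum_{j}c_{ij}$. The selected version differs because $c_{ij}$ is measured with the full $H_{<j}$ present rather than $H_{Y,<j}$. The sign goes the wrong way here: the $c$-decrements with partial predecessors may exceed those with full predecessors. So this is exactly where the slack must enter.

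To handle the second move, I would view $\varphi_i(Z)=a_i(Z)$ as a second-order difference in the $B$-coordinates. Then $a_i(H_Y)-[a_i-\sum_{j\in Y}c_{ij}]$ is a difference of residuals $u(H_Y)$ for the conditional functions $Z\mapsto f(G_{\le i}\cup Z)$ and $Z\mapsto f(G_{<i}\cup Z)$ on $H$. Each residual lies in $[0,2D]$ by \eqref{set-eq-clean-residual}. Each $D$ is bounded using the restriction lemma \eqref{set-eq-clean-conditioned-interaction} together with the conditional midpoint-gap bound $2\eta$ for a $B$-block in a fixed $O$-context from \eqref{set-eq-clean-conditional-gaps}. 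This gives $D\le 4\eta$, so the error per $i$ is at most $8\eta$.

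Summing over at most $K$ indices $i\in X$ gives the claimed $8K\eta$. If the constant does not close as stated, I would absorb it via the residual bound $u\le 2D$ applied once per $i$. The step I expect to be delicate is ensuring the residual difference for group $i$ uses only one $2D$-sized term, rather than two, so the total stays at $8K\eta$.
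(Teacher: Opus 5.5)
Your proposal is correct and follows the paper's proof. It uses the exact coarse Edmonds bound $f(H_Y)\ge f_0+\sum_{j\in Y}b_j$ and the free diminishing-returns step down to $a_i(H_Y)$; it then subtracts the base expansions of $Y\mapsto f(G_{\le i}\cup H_Y)$ and $Y\mapsto f(G_{<i}\cup H_Y)$, whose residuals lie in $[0,8\eta]$ by the restriction lemma and the $2\eta$ conditional gap.

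The step you flag as delicate is immediate: both residuals are nonnegative, so $e_i(Y)\ge -u^0(H_Y)\ge -8\eta$. You do need your base on $H$ to list each $H_j$ consecutively in group order, so that it telescopes to the coarse marginals. The paper instead passes to the coarse function on $[K]$ via an all-or-none group coin, which yields the same $4\eta$ gap.
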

\begin{proof}
For fixed $Z\subseteq B$, the midpoint gap of the function on all of
$O$ obtained by conditioning on $Z$ is at most $2\eta$.
Restricting this function to a subset of $O$ increases its midpoint
gap by at most a factor two, by the conditioned-gap lemma.
Replacing independent half sampling within each group by one fair
all-or-none group coin can only decrease the expectation, by
submodularity and diagonal concavity.  Thus the group function on
the $G_i$ has midpoint gap at most $4\eta$.
Its fixed-order greedy-base residual is nonnegative and at most
$8\eta$.

The same reasoning applies to each function on the $H_j$ obtained by
fixing an arbitrary subset of $O$.  Hence, for each $i$,
the two functions $Y\mapsto f(G_{\le i}\cup H_Y)$ and
$Y\mapsto f(G_{<i}\cup H_Y)$ have greedy-base residuals in
$[0,8\eta]$.  Subtracting their base expansions gives
\[
 a_i(H_Y)=a_i-\sum_{j\in Y}c_{ij}+e_i(Y),
 \qquad |e_i(Y)|\le8\eta.
\]
The base expansion of $f|_H$ has a nonnegative residual, as does the
base expansion on the $G_i$ at fixed context $H_Y$.
Substituting the preceding identity into these two expansions gives
the stated lower bound.  For the two exact identities, all groups
are taken in their entirety, so the base sums telescope.
\end{proof}

Define $u_i=a_i-\sum_jc_{ij}=a_i(H)$.  Filling the conditioning set
$H$ to all of $B$ only decreases these marginals.  The resulting
group weights are sums of consecutive coordinates of a greedy base
of $X\mapsto f(B\cup X)$ on $O$, with each $G_i$ kept consecutive
and all of $G$ ordered before $O\setminus G$.
The negative part of a group sum is at most the sum of the negative
parts of its coordinates.  The latter total is at most $r_k$.
Consequently
\begin{equation}\label{set-eq-clean-coarse-negative-budget}
 \sum_i(u_i)_-\le r_k.
\end{equation}

\subsubsection{Omitted good mass and same-cohort interactions}

Fix a greedy base $v$ of $f|_O$.  All its coefficients are nonnegative,
and its residual is between zero and $2D_O$.  With
$P_j=C_j\cap O$, subadditivity gives
\[
 f(P_j)-f(G_j)
 \le\delta_j+f(H_j)-f_0\le\delta_j+r_B.
\]
Therefore
\[
 v(P_j\setminus G_j)\le\delta_j+r_B+2D_O,
\]
and summing yields
\begin{equation}\label{set-eq-clean-retained-good-value}
 f(G)\ge M-\sum_j\delta_j-K(r_B+2D_O).
\end{equation}

We also need to control the diagonal interactions $c_{jj}$.  For
disjoint sets $A,D$ and a disjoint context $Z$, write
\[
 I(A,D\mid Z)=f(Z\cup A)+f(Z\cup D)-f(Z)-f(Z\cup A\cup D).
\]
Every such interaction between two subsets of $O$, with a fixed
$B$ context, is at most $4\eta$; the analogous assertion holds on
$B$ with a fixed $O$ context.  Indeed, the interaction embeds into
a complementary-pair gap of the full conditional function, whose
midpoint gap is at most $2\eta$.

For $A\subseteq O$, $D\subseteq B$, $Z_O\subseteq O\setminus A$,
and $Z_B\subseteq B\setminus D$, the exact interaction identities
give
\begin{align*}
 I(A,D\mid Z_O\cup Z_B)-I(A,D\mid Z_B)
 &=I(A,Z_O\mid Z_B\cup D)-I(A,Z_O\mid Z_B),\\
 I(A,D\mid Z_B)-I(A,D\mid\varnothing)
 &=I(D,Z_B\mid A)-I(D,Z_B\mid\varnothing).
\end{align*}
Each right-hand side is a difference of two numbers in
$[0,4\eta]$.  Hence conditioning changes the cross interaction by
at most $8\eta$ in absolute value.  In particular,
\[
 c_{jj}\le I(G_j,H_j\mid\varnothing)+8\eta
 \le\delta_j+r_B+8\eta,
\]
where the second inequality uses $f(G_j)\le f(P_j)$.
Thus
\begin{equation}\label{set-eq-clean-diagonal-budget}
 \sum_jc_{jj}\le\sum_j\delta_j+K(r_B+8\eta).
\end{equation}

\subsubsection{Balanced sampling captures every off-diagonal interaction}

Rewrite the coarse polynomial as
\[
 Q(X,Y)=f_0+\sum_{i\in X}u_i+\sum_{j\in Y}b_j
                         +\sum_{i\in X,\,j\notin Y}c_{ij}.
\]
Under a uniformly random $s$-subset $J$, each coordinate is selected
with probability $p_K$, and, for $i\ne j$,
$\Pr(i\in J,j\notin J)=q_K$.
Let $C_{\rm diag}=\sum_jc_{jj}$.  The exact identities above give
\begin{align*}
 \mathbb E_J Q(J,J)-q_K(f(G)-C_{\rm diag})
 &=(1-q_K-p_K)f_0+p_K f(H)
                     +(p_K-q_K)\sum_i u_i\\
 &\ge-(p_K-q_K)r_k.
\end{align*}
Here $p_K\ge q_K$ and $p_K+q_K\le1$; nonnegativity of $f_0,f(H)$
and \eqref{set-eq-clean-coarse-negative-budget} justify the inequality.

Apply the coarse lower approximation and then
\eqref{set-eq-clean-retained-good-value} and
\eqref{set-eq-clean-diagonal-budget}.  This yields
\begin{align*}
 \mathbb E_J f(S_J)
 &\ge q_K M-2q_K\sum_j\delta_j\\
 &\quad-q_KK(2r_B+2D_O+8\eta)
       -(p_K-q_K)r_k-8K\eta.
\end{align*}
Using $r_B,r_k\le2\eta$ and $D_O\le\eta$ proves \eqref{set-eq-clean-balanced-cohort-main} with
\[
 C_K=K(8+14q_K)+2(p_K-q_K)\le15K+1.
\]

\subsection{A fully dynamic hierarchy of cohorts}
\label{set-sec-clean-dynamic-schedule}

Fix an integer $K\ge2$, set
\[
 r=\lceil N^{1/K}\rceil,\qquad
 p_j=r^{K-1-j}\quad(0\le j\le K-2),
\]
and choose an accuracy parameter $\xi>0$ to be fixed below.
Maintain $K$ disjoint cohort slots $C_0,\ldots,C_{K-1}$.
The first $K-1$ slots use the decremental module. After each update,
rebuild a fresh module on the bottom slot and retain its current
answer.
The schedule depends on update counts and identities alone.

\paragraph{Initialization and updates.}
Put all initially active lifetimes in $C_0$ and initialize its
module; all lower slots start empty. Before each update, discard
the bottom module but retain its cohort membership.
New insertions enter $C_{K-1}$. A deletion removes the lifetime
from its own slot and invokes its decremental module if the slot
is an upper one. Reinsertion creates a new lifetime in the bottom slot;
it does not reinsert into an old decremental module.

Process an update before the reset due at its integer update index.
At every multiple of $p_j$, level $j$ is due for reset.  If several
levels are due, execute only the highest one, meaning the smallest
index $j$.  Replace $C_j$ by the current active union of slots
$j,j+1,\ldots,K-1$, initialize one new decremental module on that
union, and empty all lower slots.  Discard every old module whose
slot was included in the reset.  Skip empty initializations.
Finally initialize a fresh module on the current bottom cohort,
unless that cohort is empty.

The periods divide one another. Consequently, every upper module sees
only deletions between its initialization and destruction; the bottom
module is rebuilt from its updated cohort. The current cohorts always
partition the active universe. Level zero
has size at most $N\le r^K$.  For $1\le j\le K-2$, every identity
at a level-$j$ initialization was inserted after the last reset of
level $j-1$, so its initialization size is at most
$p_{j-1}=r p_j$.  The bottom slot contains at most $r$ lifetimes.

\paragraph{Maintaining an actual output set.}
At initialization draw one uniform subset $J$ of
$\{0,\ldots,K-1\}$ of size $\lfloor K/2\rfloor$, independently
of all module randomness.  Reuse this $J$ throughout the stream.
Maintain the explicit union
\[
 T_t=\bigcup_{j\in J}S_{C_j,t}.
\]
The schedule and modules never use $J$ or the returned answer to
guide their work, so $J$ remains independent of all current cohort
masks at each fixed time.  Their disjointness allows the union to
be updated directly when a mask changes or a slot is reset.

Independently give each initial lifetime, and then each newly
inserted lifetime, a fair bit and maintain the corresponding current
half-sample $R_t\subseteq U_t$.  At initialization and after every
update, query $f(T_t)$ and $f(R_t)$ and return the stored set of
larger value.  The answer is a feasible set, rather than merely an
estimate of its value.  A handle identifies its stored membership
list; listing that set requires no further oracle queries.  The
fixed choice of $J$ also avoids changing a large union solely
because fresh output-selection randomness was drawn.

\begin{theorem}[Scheduling cost and current profiles]
\label{set-thm-clean-schedule}
For every finite sequence of $T$ arbitrary insertions and deletions,
starting with $n_0$ active elements, the preceding implementation
uses
\begin{equation}\label{set-eq-clean-total-query-cost}
 O\bigl(1+(n_0+TKr)q(N,\xi)\bigr)
\end{equation}
value queries.  The accounting is deterministic, for every outcome
of its private randomness.  At every fixed time of an oblivious
update sequence, its current masks satisfy
\begin{equation}\label{set-eq-clean-total-deficit}
 \mathbb E\sum_{j=0}^{K-1}
 [\operatorname{OPT}(C_j)-f(S_{C_j})]
 \le KD_0\eta_{U_t}+KC_{\mathrm e}\xi\operatorname{OPT}(U_t).
\end{equation}
\end{theorem}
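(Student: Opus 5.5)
The plan has two parts. The query bound is charged level by level using prepayment, as in the proof of Theorem~\ref{set-thm-sqrt-dynamic}. The deficit bound follows by applying Corollary~\ref{set-cor-cohort} to each slot at the fixed time and summing.

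\emph{Query bound.} The initial module on $C_0$ has size $n_0$. By Lemma~\ref{set-lem-mask-maintenance} it costs at most $n_0q(N,\xi)$ over its whole lifetime, including every repair. Next consider a level-$j$ reset for $0\le j\le K-2$ at a multiple of $p_j$. The new module has at most $rp_j$ identities for $j\ge1$, and at most $N\le r^K=rp_0$ for $j=0$. Its lifetime cost is therefore at most $rp_j\,q(N,\xi)$, and I would prepay this from the $p_j$ actual updates preceding the reset. A reset at level $j$ is executed only at a multiple of $p_j$, and skipped lower levels cost nothing. Hence each update is charged at most $r\,q(N,\xi)$ per level, or $(K-1)r\,q(N,\xi)$ in total. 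The bottom module is rebuilt after every update on a cohort of at most $r$ lifetimes. Its initialization costs at most $r\,q(N,\xi)$, and it sees no deletions before it is discarded. Maintaining $R_t$, updating $T_t$ and querying $f(T_t),f(R_t)$ cost $O(1)$ per update. Summing these charges gives \eqref{set-eq-clean-total-query-cost}. As in the square-root proof, the prepayment also covers modules still alive at the end of the prefix, so the accounting holds for every prefix and every random outcome.

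\emph{Deficit bound.} Fix a time $t$ and a slot $j$. The slot's current cohort was initialized at a time determined by update counts alone. Its subsequent deletion sequence and the current universe $U_t\supseteq C_j$ are functions of the oblivious update sequence. The schedule never looks at module randomness, priorities, masks, $J$ or $R_t$, so none of these objects depend on the module's randomness. For an upper slot, Corollary~\ref{set-cor-cohort} applies with $U=U_t$ and gives
\[
\E[\OPT(C_j)-f(S_{C_j})]\le D_0\eta_{U_t}+C_{\mathrm e}\xi\OPT(U_t).
\]
For the bottom slot the module is freshly initialized on a fixed cohort with zero deletions, which is a special case of the corollary, so the same bound holds. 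Empty cohorts have $S_{C_j}=\varnothing$ and zero deficit. Summing the $K$ bounds by linearity of expectation gives \eqref{set-eq-clean-total-deficit}.

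The main obstacle is justifying the independence hypothesis of Corollary~\ref{set-cor-cohort} across resets. One must argue that the set of identities entering a reset, the current $U_t$, and the epoch boundaries inside the module depend only on the update sequence and on cached singleton values. They must not depend on earlier modules' outputs. I would verify this by noting that discarded modules influence nothing downstream except through cohort membership, and membership is determined by identities and update indices alone.
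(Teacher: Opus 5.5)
Your proposal is correct and follows essentially the same argument as the paper. You prepay each upper level-$j$ module's lifetime bound $rp_j\,q(N,\xi)$ from the $p_j$ updates preceding its reset, add the $O(r\,q(N,\xi))$ fresh bottom module and the two output queries per update, and get the deficit bound by applying Corollary~\ref{set-cor-cohort} to every slot with $U=U_t$ (including the bottom slot and empty cohorts), using that the schedule depends only on identities and update counts, and then summing.
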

\begin{proof}
Prepay the full allowed cost of every upper module at its
initialization, using Corollary~\ref{set-cor-cohort}. For
$1\le j\le K-2$, a nonempty level-$j$ initialization costs at most
$p_{j-1}q=r p_jq$ and is charged to the preceding $p_j$ actual
updates.  The first such initialization after a higher reset also
has $p_j$ actual preceding updates; the empty slot at the higher
boundary costs nothing.  Level-zero initializations after time
zero cost at most $Nq\le r p_0q$ and are charged to the preceding
$p_0$ updates.  Only the initial level-zero module is preprocessing.
No fictitious future updates are charged when the horizon ends
partway through a period.  Early destruction of a module is
harmless because its cost bound includes every deletion prefix.

Summing the rate $rq$ over the maintained levels, adding the
$O(rq)$ fresh bottom module and two output-value queries per update,
proves
\eqref{set-eq-clean-total-query-cost}.  Each maintained mask obeys
\eqref{set-eq-cohort-deficit}, including the fresh bottom module.
Its priorities are drawn independently of the fixed update sequence
and of the balanced subset $J$. The current cohort partition is
determined by identities and update counts, so the same profile
applies to every slot.
Summing proves \eqref{set-eq-clean-total-deficit}.  The accuracy parameter will be chosen after the approximation analysis.
\end{proof}

All randomness required in these statements is internal to the
algorithm.  No deletion times, future lifetimes, or random ordering
of the input stream are assumed.  The only oracle calls are those
specified in the modules and the two
explicit current output candidates.  The preceding balanced-composition theorem applies to \eqref{set-eq-clean-total-deficit} and the independent
fixed $J$ to give the approximation factor below.

\subsubsection{Approximation and parameter choices}
\label{set-sec-tradeoff-parameters}

At a fixed time, condition on the cohort masks and apply the balanced
composition theorem. Corollary~\ref{set-cor-cohort} and the scheduling
bound give
\[
 \E f(S_J)\ge A_KM-B_K\eta,\qquad
 A_K=q_K(1-2C_{\mathrm e}K\xi),\qquad
 B_K=C_K+2q_KKD_0.
\]
The independent half-sample satisfies
$\E f(R)\ge M/4+\eta/4$. Returning the better of the two stored
candidates therefore gives
\begin{equation}\label{set-eq-parametric-ratio}
 \E f(S_{\mathrm{out}})
 \ge\left(\frac14+
       \frac{A_K-1/4}{1+4B_K}\right)M
 \qquad\text{provided }A_K>1/4.
\end{equation}
Indeed, as a function of $\eta/M\ge0$, the minimum of the maximum
of these two affine bounds occurs at their intersection. This
argument does not require independence between the values of the
two candidates.

We now fix the constants and the accuracy parameter. The structural
and paired-chain estimates require
$C_{\mathrm c}\ge20$ and $C_{\mathrm p}\ge24+16\ln2$;
the cohort guarantee uses $D_0=C_{\mathrm c}C_{\mathrm p}$
and $C_{\mathrm e}\ge10$. Choose
\[
 C_{\mathrm c}=20,\qquad
 C_{\mathrm p}=24+16\ln2<36,\qquad
 D_0=C_{\mathrm c}C_{\mathrm p},\qquad
 C_{\mathrm e}=10.
\]
For the accuracy parameter, take
\[
 \xi=\frac1{8C_{\mathrm e}K^2}=\frac1{80K^2}.
\]
Since $q_K-1/4\ge1/(4K)$ and $q_K\le1/2$,
\[
 A_K-\frac14
 =q_K-\frac14-\frac{q_K}{4K}
 \ge\frac1{8K}.
\]
Using $C_K\le15K+1$, $C_{\mathrm c}=20$, and $C_{\mathrm p}<36$,
we also have
\[
 1+4B_K<3000K\qquad(K\ge2).
\]
Substituting into \eqref{set-eq-parametric-ratio} gives the following
explicit form of the tradeoff.

\begin{theorem}[Maintained-set tradeoff]
\label{set-thm-tradeoff}
For every integer $K\ge2$, a fully dynamic algorithm maintains a stored
set $S_t\subseteq U_t$ with
\[
 \E f(S_t)\ge
 \left(\frac14+\frac1{24000K^2}\right)\OPT(U_t)
\]
at each fixed time of an arbitrary oblivious update sequence.
For every update prefix of length $T$, its total oracle-query cost is
\[
 O\!\left(1+n_0K^2\log^4((N+1)K)
       +TK^3N^{1/K}\log^4((N+1)K)\right)
\]
for every realization of the random choices. All query sets are
currently active, and the maintained answer can be enumerated without
additional oracle queries. Consequently, every fixed $\gamma>0$
permits a fixed advantage over $1/4$ with
$\widetilde O_\gamma(N^\gamma)$ amortized queries. For
$0<\gamma\le1$, the advantage is at least $\gamma^2/96000$.
\end{theorem}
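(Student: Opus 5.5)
The plan is to combine the scheduling theorem (Theorem~\ref{set-thm-clean-schedule}) with the balanced composition theorem (Theorem~\ref{set-thm-balanced}) and the parametric ratio \eqref{set-eq-parametric-ratio}, then substitute the chosen constants.

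\emph{Approximation.} Fix a time $t$ and write $M=\OPT(U_t)$ and $\eta=\eta_{U_t}$. The masks $S_{C_j,t}$ are determined by module randomness that is independent of $J$. Applying Theorem~\ref{set-thm-balanced} conditionally on the masks and then averaging gives
\[
 \E f(T_t)\ge q_KM-2q_K\,\E\sum_j\delta_j-C_K\eta .
\]
Inserting \eqref{set-eq-clean-total-deficit} yields $\E f(T_t)\ge A_KM-B_K\eta$, with $A_K,B_K$ as in Subsection~\ref{set-sec-tradeoff-parameters}. Lemma~\ref{lem:prelim-sampling} gives $\E f(R_t)\ge (M+\eta)/4$. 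The output is the better of the two, so its expectation is at least the maximum of the two affine bounds in $\eta$. Minimizing that maximum over $\eta\ge0$ gives \eqref{set-eq-parametric-ratio}; the case $M=0$ is trivial.

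\emph{Constants.} With $\xi=1/(80K^2)$ we have $2C_{\mathrm e}K\xi=1/(4K)$, and $q_K-1/4\ge 1/(4K)$, $q_K\le1/2$ give $A_K-1/4\ge 1/(8K)$. Using $C_K\le15K+1$ and $D_0<720$, we get $B_K\le 15K+1+720K$, so $1+4B_K<3000K$ for $K\ge2$. The advantage is therefore at least $1/(8K\cdot3000K)=1/(24000K^2)$.

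\emph{Query cost.} Put $\xi=1/(80K^2)$ into $q(N,\xi)=O(\xi^{-1}\log^4((N+1)/\xi))$. This gives $O(K^2\log^4((N+1)K))$, since $\log(80K^2(N+1))=O(\log((N+1)K))$. Theorem~\ref{set-thm-clean-schedule} then gives the total $O(1+(n_0+TKr)q)$ with $r=\lceil N^{1/K}\rceil\le 2N^{1/K}$, which is exactly the displayed bound. The bound is deterministic and all queries are on active sets, because both properties are inherited from the modules and the two output evaluations.

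\emph{Exponent corollary.} For $0<\gamma\le1$, take $K=\max\{2,\lceil1/\gamma\rceil\}$. Then $N^{1/K}\le N^\gamma$ and $K\le 2/\gamma$, so the advantage is at least $\gamma^2/(4\cdot24000)=\gamma^2/96000$.

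The main obstacle is justifying the independence used in the first step. Theorem~\ref{set-thm-balanced} needs $J$ independent of the masks. Corollary~\ref{set-cor-cohort} needs each cohort and its larger universe $U_t$ to be fixed independently of that module's priorities. I would argue this from the obliviousness of the update sequence and the fact that the schedule depends only on update counts and identities. Once that is in place, the rest is the arithmetic above.
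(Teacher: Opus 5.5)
Your proposal is correct and follows essentially the paper's own route. You condition on the masks, apply the balanced composition theorem together with \eqref{set-eq-clean-total-deficit} to obtain $A_KM-B_K\eta$, and combine this with the half-sample via \eqref{set-eq-parametric-ratio}. You then substitute $\xi=1/(80K^2)$ and $r\le2N^{1/K}$ into the scheduling cost, using the same estimates $A_K-1/4\ge1/(8K)$, $1+4B_K<3000K$, and $K\le2/\gamma$. The independence you flag as the main obstacle is already handled in the paper, in Theorem~\ref{set-thm-clean-schedule} and in the fixed-$J$ output construction: the schedule depends only on identities and update counts, and neither the modules nor the schedule ever use $J$, so no additional argument is needed.
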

\begin{proof}
The approximation was proved above. For the query bound, substitute
$\xi=1/(80K^2)$ and $r=\lceil N^{1/K}\rceil\le2N^{1/K}$ into
the scheduling cost. The initial cohort is charged to preprocessing.
Each later initialization is charged to actual preceding updates,
including the full cost of a module that survives beyond the reported
prefix. Thus the bound does not require a completed final epoch.

For the last assertion, choose
$K=\max\{2,\lceil1/\gamma\rceil\}$. Then $1/K\le\gamma$.
When $0<\gamma\le1$, also $K\le2/\gamma$, giving the stated
advantage. The displayed advantage has order $K^{-2}$ when the
depth is allowed to vary.
\end{proof}

\noindent{\bf AI Use Statement.} This work resulted from extensive interaction between the author and OpenAI Astra, with the author providing suggestions and critically checking Astra’s responses through iterative exchanges. OpenAI Astra was also used  in writing and revising the manuscript through detailed prompts from the author, with Gemini Argon assisting in some parts. The author directed the process and evaluated the proposed ideas and arguments. Lean files formalizing the main theorem are available on GitHub \url{https://github.com/hajiaghayi/Dynamic-Unconstrained-Submodular-Maximization}.

\begingroup
\small
\bibliographystyle{alpha}
\bibliography{fully_dynamic_unconstrained_submodular}

@inproceedings{BBG+26,
  author    = {Banihashem, Kiarash and  Goudarzi, Samira and Hajiaghayi, MohammadTaghi and Jabbarzade, Peyman and Monemizadeh, Morteza},
  title     = {When Streaming Fails: Dynamic Algorithms for Unconstrained Submodular Maximization},
  booktitle = {Advances in Neural Information Processing Systems (NeurIPS)},
  year      = {2026},
  note      = {To appear}
}

@inproceedings{FMV07,
  author = {Feige, Uriel and Mirrokni, Vahab S. and Vondr{\'a}k, Jan},
  title = {Maximizing Non-monotone Submodular Functions},
  booktitle = {Proceedings of the 48th Annual IEEE Symposium on Foundations of Computer Science (FOCS)},
  pages = {461--471},
  publisher = {IEEE Computer Society},
  year = {2007},
  doi = {10.1109/FOCS.2007.29}
}

@article{FMV11,
  author = {Feige, Uriel and Mirrokni, Vahab S. and Vondr{\'a}k, Jan},
  title = {Maximizing Non-monotone Submodular Functions},
  journal = {SIAM Journal on Computing},
  volume = {40},
  number = {4},
  pages = {1133--1153},
  year = {2011},
  doi = {10.1137/090779346}
}

@inproceedings{BFNS12,
  author = {Buchbinder, Niv and Feldman, Moran and Naor, Joseph and Schwartz, Roy},
  title = {A Tight Linear Time {$(1/2)$}-Approximation for Unconstrained Submodular Maximization},
  booktitle = {Proceedings of the 53rd Annual IEEE Symposium on Foundations of Computer Science (FOCS)},
  pages = {649--658},
  publisher = {IEEE Computer Society},
  year = {2012},
  doi = {10.1109/FOCS.2012.73}
}

@article{BFNS15,
  author = {Buchbinder, Niv and Feldman, Moran and Naor, Joseph and Schwartz, Roy},
  title = {A Tight Linear Time {$(1/2)$}-Approximation for Unconstrained Submodular Maximization},
  journal = {SIAM Journal on Computing},
  volume = {44},
  number = {5},
  pages = {1384--1402},
  year = {2015},
  doi = {10.1137/130929205}
}

@article{BF18,
  author = {Buchbinder, Niv and Feldman, Moran},
  title = {Deterministic Algorithms for Submodular Maximization Problems},
  journal = {ACM Transactions on Algorithms},
  volume = {14},
  number = {3},
  pages = {32:1--32:20},
  year = {2018},
  doi = {10.1145/3184990}
}

@inproceedings{OGV11,
  author = {{Oveis Gharan}, Shayan and Vondr{\'a}k, Jan},
  title = {Submodular Maximization by Simulated Annealing},
  booktitle = {Proceedings of the 22nd Annual ACM--SIAM Symposium on Discrete Algorithms (SODA)},
  pages = {1098--1116},
  publisher = {SIAM},
  year = {2011},
  doi = {10.1137/1.9781611973082.83}
}

@inproceedings{FNS11,
  author = {Feldman, Moran and Naor, Joseph and Schwartz, Roy},
  title = {Nonmonotone Submodular Maximization via a Structural Continuous Greedy Algorithm},
  booktitle = {Proceedings of the 38th International Colloquium on Automata, Languages, and Programming (ICALP), Part I},
  series = {Lecture Notes in Computer Science},
  volume = {6755},
  pages = {342--353},
  publisher = {Springer},
  year = {2011},
  doi = {10.1007/978-3-642-22006-7_29}
}

@inproceedings{Mon20,
  author = {Monemizadeh, Morteza},
  title = {Dynamic Submodular Maximization},
  booktitle = {Advances in Neural Information Processing Systems},
  volume = {33},
  year = {2020},
  url = {https://proceedings.neurips.cc/paper/2020/hash/6fbd841e2e4b2938351a4f9b68f12e6b-Abstract.html}
}

@inproceedings{LMNTZ20,
  author = {Lattanzi, Silvio and Mitrovi{\'c}, Slobodan and Norouzi-Fard, Ashkan and Tarnawski, Jakub and Zadimoghaddam, Morteza},
  title = {Fully Dynamic Algorithm for Constrained Submodular Optimization},
  booktitle = {Advances in Neural Information Processing Systems},
  volume = {33},
  year = {2020},
  url = {https://proceedings.neurips.cc/paper_files/paper/2020/hash/9715d04413f296eaf3c30c47cec3daa6-Abstract.html}
}

@inproceedings{BBGHJM23a,
  author = {Banihashem, Kiarash and Biabani, Leyla and Goudarzi, Samira and Hajiaghayi, MohammadTaghi and Jabbarzade, Peyman and Monemizadeh, Morteza},
  title = {Dynamic Constrained Submodular Optimization with Polylogarithmic Update Time},
  booktitle = {Proceedings of the 40th International Conference on Machine Learning (ICML)},
  series = {Proceedings of Machine Learning Research},
  volume = {202},
  pages = {1660--1691},
  publisher = {PMLR},
  year = {2023},
  url = {https://proceedings.mlr.press/v202/banihashem23a.html}
}

@inproceedings{BBGHJM23b,
  author = {Banihashem, Kiarash and Biabani, Leyla and Goudarzi, Samira and Hajiaghayi, MohammadTaghi and Jabbarzade, Peyman and Monemizadeh, Morteza},
  title = {Dynamic Non-monotone Submodular Maximization},
  booktitle = {Advances in Neural Information Processing Systems},
  volume = {36},
  year = {2023},
  doi = {10.52202/075280-0761},
  url = {https://proceedings.neurips.cc/paper_files/paper/2023/hash/387982dbf23d9975c7fc45813dd3dabc-Abstract-Conference.html}
}

@inproceedings{DFLNZ23,
  author = {D{\"u}tting, Paul and Fusco, Federico and Lattanzi, Silvio and Norouzi-Fard, Ashkan and Zadimoghaddam, Morteza},
  title = {Fully Dynamic Submodular Maximization over Matroids},
  booktitle = {Proceedings of the 40th International Conference on Machine Learning (ICML)},
  series = {Proceedings of Machine Learning Research},
  volume = {202},
  pages = {8821--8835},
  publisher = {PMLR},
  year = {2023},
  url = {https://proceedings.mlr.press/v202/duetting23a.html}
}

@inproceedings{CP22,
  author = {Chen, Xi and Peng, Binghui},
  title = {On the Complexity of Dynamic Submodular Maximization},
  booktitle = {Proceedings of the 54th Annual ACM SIGACT Symposium on Theory of Computing (STOC)},
  pages = {1685--1698},
  publisher = {ACM},
  year = {2022},
  doi = {10.1145/3519935.3519951},
  url = {https://arxiv.org/abs/2111.03198}
}

@inproceedings{BBGHJM24,
  author = {Banihashem, Kiarash and Biabani, Leyla and Goudarzi, Samira and Hajiaghayi, MohammadTaghi and Jabbarzade, Peyman and Monemizadeh, Morteza},
  title = {Dynamic Algorithms for Matroid Submodular Maximization},
  booktitle = {Proceedings of the 35th Annual ACM--SIAM Symposium on Discrete Algorithms (SODA)},
  pages = {3485--3533},
  publisher = {SIAM},
  year = {2024},
  doi = {10.1137/1.9781611977912.125},
  url = {https://arxiv.org/abs/2306.00959}
}

@inproceedings{BBGHJM25,
  author = {Banihashem, Kiarash and Biabani, Leyla and Goudarzi, Samira and Hajiaghayi, MohammadTaghi and Jabbarzade, Peyman and Monemizadeh, Morteza},
  title = {Dynamic Algorithms for Submodular Matching},
  booktitle = {Proceedings of the 52nd International Colloquium on Automata, Languages, and Programming (ICALP)},
  series = {Leibniz International Proceedings in Informatics},
  volume = {334},
  pages = {19:1--19:21},
  publisher = {Schloss Dagstuhl -- Leibniz-Zentrum f{\"u}r Informatik},
  year = {2025},
  doi = {10.4230/LIPIcs.ICALP.2025.19},
  url = {https://drops.dagstuhl.de/entities/document/10.4230/LIPIcs.ICALP.2025.19}
}

@inproceedings{BGHJM25,
  author = {Banihashem, Kiarash and Goudarzi, Samira and Hajiaghayi, MohammadTaghi and Jabbarzade, Peyman and Monemizadeh, Morteza},
  title = {Non-monotone Submodular Optimization: {$p$}-Matchoid Constraints and Fully Dynamic Setting},
  booktitle = {Advances in Neural Information Processing Systems},
  volume = {38},
  year = {2025},
  doi = {10.52202/085713-3471},
  url = {https://proceedings.neurips.cc/paper_files/paper/2025/hash/963acdcdf87fec84916b10949cbee255-Abstract-Conference.html}
}

@article{GZZ26,
  author = {Gao, Yiwei and Zhang, Jialin and Zhang, Zhijie},
  title = {Improved Fully Dynamic Submodular Maximization Under Matroid Constraints},
  journal = {Proceedings of the AAAI Conference on Artificial Intelligence},
  volume = {40},
  number = {43},
  pages = {36928--36936},
  year = {2026},
  doi = {10.1609/aaai.v40i43.41020},
  url = {https://ojs.aaai.org/index.php/AAAI/article/view/41020}
}

@inproceedings{BGHJM26,
  author = {Banihashem, Kiarash and Goudarzi, Samira and Hajiaghayi, MohammadTaghi and Jabbarzade, Peyman and Monemizadeh, Morteza},
  title = {Improved Dynamic Algorithm for Non-monotone Submodular Maximization under Cardinality Constraint},
  booktitle = {Proceedings of the 43rd International Conference on Machine Learning (ICML)},
  series = {Proceedings of Machine Learning Research},
  volume = {306},
  pages = {6298--6308},
  publisher = {PMLR},
  year = {2026},
  url = {https://proceedings.mlr.press/v306/banihashem26c.html}
}

@inproceedings{BMKK14,
  author = {Badanidiyuru, Ashwinkumar and Mirzasoleiman, Baharan and Karbasi, Amin and Krause, Andreas},
  title = {Streaming Submodular Maximization: Massive Data Summarization on the Fly},
  booktitle = {Proceedings of the 20th ACM SIGKDD International Conference on Knowledge Discovery and Data Mining (KDD)},
  pages = {671--680},
  publisher = {ACM},
  year = {2014},
  doi = {10.1145/2623330.2623637}
}

@inproceedings{LFKK22,
  author = {Li, Wenxin and Feldman, Moran and Kazemi, Ehsan and Karbasi, Amin},
  title = {Submodular Maximization in Clean Linear Time},
  booktitle = {Advances in Neural Information Processing Systems},
  volume = {35},
  pages = {17473--17487},
  year = {2022},
  doi = {10.52202/068431-1270},
  url = {https://proceedings.neurips.cc/paper_files/paper/2022/hash/6faf3b8ed0df532c14d0fc009e451b6d-Abstract-Conference.html}
}

@inproceedings{PR23,
  author = {Peng, Binghui and Rubinstein, Aviad},
  title = {Fully-Dynamic-to-Incremental Reductions with Known Deletion Order (e.g. Sliding Window)},
  booktitle = {Proceedings of the 2023 Symposium on Simplicity in Algorithms (SOSA)},
  pages = {261--271},
  publisher = {SIAM},
  year = {2023},
  doi = {10.1137/1.9781611977585.ch24}
}

@inproceedings{LS24,
  author = {Liu, Quanquan C. and Srinivas, Vaidehi},
  title = {The Predicted-Updates Dynamic Model: Offline, Incremental, and Decremental to Fully Dynamic Transformations},
  booktitle = {Proceedings of the 37th Conference on Learning Theory (COLT)},
  series = {Proceedings of Machine Learning Research},
  volume = {247},
  pages = {3582--3641},
  publisher = {PMLR},
  year = {2024},
  url = {https://proceedings.mlr.press/v247/liu24c.html}
}

@inproceedings{AB24,
  author = {Agarwal, Arpit and Balkanski, Eric},
  title = {Learning-Augmented Dynamic Submodular Maximization},
  booktitle = {Advances in Neural Information Processing Systems},
  volume = {37},
  pages = {14148--14176},
  year = {2024},
  doi = {10.52202/079017-0453},
  url = {https://proceedings.neurips.cc/paper_files/paper/2024/hash/19cdab1dee61d55158cf106244ceab08-Abstract-Conference.html}
}

@incollection{McDiarmid89,
  author = {McDiarmid, Colin},
  title = {On the Method of Bounded Differences},
  booktitle = {Surveys in Combinatorics, 1989},
  editor = {Siemons, J.},
  series = {London Mathematical Society Lecture Note Series},
  volume = {141},
  pages = {148--188},
  publisher = {Cambridge University Press},
  year = {1989},
  doi = {10.1017/CBO9781107359949.008}
}

@inproceedings{IJB13,
  author = {Iyer, Rishabh and Jegelka, Stefanie and Bilmes, Jeff},
  title = {Fast Semidifferential-based Submodular Function Optimization},
  booktitle = {Proceedings of the 30th International Conference on Machine Learning (ICML)},
  series = {Proceedings of Machine Learning Research},
  volume = {28},
  note = {Part 3},
  pages = {855--863},
  publisher = {PMLR},
  year = {2013},
  url = {https://proceedings.mlr.press/v28/iyer13.html}
}

@incollection{Edmonds70,
  author = {Edmonds, Jack},
  title = {Submodular Functions, Matroids, and Certain Polyhedra},
  booktitle = {Combinatorial Structures and Their Applications},
  pages = {69--87},
  publisher = {Gordon and Breach},
  year = {1970}
}

@article{CCPV11,
  author = {C{\u a}linescu, Gruia and Chekuri, Chandra and P{\'a}l, Martin and Vondr{\'a}k, Jan},
  title = {Maximizing a Monotone Submodular Function Subject to a Matroid Constraint},
  journal = {SIAM Journal on Computing},
  volume = {40},
  number = {6},
  pages = {1740--1766},
  year = {2011},
  doi = {10.1137/080733991}
}
\endgroup
\end{document}